\newcites{add}{Additional References}
\newcommand{\theoremlike}[2]{\par\medskip\penalty-250\refstepcounter{theorem}{{\bfseries\noindent#2
\ref{#1}.}}}
\newcommand{\thmhelperpre}[2]{\theoremlike{#1}{#2}}
\newcommand{\thmhelperpost}{\par\medskip}
\newenvironment{reflemma}[1]{\thmhelperpre{#1}{Lemma}}{\thmhelperpost}
\renewcommand*\l@author[2]{}
\renewcommand*\l@title[2]{}
\long\def\symbolfootnote[#1]#2{\begingroup%
\def\thefootnote{\fnsymbol{footnote}}\footnote[#1]{#2}\endgroup}
\newcommand{\straa}{\sigma}
\newcommand{\xu}{R}
\newcommand{\xg}{B}
\newcommand{\app}[1]{\mathit{Appear}(\omega,{#1})}
\newcommand{\makeTerminal}{\textsc{MakeTerminal}}
\newcommand{\collapse}{\textsc{Collapse}}
\newcommand{\maxE}{\mathit{E_m}}
\newcommand{\statebound}{K}
\newcommand{\otf}{\textsc{On-the-fly-EC}}
\newcommand{\pop}{pop}
\newcommand{\makestep}{\mathit{makestep}}
\newcommand{\strats}{\Sigma}
\newcommand{\support}[1]{\mathit{supp}(#1)}
\newcommand{\enab}{E}
\newcommand{\prmdp}[2]{\M{\otimes}\A}
\DeclareMathOperator*{\argmax}{arg\,max}
\newcommand{\sectref}[1]{Sec.~\ref{#1}}
\renewcommand{\algref}[1]{Algorithm~\ref{#1}}
\newcommand{\algalgref}[2]{Algorithms~\ref{#1} and \ref{#2}}
\newtheorem{assumption}{Assumption}
\newcounter{exampcount}
\newcommand{\startpara}[1]{{%
\vskip6pt\noindent
{\bf #1.}}}
\def\Nset{\mathbb{N}}
\def\Qset{\mathbb{Q}}
\def\Rset{\mathbb{R}}
\newcommand{\true}{\mathtt{true}} 
\newcommand{\false}{\mathtt{false}} 
\def\squareforqed{\hbox{\rlap{$\sqcap$}$\sqcup$}}
\def\qed{\ifmmode\squareforqed\else{\unskip\nobreak\hfil
\penalty50\hskip1em\null\nobreak\hfil\squareforqed
\parfillskip=0pt\finalhyphendemerits=0\endgraf}\fi}
\newcommand\mdp{{\sf M}}
\newcommand{\loos}{0}
\newcommand{\win}{1}
\newcommand{\learn}{\mathrm{Learn}}
\newcommand{\m}{m}
\newcommand{\eps}{\epsilon}
\newcommand{\states}{S}
\newcommand\sinit{{\bar{s}}}
\newcommand\act{A}
\newcommand\dist{{\mathit{Dist}}}
\newcommand\Dist{{\dist}}
\def\Prb{{\mathit{Pr}}}
\newcommand{\reach}[1]{\Diamond #1} 
\newcommand{\reachk}[1]{\Diamond_{\leq k} #1} 
\newcommand{\mdptuple}{{\langle \states,\sinit,\act, E, \Delta\rangle}}
\def\sinit{{\overline{s}}}
\newcommand{\M}{{\mathcal{M}}}
\newcommand{\last}{\mathit{last}}
\newcommand{\ipaths}{\mathit{IPath}}
\newcommand{\fpaths}{\mathit{FPath}}
\newcommand{\A}{\mathcal{A}}
\newcommand{\one}{1}
\newcommand{\zero}{0}
\newcommand{\accum}{\mathit{accum}_m}
\newcommand{\val}{V}
\newcommand{\maxU}{\mathit{maxU}}
\title{Verification of Markov Decision Processes \\ using Learning Algorithms}
\author{Tom\'a\v{s} Br\'{a}zdil\inst{1} \and Krishnendu Chatterjee\inst{2}\and Martin Chmel\'{i}k\inst{2}
 \and Vojt\v{e}ch~Forejt\inst{3} \and\\ Jan K\v{r}et\'insk\'y\inst{2}
 \and Marta~Kwiatkowska\inst{3}
 \and David~Parker\inst{4} \and Mateusz Ujma\inst{3}}
\authorrunning{Br\'azdil et al.}
\institute{Masaryk University, Brno, Czech Republic \and IST Austria
\and University of Oxford, UK \and University of Birmingham, UK}
\begin{document}
\maketitle

\begin{abstract}$\!\!\!$%
We present a general framework for applying machine-learning algorithms
to the verification of Markov decision processes (MDPs).
The primary goal of these techniques is to improve performance
by avoiding an exhaustive exploration of the state space.
Our framework focuses on probabilistic reachability, which is a core property for verification,
and is illustrated through two distinct instantiations.
The first assumes that full knowledge of the MDP is available,
and performs a heuristic-driven partial exploration of the model,
yielding precise lower and upper bounds on the required probability.
The second tackles the case where we may only sample the MDP,
and yields probabilistic guarantees, again in terms of
both the lower and upper bounds, which provides efficient stopping criteria for the approximation.
The latter is the first extension of statistical model-checking for unbounded properties in MDPs.
In contrast with other related approaches, we do not restrict our attention
to time-bounded (finite-horizon) or discounted properties,
nor assume any particular properties of the MDP.
We also show how our techniques extend to LTL objectives.
We present experimental results showing the performance of our framework on several examples.
\end{abstract}

\section{Introduction}

Markov decision processes (MDPs) are a widely used model for the formal verification
of systems that exhibit stochastic behaviour.
This may arise due to the possibility of failures (e.g.\ of physical system components),
unpredictable events (e.g.\ messages sent across a lossy medium),
or uncertainty about the environment (e.g.\ unreliable sensors in a robot).
It may also stem from the explicit use of randomisation,
such as probabilistic routing in gossip protocols
or random back-off in wireless communication protocols.

Verification of MDPs against 
temporal logics such as PCTL 
and LTL 
typically reduces to the computation of optimal (minimum or maximum) reachability probabilities,
either on the MDP itself or its product with some deterministic $\omega$-automaton.
Optimal reachability probabilities (and a corresponding optimal strategy for the MDP)
can be computed in polynomial time through a reduction to linear programming,
although in practice verification tools often use dynamic programming techniques,
such as value iteration which approximates the values up to
some pre-specified convergence criterion.

The efficiency or feasibility of verification
is often limited by excessive time or space requirements,
caused by the need to store a full model in memory.
Common approaches to tackling this include:
symbolic model checking, 
which uses efficient data structures
to construct and manipulate a compact representation of the model;
abstraction refinement, 
which constructs a sequence of increasingly precise approximations,
bypassing construction of the full model using decision procedures such as SAT or SMT;
and statistical model checking~\cite{YS02,HLMP04},
which uses Monte Carlo simulation to generate approximate results of verification
that 
hold with high probability.

In this paper, we explore the opportunities offered by learning-based methods,
as used in fields such as planning 
or reinforcement learning~\cite{SB98}.
In particular, we focus on algorithms that explore an MDP by generating trajectories through it and,
whilst doing so, produce increasingly precise approximations for some property of interest
(in this case, reachability probabilities). 
The approximate values, along with other information,
are used as heuristics to guide the model exploration
so as to minimise the solution time and the portion of the model that needs to be considered.


We present a general framework for applying such algorithms to the verification of MDPs.
Then, we consider two distinct instantiations that operate under different assumptions
concerning the availability of knowledge about the MDP, and produce different classes of results.
We distinguish between \emph{complete information}, where
full knowledge of the MDP is available (but not necessarily generated and stored),
and \emph{limited information}, where (in simple terms)
we can only sample trajectories of the MDP.

The first algorithm assumes complete information
and is based on \emph{real-time dynamic programming} (RTDP)~\cite{BBS95}.
In its basic form, this only generates approximations in the form of lower bounds
(on maximum reachability probabilities).
While this may suffice in some scenarios (e.g. planning),
in the context of verification we typically require more precise guarantees.
So we consider bounded RTDP (BRTDP)~\cite{MBLG05}, which supplements this with an additional upper bound.
%
%
The second algorithm assumes limited information
and is based on {\em delayed Q-learning} (DQL)~\cite{Strehl}.
Again, we produce both lower and upper bounds but,
in contrast to BRTDP, where these are guaranteed to be correct,
DQL offers probably approximately correct (PAC) results, i.e., there is a non-zero
probability that the bounds are incorrect.

Typically, MDP solution methods based on learning or heuristics make assumptions
about the structure of the model. For example, the presence of end components~\cite{DeA97}
(subsets of states where it is possible to remain indefinitely with probability 1)
can result in convergence to incorrect values.
Our techniques are applicable to arbitrary MDPs.
We first handle the case of MDPs that contain no end components
(except for trivial designated goal or sink states).
Then, we adapt this to the general case by means of \emph{on-the-fly}  detection of end components,
which is one of the main technical contributions of the paper.
We also show how our techniques extend to LTL objectives and thus also to minimum reachability probabilities.

Our DQL-based method, which yields PAC results, can be seen as an instance of
statistical model checking~\cite{YS02,HLMP04},
a technique that has received considerable attention.
Until recently, most work in this area focused on purely probabilistic models,
without nondeterminism, but several approaches have now been presented
for statistical model checking of nondeterministic models
\cite{DBLP:conf/formats/DavidLLMPVW11,DBLP:conf/cav/DavidLLMW11,DBLP:conf/ifm/Larsen13,DBLP:conf/forte/BogdollFHH11,LP12,DBLP:conf/qest/HenriquesMZPC12,DBLP:journals/corr/LegayS13}.
However, these methods all consider either time-bounded properties or use discounting to ensure convergence
(see below for a summary).
The techniques in this paper are the first
for statistical model checking of unbounded properties on MDPs.

We have implemented our framework within the PRISM tool~\cite{KNP11}.
This paper concludes with experimental results
for an implementation of our BRTDP-based approach
that demonstrate considerable speed-ups over the fastest methods in PRISM.

Detailed proofs omitted due to lack of space are available in
\ifthenelse{\isundefined{\techreport}}{\cite{arxiv}.}{the appendix.}


\vspace*{-0.8em}
\subsection{Related Work}

In fields such as planning and artificial intelligence, many learning-based
and heuristic-driven solution methods for MDPs have been developed.
In the \emph{complete information} setting, examples include RTDP~\cite{BBS95} and
BRTDP~\cite{MBLG05}, as discussed above, 
which generate lower and lower/upper bounds on values, respectively.
Most algorithms make certain assumptions in order to ensure convergence,
for example through the use of a discount factor
or by restricting to so-called Stochastic Shortest Path (SSP) problems,
whereas we target arbitrary MDPs without discounting.
%
More recently, an approach called FRET~\cite{KMWG11} was proposed 
for a generalisation of SSP, but this gives only a one-sided (lower) bound.
We are not aware of any attempts to apply or adapt such methods in the context of probabilistic verification.
A related paper is \cite{AL12}, which applies heuristic search methods to MDPs,
but for generating probabilistic counterexamples.


As mentioned above, in the \emph{limited information} setting,
our algorithm based on delayed Q-learning (DQL) 
yields PAC results, similar to those obtained from
\emph{statistical model checking}~\cite{YS02,HLMP04,DBLP:conf/cav/SenVA04}.
This is an active area of research with a variety of tools~\cite{DBLP:conf/tacas/JegourelLS12,DBLP:journals/corr/abs-1207-1272,DBLP:conf/qest/BoyerCLS13,DBLP:conf/mmb/BogdollHH12}.
In contrast with our work, most techniques focus on time-bounded properties,
e.g., using bounded LTL, rather than \emph{unbounded} properties.
Several approaches have been proposed to transform checking of unbounded properties
into testing of bounded properties, for example,
\cite{DBLP:conf/sbmf/YounesCZ10,DBLP:conf/kbse/HeJBGW10,SVA05,RP09}.
However, these focus on purely probabilistic models, without nondeterminism, and do not apply to MDPs.
In~\cite{DBLP:conf/forte/BogdollFHH11}, unbounded properties are analysed for MDPs
with spurious nondeterminism, where the way it is resolved does not affect the desired property.

More generally, the development of statistical model checking techniques
for probabilistic models with \emph{nondeterminism}, such as MDPs, is an important topic,
treated in several recent papers.
One approach is to give the nondeterminism a probabilistic semantics,
e.g., using a uniform distribution instead, as for timed automata
in \cite{DBLP:conf/formats/DavidLLMPVW11,DBLP:conf/cav/DavidLLMW11,DBLP:conf/ifm/Larsen13}.
Others~\cite{LP12,DBLP:conf/qest/HenriquesMZPC12},
like this paper, aim to quantify over all strategies and produce an $\epsilon$-optimal strategy.
The work in \cite{LP12} and \cite{DBLP:conf/qest/HenriquesMZPC12}
deals with the problem in the setting of discounted (and for the purposes of approximation thus bounded) or bounded properties, respectively.
In the latter work, candidates for optimal schedulers are generated and gradually improved, but ``at any given point we cannot quantify how close to optimal the candidate scheduler is'' (cited from~\cite{DBLP:conf/qest/HenriquesMZPC12}) and the algorithm 
``does not in general converge to the true optimum''
(cited from~\cite{DBLP:conf/sefm/LegayST14}). Further, \cite{DBLP:conf/sefm/LegayST14} considers compact representation of schedulers, but again focuses only on (time) bounded properties.

Since statistical model checking is simulation-based, one of the most important difficulties is the analysis of \emph{rare events}. This issue is, of course, also relevant for our approach; see the section on experimental results. Rare events have been addressed using methods such as importance sampling \cite{DBLP:conf/kbse/HeJBGW10,DBLP:conf/cav/JegourelLS12} and importance splitting \cite{DBLP:conf/cav/JegourelLS13}.

End components in MDPs can be collapsed
either for algorithmic correctness~\cite{DeA97}
or efficiency~\cite{CBGK08} (where only lower bounds on maximum reachability probabilities are considered).
Asymptotically efficient ways to detect them are given in \cite{CH11,CH12}.

\section{Basics about MDPs and Learning Algorithms}\label{sec:prelims}
We begin by giving some basic background material on MDPs
and establishing some fundamental definitions for our learning framework.

\startpara{Basic notions}
We use $\Nset$, $\Qset$, and $\Rset$
to denote the sets of all non-negative
integers, rational numbers, and real numbers,
respectively. Also, given real numbers $a\leq b$, we denote by $[a,b]\subseteq \Rset$ the closed interval between $a$ and $b$.
We assume familiarity with basic notions of probability theory, e.g.,
\emph{probability space} and \emph{probability measure}. %
As usual, a \emph{probability distribution} over a finite or
countably infinite set $X$ is a function
$f : X \rightarrow [0,1]$ such that \mbox{$\sum_{x \in X} f(x) = 1$}.
We call $f$ \emph{rational} if $f(x) \in
\Qset$ for every $x \in X$. %
We denote by $\support{f}$ the set of all $x\in X$ such that $f(x)>0$ and by $\Dist(X)$ the set of all rational probability distributions on $X$.

\subsection{Markov Decision Processes}
We work with \emph{Markov decision processes} (MDPs),
a widely used model to capture both nondeterminism
(for, e.g., control, concurrency) and probability.
\begin{definition}
An \emph{MDP}
is a tuple $\mdp=\mdptuple$, where $\states$ is a finite set of {\em states},
$\sinit\in S$ is an {\em initial state}, $\act$ is a finite set of {\em actions}, $\enab : S\rightarrow 2^A$ assigns
non-empty sets of {\em enabled} actions to all states, and $\Delta : S {\times} \act \rightarrow \Dist(S)$ is a (partial) {\em probabilistic transition function}
defined for all $s$ and $a$ where $a\in \enab(s)$.
\end{definition}

\begin{remark}
For simplicity of presentation we assume w.l.o.g. that, for every action $a \in \act$, there is at most
one state $s$ such that $a \in \enab(s)$, i.e., $\enab(s)\cap \enab(s')=\emptyset$ for $s\not = s'$.
If there \emph{are} states $s,s'$ such that $a \in \enab(s) \cap \enab(s')$, we can always rename
the actions as $(s,a) \in \enab(s)$, and $(s',a) \in \enab(s')$, so that the MDP satisfies our assumption.
\end{remark}

An {\em infinite path} %
of $\mdp$ is an infinite sequence $\omega=s_0 a_0 s_1 a_1 \cdots $
such that $a_i\in E(s_i)$ and $\Delta(s_i,a)(s_{i+1})>0$ for every $i\in \Nset$.
A {\em finite path} is a finite prefix of an infinite path ending in a state.
We use $\last(\omega)$ to denote the last state of a finite path $\omega$.
We denote by $\ipaths$ ($\fpaths$) the set of all infinite (finite) paths,
and by $\ipaths_s$ ($\fpaths_s$) the set of infinite (finite) paths starting in a state $s$.

A state $s$ is {\em terminal} if all actions $a\in \enab(s)$ satisfy $\Delta(s,a)(s)=1$.
An {\em end component} (EC) of $\mdp$ is a pair $(\states',\act')$ where
$\states'\subseteq \states$ and $\act' \subseteq \bigcup_{s \in \states'}E(s)$ such that:
(1) if $\Delta(s,a)(s') > 0$
 for some $s\in \states'$ and $a\in \act'$, then $s'\in \states'$; and
(2) for all $s,s'\in \states'$ there is a path $\omega=s_0 a_0 \ldots s_n$ such
 that $s_0=s$, $s_n=s'$ and for all $0\le i < n$
we have $a_i\in \act'$.
A {\em maximal end component} (MEC) is an EC that is maximal with respect to the point-wise subset ordering.

\startpara{Strategies}
A \emph{strategy} of MDP $\mdp$ is a function $\sigma:\fpaths \rightarrow \dist(\act)$ satisfying
$\support{\sigma(\omega)}\subseteq \enab(\last(\omega))$ for every $\omega\in \fpaths$.
Intuitively, the strategy resolves the choices of actions in each finite path by choosing
(possibly at random) an action enabled in the last state of the path.
We write $\strats_\mdp$ for the set of all strategies in $\mdp$.
In standard fashion \cite{KSK76},
a strategy $\sigma$ induces, for any state $s$, a probability measure
$\Prb_{\mdp,s}^{\sigma}$ over $\ipaths_s$.
A strategy $\sigma$ %
is \emph{memoryless} if $\sigma(\omega)$ depends only on $\last(\omega)$.

\startpara{Objectives and values}
Given a set $F\subseteq S$ of target states,
the \emph{bounded} reachability for step $k$, denoted by $\reachk{F}$, consists of
the set of all infinite paths that reach a state in $F$ within $k$ steps, and
the \emph{unbounded} reachability, denoted by $\reach{F}$, consists of the set of all infinite
paths that reach a state in $F$.
Note that $\reach{F}=\bigcup_{k\geq 0} \reachk{F}$.
We consider the \emph{reachability probability} $\Prb_{\mdp,s}^{\sigma}(\reach{F})$, and
strategies that maximise this probability.
We denote by $\val(s)$ the {\em value} in $s$,
defined by $\sup_{\sigma \in \strats_\mdp} \Prb_{\mdp,s}^{\sigma}(\reach{F})$.
Given $\epsilon\geq0$, we say that a strategy $\sigma$ is {\em $\epsilon$-optimal in $s$} if $\Prb_{\mdp,s}^{\sigma}(\reach{F})+\epsilon \geq \val(s)$,
and we call a $0$-optimal strategy {\em optimal}.
It is known~\cite{Puterman:book} that,
for every MDP, there is a memoryless optimal strategy.
We are interested in strategies that approximate the value function,
i.e., compute $\epsilon$-optimal strategies for $\epsilon>0$.
\subsection{Learning Algorithms for MDPs}
In this paper we study a class of learning-based algorithms that stochastically approximate the value function of a given MDP. Let us fix, for the whole section, an MDP $\mdp=\mdptuple$ and a set of target states $F$.

We denote by $V:S\times A\rightarrow [0,1]$ the {\em value function} for state-action pairs of $\mdp$,
defined for all $(s,a)$ where $s\in \states$ and $a\in \enab(s)$ by:
\[
V(s,a):=\sum_{s'\in S} \Delta(s,a)(s')\cdot \val(s').
\]
Intuitively, $V(s,a)$ is the value in $s$ assuming that the first action performed is $a$.
A {\em learning algorithm} $\mathcal{A}$ simulates executions of $\mdp$, and iteratively updates
 upper and lower approximations $U:S\times A\rightarrow [0,1]$ and $L:S\times A\rightarrow [0,1]$, respectively, of the true value function $V:S\times A\rightarrow [0,1]$.
 The simulated execution starts in the initial state $\sinit$. The functions $U$ and $L$ are initialized to appropriate values so that $L(s,a)\leq V(s,a)\leq U(s,a)$ for all $s\in S$ and $a\in A$. During the computation of $\mathcal{A}$, the simulated execution moves from state to state according to choices made by the algorithm and the values of $U(s,a)$ and $L(s,a)$ are updated for the states visited by the simulated execution.
The learning algorithm $\mathcal{A}$ terminates when $\max_a U(\sinit,a)-\max_a L(\sinit,a)<\epsilon$ where the {\em precision} $\epsilon>0$ is given to the algorithm as an argument.

As the values $U(s,a)$ and $L(s,a)$ are updated only for states $s$ visited, and possibly updated with new values that are based on the simulations,
the computation of the learning algorithm may be randomised and even give incorrect results with some probability.

\begin{definition}
Denote by $\mathcal{A}(\epsilon)$ the instance of learning algorithm $\mathcal{A}$ with precision $\epsilon$.
We say that $\mathcal{A}$ {\em converges (almost) surely} if for every $\epsilon>0$ the computation of $\mathcal{A}(\epsilon)$ (almost) surely terminates with $L(\sinit,a)\leq V(\sinit,a)\leq U(\sinit,a)$.
\end{definition}
In some cases almost-sure convergence cannot be guaranteed. In such cases we demand  that the computation terminates correctly with sufficiently high probability. In such a case we assume that the algorithm is given not only the precision $\epsilon$ but also an {\em error tolerance} $\delta>0$ as an argument.
\begin{definition}
Denote by $\mathcal{A}(\epsilon,\delta)$ the instance of learning algorithm $\mathcal{A}$ with precision $\epsilon$ and error tolerance $\delta$.
We say that $\mathcal{A}$ is {\em probably approximately correct (PAC)} if, for every $\epsilon>0$ and every $\delta>0$, with probability at least $1-\delta$ the computation of $\mathcal{A}(\epsilon,\delta)$ terminates with $L(\sinit,a)\leq V(\sinit,a)\leq U(\sinit,a)$.
\end{definition}
The function $U$ defines a memoryless strategy $\sigma_U$ which in every state $s$ chooses all actions $a$ maximising the value $U(s,a)$ over $\enab(s)$ uniformly at random.
The strategy $\sigma_U$ is used in some of the algorithms and also %
contributes to the output.

\begin{remark}
Note that if the value function is defined as the infimum over
strategies (as in~\cite{MBLG05}), then the strategy chooses actions to minimise the
lower value.
Since we consider the dual case of supremum over strategies, the choice
of $\sigma_U$ is to maximise the upper value.
\end{remark}
In order to design a proper learning algorithm, we have to specify what knowledge about the MDP $\mdp$ is available at the beginning of the computation. We distinguish the following two distinct %
cases:
\begin{definition}\label{def:limited-info}
A learning algorithm has {\em limited information} about $\mdp$ if it knows only
the initial state $\sinit$, a~number $\statebound \ge |S|$, a number $\maxE \geq\max_{s\in S}|\enab(s)|$, a number \[0<p_{\min} \le \min %
\{\Delta(s,a)(s')\mid s\in \states,a\in\enab,s'\in\support{\Delta(s,a)}\},\] and the function $\enab$
(more precisely, given a state $s$, the learning procedure can ask an oracle for $\enab(s)$).
We assume that the algorithm may simulate an execution of $\mdp$ starting with $\sinit$ and choosing enabled actions in individual steps.
\end{definition}
\begin{definition}
A learning algorithm has {\em complete information} about $\mdp$ if it knows the complete MDP $\mdp$.
\end{definition}

\label{sec:prelim}
\section{MDPs without End Components}\label{sec:nomec}

In this section we present algorithms for MDPs without ECs.
Let us fix an MDP $\mdp=\mdptuple$ and a target set $F$. 

\smallskip\noindent{\bf Assumption-EC.} 
We assume that $\mdp$ contains no ECs, with the 
exception of two  trivial ECs containing only distinguished terminal states $\one$ and $\zero$, respectively, with $F=\{\one\}$ and $V(\zero) = 0$.
This assumption (Assumption-EC) considerably simplifies the adaptation of BRTDP and DQL to the unbounded reachability objective. 
Later, in Section~\ref{sec:mec}, we show how to extend our methods to deal with arbitrary MDPs (i.e., MDPs with ECs).

We start by formalising our framework for learning algorithms outlined in the previous section,
Then, we instantiate the framework and obtain two learning algorithms: BRTDP and DQL.

\begin{algorithm}
\caption{Learning algorithm (for MDPs with no ECs)}\label{alg:skeleton}
\begin{algorithmic}[1]
\State \textbf{Inputs:} An EC-free MDP $\mdp$
\State $U(\cdot,\cdot)\gets1, L(\cdot,\cdot)\gets0$
\State $L(1,\cdot)\gets1, U(0,\cdot)\gets0$ \Comment \textsc{Initialise}
\Repeat  
 \noindent /* \textsc{Explore} phase */
 \State $\omega\gets \sinit$
 \Repeat \label{ln:explore-begin}
	\State $a\gets$ sampled uniformly from\ \ $\displaystyle\argmax_{a \in \enab(\last(\omega))} U(\last(\omega),a)$
    \State \label{ln:explore-state} $s\gets$ sampled according to $\Delta(\last(\omega),a)$ \Comment \textsc{GetSucc}($\omega,a$)
    \State $\omega\gets \omega\ a\,s$
 \Until $s\in\{1,0\}$ \Comment \textsc{TerminatePath}($\omega$) \label{ln:explore-end}
 
 \noindent /* \textsc{Update} phase */
 \Repeat
	\State $s'\gets \pop(\omega)$ 	
 	\State $a\gets \pop(\omega)$    
    \State $s\gets \last(\omega)$
      \State \textsc{Update}($(s,a),s'$)
 \Until $\omega=\sinit$ %
\Until $\displaystyle\max_{a \in \enab(\sinit)} U(\sinit,a)-\max_{a \in \enab(\sinit)} L(\sinit,a)<\epsilon$ \Comment {\textsc{Terminate}}
\end{algorithmic}
\end{algorithm}
\subsection{Our framework} 
Our framework is presented as Algorithm~\ref{alg:skeleton}, 
and works as follows. 
Recall that functions $U$ and $L$ store the current upper and lower bounds on the value function $\val$, respectively.
Each iteration of the outer loop is divided into two phases: \textsc{Explore} and \textsc{Update}. 
In the \textsc{Explore} phase (lines 5 - 10), the algorithm samples a finite path $\omega$ 
in $\mdp$ from $\sinit$ to a state in $\{\one,\zero\}$ by always randomly choosing one of the enabled actions
that maximise the $U$ value, and sampling the successor state using the probabilistic transition function. 
 In the \textsc{Update} phase (lines 11 - 16), the 
 algorithm updates $U$ and $L$ on the state-action pairs along the path in a 
 backward manner. Here, the function $\pop$ pops and returns the last letter of the given sequence.

\subsection{Instantiations: BRTDP and DQL}
Our framework will be instantiated with two different algorithms, and
the difference between them is the way that the \textsc{Update} function is defined.

\subsubsection{Unbounded reachability with BRTDP.}

We obtain BRTDP by instantiating \textsc{Update} with Algorithm~\ref{alg:RTDP},
which requires complete information about the MDP.
Intuitively, \textsc{Update} computes new values of $U(s,a)$ and $L(s,a)$ by taking the weighted average of corresponding $U$ and $L$ values, respectively, over all successors of $s$ via the action $a$.
Formally, denote $\displaystyle U(s)=\max_{a\in\enab(s)}U(s,a)$ and $\displaystyle L(s)=\max_{a\in\enab(s)}L(s,a)$.
\begin{algorithm}
\caption{BRTDP\label{alg:RTDP} instantiation of Algorithm \ref{alg:skeleton}}
\begin{algorithmic}[1]
\Procedure {Update}{$(s,a),\_$}
  \State $U(s,a):=\sum_{s'\in S}\Delta(s,a)(s')U(s')$
  \State $L(s,a):=\sum_{s'\in S}\Delta(s,a)(s')L(s')$
\EndProcedure  
\end{algorithmic}
\end{algorithm}
\begin{theorem}\label{thm:BRTDP-nomec}
The algorithm BRTDP converges almost surely 
under Assumption-EC.
\end{theorem}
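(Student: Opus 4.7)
The plan is to proceed in three stages. First, soundness: the invariant $L(s,a) \leq V(s,a) \leq U(s,a)$ is preserved by every \textsc{Update}, by induction on the backups; since $V(s,a) = \sum_{s'} \Delta(s,a)(s')\,V(s')$ and each $V(s') \leq U(s')$, the Bellman-style update keeps $U$ an upper bound on $V$, and dually for $L$. Second, each Explore phase terminates almost surely: under any strategy, the bottom strongly connected components of the induced Markov chain would be ECs of $\mdp$, so by Assumption-EC they are $\{1\}$ or $\{0\}$, and $\{1,0\}$ is reached with probability one. Third, and this is the crux, I must show $U_n(\sinit) - L_n(\sinit) \to 0$ almost surely.

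Monotonicity of the updates, combined with the initial values $U \equiv 1$ and $L \equiv 0$ on non-terminals, yields by induction that $U_n$ is pointwise nonincreasing and $L_n$ pointwise nondecreasing, so limits $U_\infty, L_\infty$ exist. Suppose for contradiction that the algorithm fails to terminate with positive probability, and condition on that event. Let $S^*$ be the random set of states visited infinitely often during Explore; necessarily $\sinit \in S^*$. I would argue that the actions chosen infinitely often at any $s \in S^*$ are exactly those in $\argmax_{a \in \enab(s)} U_\infty(s,a)$: strictly suboptimal actions eventually leave the current argmax (since $U_n \to U_\infty$), while each limit-optimal action retains positive probability of selection under the uniform tie-breaking and is thus sampled infinitely often. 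A Borel--Cantelli argument then ensures every successor of such a pair $(s,a)$ lies in $S^* \cup \{0,1\}$ a.s., and passing to the limit in the update rule gives the exact equations $U_\infty(s,a) = \sum_{s'} \Delta(s,a)(s')\,U_\infty(s')$ and $L_\infty(s,a) = \sum_{s'} \Delta(s,a)(s')\,L_\infty(s')$ for every $s \in S^*$ and every limit-optimal $a$.

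Fix any memoryless $\pi^*$ that picks, at each $s \in S^*$, some limit-optimal action. By Assumption-EC, $\pi^*$ reaches $\{0,1\}$ from $\sinit$ almost surely, so the Bellman equation gives $U_\infty(\sinit) = \Prb_{\mdp,\sinit}^{\pi^*}(\reach{\{1\}}) \leq V(\sinit)$; combined with the invariant $U_\infty \geq V$, this forces $U_\infty(\sinit) = V(\sinit)$. For the lower side, $L_\infty$ is superharmonic along $\pi^*$ because $L_\infty(s) = \max_a L_\infty(s,a) \geq L_\infty(s,\pi^*(s))$; optional stopping then yields $L_\infty(\sinit) \geq \Prb_{\mdp,\sinit}^{\pi^*}(\reach{\{1\}}) = V(\sinit)$, and with $L_\infty \leq V$ we get equality. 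Hence $U_\infty(\sinit) - L_\infty(\sinit) = 0$, so the termination criterion holds eventually, contradicting non-termination. The main obstacle is the middle step: rigorously identifying the limit-optimal actions with those sampled infinitely often, and thereby lifting the per-update backup to an exact equation in the limit. This requires care to handle the interplay between uniform tie-breaking, the drifting argmax sets across iterations, and the Borel--Cantelli conditions needed for the successor closure of $S^*$.
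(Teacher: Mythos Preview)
Your proof is correct in substance and takes a genuinely different route from the paper's. The paper works directly with the difference $\delta_i(s) := U_i(s,a_i(s)) - L_i(s,a_i(s))$ (where $a_i(s)$ maximises $U_i(s,\cdot)$), passes to the limit $\delta$, and shows that the set $D$ of states in $S_\infty$ attaining $\max_{s} \delta(s)$ would have to contain an EC if that maximum were positive --- contradicting Assumption-EC. This is a maximum-principle argument: a function satisfying $\delta(t)=\sum_{s}\Delta(t,a)(s)\,\delta(s)$ cannot attain a strict interior maximum unless the maximising set is closed under successors, hence an EC. No particular strategy is singled out and no martingale machinery is invoked.

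You instead treat $U_\infty$ and $L_\infty$ separately, build an explicit strategy $\pi^*$ from the infinitely-sampled actions, and use optional stopping to pin both $U_\infty(\sinit)$ and $L_\infty(\sinit)$ to $\Prb^{\pi^*}_{\mdp,\sinit}(\reach{\{1\}})=V(\sinit)$. This is more informative --- you exhibit the limiting value exactly and identify $\pi^*$ as optimal --- at the cost of slightly more apparatus. One small overreach: you assert that the actions sampled infinitely often at $s\in S^*$ are \emph{exactly} $\argmax_a U_\infty(s,a)$. The forward inclusion (infinitely-sampled $\Rightarrow$ limit-optimal) is immediate by taking limits along the subsequence where the action lies in the current argmax, and this is all you actually need for $\pi^*$. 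The reverse inclusion is not obviously true: a limit-optimal action $a$ could be perpetually edged out of the current argmax by some $b$ whose $U$-value decreases to the same limit but stays strictly above $U_n(s,a)$ along the way. Your argument survives unchanged if you simply take $\pi^*(s)$ to be any action chosen infinitely often at $s$ (one exists by finiteness), which is automatically limit-optimal.
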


\begin{remark}
Note that in the \textsc{Explore} phase, an action maximising the value of $U$ is chosen and the successor is sampled according to the probabilistic transition function of $\mdp$. However, one can consider various modifications. Actions and successors may be chosen in different ways (e.g., for \textsc{GetSucc}), 
for instance, uniformly at random, in a round-robin fashion, or assigning various probabilities (bounded from below by some fixed $p >0$) to all possibilities in any biased way. 
In order to guarantee almost-sure convergence, some conditions have to be satisfied. Intuitively we require, that the state-action pairs used by $\epsilon$-optimal strategies have to be chosen enough times. If the condition is satisfied then the almost-sure convergence is preserved and the practical running times may significantly improve. For details, see Section~\ref{sec:expt}.
\end{remark}

\begin{remark}
The previous BRTDP algorithm is only applicable if the transition probabilities are known.
However, if complete information is not known, but $\Delta(s,a)$ can be repeatedly sampled for all $s$ and $a$,
then a variant of BRTDP can be shown to be probably approximately correct.
\end{remark}

\noindent{\bf Unbounded reachability with DQL.}
On many occasions neither the complete information is available nor repeated sampling is possible, and we have to deal with only limited information about $\mdp$ (see Definition~\ref{def:limited-info}). 
\begin{algorithm}
\caption{DQL (delay $m$, estimator precision $\bar\epsilon$)\label{alg:DQL} instantiation of Algorithm \ref{alg:skeleton}}
\begin{algorithmic}[1]
\Procedure {Update}{$(s,a),s'}$ %
  \If {$c(s,a)=m$ \textbf{and} LEARN($s,a$)
  }
  \If{$\accum^U(s,a)/m < U(s,a) -2\bar\epsilon$}
    \State $U(s,a)\gets\accum^U(s,a)/m +\bar\epsilon$
	\State $\accum^U(s,a)=0$
  \EndIf
  \If{$\accum^L(s,a)/m > L(s,a) +2\bar\epsilon$}
    \State $L(s,a)\gets\accum^L(s,a)/m -\bar\epsilon$
	\State $\accum^L(s,a)=0$
  \EndIf
	\State $c(s,a)=0$
  \Else 
	\State $\accum^U(s,a)\gets \accum^U(s,a) + U(s')$
	\State $\accum^L(s,a)\gets \accum^L(s,a) + L(s')$
	\State $c(s,a)\gets c(s,a) +1$
  \EndIf
\EndProcedure  
\end{algorithmic}
\end{algorithm}

\vskip5pt
For this scenario, we use DQL, which can be obtained by instantiating \textsc{Update} with Algorithm~\ref{alg:DQL}. Here the macro LEARN($s,a$) is true in $k$th call of \textsc{Update}$((s,a),\cdot)$ if since the $(k-2m)$th call of \textsc{Update}$((s,a),\cdot)$ line 4 was not executed in any call of \textsc{Update}$(\cdot,\cdot)$.

The main idea behind DQL is as follows. As the probabilistic transition function is not known, we cannot update $U(s,a)$ and $L(s,a)$ with the precise values $\sum_{s'\in S}\Delta(s,a)(s')U(s')$ and $\sum_{s'\in S}\Delta(s,a)(s')L(s')$, respectively. However, we can use simulations executed in the \textsc{Explore} phase of Algorithm~\ref{alg:skeleton}  to estimate these values instead. Namely, we use $\accum^U(s,a)/m$ to estimate  $\sum_{s'\in S}\Delta(s,a)(s')U(s')$ where $\accum^U(s,a)$ is the sum of the $U$ values of the last $m$ immediate successors of $(s,a)$ seen during the \textsc{Explore} phase. Note that the delay $m$ must be chosen large enough for the estimates to be sufficiently close, i.e. $\bar\epsilon$-close, to the real values.

So in addition to $U(s,a)$ and $L(s,a)$, the algorithm uses new variables $\accum^U(s,a)$ and $\accum^L(s,a)$ to accumulate $U(s,a)$ and $L(s,a)$ values, respectively, and a counter $c(s,a)$ counting the number of invocations of $a$ in $s$ since the last update (all these variables are initialized to $0$ at the beginning of computation).
Assume that $a$ has been invoked in $s$ during the \textsc{Explore} phase of Algorithm~\ref{alg:skeleton}, which means that \textsc{Update}$((s,a),s')$ is eventually called in the Update phase of Algorithm~\ref{alg:skeleton} with the corresponding successor $s'$ of $(s,a)$. If $c(s,a)=m$ at the moment, $a$ has been invoked in $s$ precisely $m$ times since the last update concerning $(s,a)$ and the procedure \textsc{Update}$((s,a),s')$ updates $U(s,a)$ with 
$\accum^U(s,a)/m$ plus an appropriate constant $\bar\epsilon$ (unless LEARN is false). Here the purpose of adding $\bar\epsilon$ is to make $U(s,a)$ stay above the real value $V(s,a)$ with high probability. If $c(s,a)<m$, then \textsc{Update}$((s,a),s')$ simply accumulates the $U(s')$ into $\accum^U(s,a)$ and increases the counter $c(s,a)$. The $L(s,a)$ values are estimated by $\accum^L(s,a)/m$ in a similar way, just subtracting $\bar\epsilon$ from $\accum^L(s,a)$. The procedure requires $m$ and $\bar\epsilon$ as inputs, and they are chosen depending on $\epsilon$ and $\delta$, 
more precisely, we choose 
$\bar\epsilon=\frac{\eps \cdot (p_{\min} / \maxE)^{|\states|}}{12 |\states|}$
and 
$m= \frac{\ln(6 |\states||\act| ( 1 + \frac{|\states||\act|}{\bar\eps})/\delta)}{2 {\bar\eps}^2}$
and establish that DQL is probably approximately correct.
The parameters $m$ and $\bar{\eps}$ can be conservatively approximated using only the limited information about the MDP.
Even though the algorithm has limited information about $\mdp$, we still establish the following theorem.

\begin{theorem}\label{thm:dql}
DQL is probably approximately correct under Assumption-EC.
\end{theorem}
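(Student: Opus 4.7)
My plan is to establish two properties that together give the PAC guarantee: (i) with probability at least $1-\delta$, the invariant $L(s,a)\le V(s,a)\le U(s,a)$ is preserved throughout the entire computation for every $(s,a)$; and (ii) conditional on this invariant holding, the algorithm almost surely terminates at precision $\epsilon$.

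For soundness of the bounds I would induct over the \emph{successful} updates, i.e.\ executions of line~4 of Algorithm~\ref{alg:DQL}. Thanks to the \texttt{LEARN} gate, whenever an update to $U(s,a)$ actually fires, the $m$ successor samples accumulated into $\accum^U(s,a)$ were all drawn \emph{after} the most recent successful update anywhere in the algorithm. Hence, conditional on the $U$-values at that moment (which are frozen during the \texttt{LEARN} window), these samples are i.i.d.\ draws from $\Delta(s,a)$, and $\accum^U(s,a)/m$ is a $[0,1]$-valued unbiased estimator of $\sum_{s'}\Delta(s,a)(s')\cdot U(s')$. Hoeffding's inequality bounds the probability of deviating from this mean by more than $\bar\epsilon$ by $2e^{-2m\bar\epsilon^2}$; on the complementary event, the new value $\accum^U(s,a)/m+\bar\epsilon$ dominates $\sum_{s'}\Delta(s,a)(s')\cdot U(s')\ge V(s,a)$ by the inductive hypothesis, so the $U$ bound is preserved (symmetrically for $L$). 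Since every successful $U$-update decreases $U(s,a)$ by more than $\bar\epsilon$ (it fires only when $\accum^U(s,a)/m<U(s,a)-2\bar\epsilon$), at most $\lceil 1/\bar\epsilon\rceil$ such updates can ever happen per pair, giving a hard cap of $O(|\states||\act|/\bar\epsilon)$ successful updates in total. A union bound over this cap, using the prescribed value of $m$, keeps the total failure probability under $\delta$.

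For termination I would mimic the reasoning behind Theorem~\ref{thm:BRTDP-nomec}. Under Assumption-EC every state reaches $\{\one,\zero\}$ almost surely, so the \textsc{Explore} loop always terminates. If $\max_a U(\sinit,a)-\max_a L(\sinit,a)\ge\epsilon$ persistently, then propagating this gap along a $U$-maximising strategy shows that some reachable $(s,a)$ on a path of length at most $|\states|$ has individual gap exceeding $\epsilon\cdot(p_{\min}/\maxE)^{|\states|}/|\states|$, which by the calibration $\bar\epsilon=\epsilon(p_{\min}/\maxE)^{|\states|}/(12|\states|)$ is strictly larger than $12\bar\epsilon$. Such a pair is visited by \textsc{Explore} with probability at least $(p_{\min}/\maxE)^{|\states|}$ per episode, hence infinitely often by Borel--Cantelli; eventually it accumulates a fresh batch of $m$ post-update samples whose empirical mean concentrates within $\bar\epsilon$ of the truth and therefore triggers another successful update that shrinks the gap by more than $\bar\epsilon$. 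Since only finitely many successful updates can ever occur, the termination criterion must be met in finite time with probability one.

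The main obstacle is the tension between the \texttt{LEARN} gate, which invalidates the sample window whenever \emph{any} pair gets updated, and the need for each individual pair to collect an uninterrupted run of $m$ post-update samples. I would close this by exploiting the global cap on successful updates established above: after finitely many episodes no further updates can occur, so from some point on the \texttt{LEARN} gate stays permanently open for every pair still visited infinitely often, and the visitation-plus-concentration argument drives the remaining gap below $\epsilon$. A secondary subtlety is that the learner's action-selection rule $\argmax_a U(s,a)$ depends on $U$, so one must verify that this does not break the conditional independence needed for Hoeffding; this holds precisely because $U$ is frozen within a \texttt{LEARN} window, so inside such a window the successor samples for a fixed $(s,a)$ are genuinely i.i.d.\ from $\Delta(s,a)$.
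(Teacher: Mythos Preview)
Your soundness argument is close to the paper's, but the union bound should be taken over \emph{attempted} updates, not successful ones: whether an update succeeds is correlated with the very samples you apply Hoeffding to (the $U$-update fires exactly when $\accum^U/m$ is small, which overlaps with the bad Hoeffding tail). The paper bounds the number of attempted updates by $|\states||\act|\bigl(1+|\states||\act|/\bar\epsilon\bigr)$ and union-bounds over those; note that the prescribed value of $m$ is calibrated to precisely this count, not to $|\states||\act|/\bar\epsilon$.

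The termination argument has a genuine gap. You assert that a persistent gap $\max_a U(\sinit,a)-\max_a L(\sinit,a)\ge\epsilon$ forces some reachable $(s,a)$ to have ``individual gap'' exceeding $12\bar\epsilon$, which then triggers a successful update. But the $U$-update fires when $\accum^U(s,a)/m<U(s,a)-2\bar\epsilon$, i.e.\ (after concentration) when the \emph{Bellman residual} $U(s,a)-\sum_{s'}\Delta(s,a)(s')\,U(s')$ is large --- not when $U(s,a)-L(s,a)$ is large. It is entirely possible for both $U$ and $L$ to be approximately Bellman-consistent at every visited pair while $U-L$ remains large. What actually closes the loop is a quantitative error-propagation bound: if every $(s,a)$ reached under the $U$-greedy strategy has $U$-residual at most $3\bar\epsilon$, then $U(\sinit)-V(\sinit)\le 3\bar\epsilon\cdot|\states|\cdot(p_{\min}/\maxE)^{-|\states|}$, and symmetrically for $L$. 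This is the paper's Lemma~\ref{lem:matrix}, and its proof uses Assumption-EC via Lemma~\ref{lem:perturbation} to ensure the Bellman operator of the greedy strategy has a unique fixpoint. Your ``path of length at most $|\states|$'' does not supply this: the relevant quantity is the expected hitting time of $\{\one,\zero\}$ under the greedy chain, not a deterministic path length, and this is exactly where the factor $(p_{\min}/\maxE)^{-|\states|}$ in $\bar\epsilon$ comes from.

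The paper's overall route is also structurally different from yours. Rather than arguing almost-sure termination by contradiction with the finite-update cap, it adapts the PAC-MDP analysis of delayed Q-learning from~\cite{Strehl}: it defines $K_t$ as the set of pairs that are $3\bar\epsilon$-Bellman-consistent at time $t$, bounds the total number of timesteps spent outside $K_t$ (Lemma~\ref{lem:outsideClose}), shows via a finite-horizon truncation (Lemmas~\ref{lem:finHor} and~\ref{lem:apBound}) that whenever the current pair lies in $K_t$ the algorithm's behaviour is close to the greedy policy on an auxiliary MDP, and then invokes the error-propagation bound above to conclude that this greedy policy is $\epsilon/2$-optimal. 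This yields an explicit polynomial step bound (Theorem~\ref{thm:PAC-1MEC}) rather than mere eventual termination, and the lower-bound convergence is handled separately once the greedy policy has stabilised.
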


\begin{remark}[Bounded reachability]
Algorithm~\ref{alg:skeleton} can be trivially adapted to handle bounded reachability properties by preprocessing the input
MDP in a standard fashion. Namely, every state is equipped with a bounded
counter with values ranging from $0$ to $b$ where $b$ is the step bound, the current value
denoting the number of steps taken so far.
All target states remain target for all counter values, and every non-target state with counter value $b$ becomes rejecting.
Then to determine the $b$-step reachability in the original MDP one can compute the (unbounded) reachability in the new MDP.
Although this means that the number of states is multiplied by $b+1$, in practice
the size of the explored part of the model can be small.
\end{remark}

\section{Unrestricted MDPs}
\label{sec:mec}

\begin{wrapfigure}[10]{r}{4.5cm}
\vspace*{-3em}
\begin{tikzpicture}[x=1cm,y=1cm,outer sep=1pt]
\node[state,initial,initial text=] (a) {$m_1$}; 
\node[state] (b) [right=1 of a] {$m_2$};
\node (fork) [below=0.5 of b,outer sep=0,inner sep=0] {$\bullet$};
\node[state] (c) [below=1 of b] {$\one$};
\node[state] (d) [left=1 of c] {$\zero$};
\path[->] 
(a) edge[bend right] node[below]{$a\ \ 1$} (b)
(b) edge[bend right] node[above]{$b\ \ 1$} (a)
(fork) edge node[right]{$0.5$} (c)
(fork) edge[bend right=20pt] node[left,pos=0.7,xshift=-2pt]{$0.5$} (d)
(c) edge[loop right] node[pos=0.3,yshift=3pt]{$d$} node[pos=0.7]{$1$} ()
(d) edge[loop right] node[pos=0.3,yshift=1pt]{$e$} node[pos=0.7]{$1$} ()
;
\draw (b) edge node[right] {$c$} (fork);
\end{tikzpicture}
\vspace*{-1em}
\caption{MDP $\mdp$ with an EC.}\label{fig:noConverge}
\vspace*{-2em}
\end{wrapfigure}
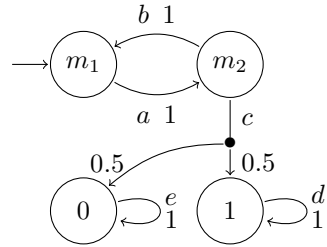

We first illustrate with an example that the algorithms BRTDP and DQL as presented 
in Section~\ref{sec:nomec} may not converge when there are ECs in the MDP.

\begin{example}
Consider the MDP~$\mdp$ on Figure~\ref{fig:noConverge} with an EC
 $(\xu,\xg)$, where $\xu = \{m_1,m_2\}$ and $\xg= \{a,b\}$.
While the values in states $m_1$ and $m_2 $ are $V(m_1)=V(m_2)=0.5$, we have for every iteration
 that the upper bound $U(m_1)=U(m_2)=1$. This follows from the fact that $U(m_1,a)=U(m_2,b)=1$ and both
 the algorithms greedily choose the action with the highest upper bound. It follows that in every iteration~$t$
 of the algorithm the error for the initial state $m_1$ is $U(m_1) - V_t(m_1) = \frac{1}{2}$ and the algorithm does not converge.

In general all states in an EC have the upper bound $U$ always equal to $1$ as by definition there is a non-empty
set of actions that are guaranteed to keep the next reached state in the EC, i.e., state with an upper bound of $1$.
 This argument holds even for the standard value iteration of the upper bounds initialized to $1$. 
\end{example}

One way of dealing with general MDPs is to preprocess an MDP and identify all MECs~\cite{CH11,CH12}, and 
``collapse'' every MEC into a single state (see e.g.~\cite{DeA97,CBGK08}). 
These algorithms require that the graph model is known and explore the whole state space, 
which might not be possible either due to limited information (see Definition~\ref{def:limited-info}), 
or because the model is too large. Hence, we propose a modification to the algorithms from the previous 
sections that allows us to deal with ECs ``on the fly''.
We first describe the collapsing of a set of states and then present a crucial lemma that allows us to
identify ECs to collapse.

\smallskip\noindent{\bf Collapsing states.}
In the following, we say that an MDP $\mdp'=\langle \states',\sinit',\act', E', \Delta'\rangle$
is obtained from $\mdp = \mdptuple$ by {\em collapsing} a tuple $(\xu,\xg)$, where
$\xu \subseteq S$ and $\xg \subseteq \act$ with $\xg  \subseteq \bigcup_{s \in \xu}E(s)$ if:
\begin{itemize}
 \item $\states' = (\states \setminus \xu) \cup \{s_{(\xu,\xg)}\}$,
 \item $\sinit'$ is either $s_{(\xu,\xg)}$ or $\sinit$, depending on whether $\sinit \in R$ or not,
 \item $\act' = \act \setminus \xg$,
 \item $\enab'(s) = \enab(s) \setminus \xg$, for $s \in \states \setminus \xu$; $\quad \enab'(s_{(\xu,\xg)}) = \bigcup_{s \in \xu} \enab(s) \setminus \xg$, 
 \item $\Delta'$ is defined for all $s\in \states'$ and $a\in E'(s)$ by
  \begin{itemize}
   \item $\Delta'(s,a)(s') = \Delta(s,a)(s')$ for $s,s'\neq s_{(\xu,\xg)}$.
   \item $\Delta'(s,a)(s_{(\xu,\xg)}) = \sum_{s'\in \xu} \Delta(s,a)(s')$ for $s\neq s_{(\xu,\xg)}$.
   \item $\Delta'(s_{(\xu,\xg)},a)(s') = \Delta(s,a)(s')$ for $s'\neq s_{(\xu,\xg)}$ and
	$s$ the %
        state with $a\in \enab(s)$.
   \item $\Delta'(s_{(\xu,\xg)},a)(s_{(\xu,\xg)}) = \sum_{s'\in \xu}\Delta(s,a)(s')$ where
	$s$ is the %
        state with $a{\in} \enab(s)$.
  \end{itemize}
\end{itemize}
We denote the above transformation as the \collapse{} function, i.e., 
$\collapse(\xu,\xg)$ creates $\mdp'$ from $\mdp$.
As a special case, given a state $s$ and a terminal state $s' \in \{0,1\}$ we use $\makeTerminal(s,s')$  as shorthand for \collapse{}$(\{s,s'\}, \enab(s))$ where the resulting
state is renamed to $s'$. Intuitively every transition leading to state $s$ will lead to the terminal state $s'$ in the modified MDP after $\makeTerminal(s,s')$.

For practical purposes, it is important to note that the
collapsing does not need to be implemented explicitly, but can be done by keeping a separate data structure 
which stores information about the collapsed states.

\smallskip\noindent{\bf Identifying ECs from simulations.}
Our modifications will identify ECs ``on-the-fly'' through simulations 
that get stuck in them. 
In the next lemma we establish the identification procedure.  
To this end, for a path $\omega$, let us denote by $\app{i}$ the tuple $(S_i,A_i)$ of $\mdp$ such 
that $s\in S_i$ and $a\in A_i(s)$ if and only if $(s,a)$ occurs in $\omega$ more than $i$ times.
\begin{lemma}\label{lemma:mec-explore}
Let $c=\exp\left(-\left(p_{\min}/\maxE\right)^{\kappa}\,/\,\kappa\right)$ where $\kappa=K\maxE+1$. 
Let $i\geq \kappa$. Assume that the \textsc{Explore} phase in Algorithm~\ref{alg:skeleton} terminates with probability less than $1$. Then, provided the \textsc{Explore} phase does not terminate within $3 i^3$ iterations, the conditional probability that $\app{i}$ is an EC is at least $1-2c^i i^3\cdot \left(p_{\min}/\maxE\right)^{-\kappa}$.
\end{lemma}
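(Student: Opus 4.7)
The key observation is that within a single Explore iteration the function $U$ is not modified, so the strategy $\sigma_U$ used to pick actions (line~7 of Algorithm~\ref{alg:skeleton}) is memoryless and induces a finite Markov chain $\M_{\sigma_U}$ in which every single transition has probability at least $p_{\min}/\maxE$ (the $1/\maxE$ factor coming from uniform tie-breaking among $U$-maximisers). Write $\alpha := (p_{\min}/\maxE)^{\kappa}$, so that $c=\exp(-\alpha/\kappa)$. Since $\Pr[\textsc{Explore terminates}]<1$, $\M_{\sigma_U}$ has at least one \emph{nonterminating} BSCC, i.e.\ a BSCC disjoint from $\{\zero,\one\}$, reachable from $\sinit$; denote the set of such BSCCs by $\mathcal{B}$. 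For any $C\in\mathcal{B}$, the pair $(C,A_C)$ where $A_C=\bigcup_{s\in C}\support{\sigma_U(s)}$ is automatically an EC of $\mdp$: strong connectivity comes from $C$ being a BSCC, and closure comes from $C$ being absorbing in $\M_{\sigma_U}$.

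I would first establish two shortest-path lemmas. \textbf{(L1)} From any state, $\M_{\sigma_U}$ enters a BSCC within $\kappa$ steps with probability $\geq \alpha$: a path of length $\leq K\leq\kappa$ to some BSCC exists and has probability at least $(p_{\min}/\maxE)^{K}\geq\alpha$. Iterating yields $\Pr[\text{not in any BSCC at time }t]\leq c^{t}$. \textbf{(L2)} From any state of a BSCC $C$ of $\M_{\sigma_U}$, any prescribed $(s',a')\in(C,A_C)$ is visited within $\kappa$ steps with probability $\geq\alpha$ (walk to $s'$ inside $C$ in $\leq K-1$ steps, then pick $a'$ with probability $\geq 1/\maxE$). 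Applying (L1) at $\sinit$, together with an $\sinit$-to-$\mathcal{B}$ path of probability $\geq\alpha$, gives $\Pr[\textsc{Explore not terminated within }3i^{3}\text{ iterations}]\geq\alpha$; this lower-bounds the denominator of the target conditional probability.

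For the numerator I would bound $\Pr[\app{i}\text{ is not an EC}\wedge\text{path length}>3i^{3}]$ by two contributions. \textbf{(A)} ``At time $i$ the path is not yet in any BSCC''. The length-$>3i^{3}$ condition rules out absorption at $\zero$ or $\one$ by time $i$, so this forces the path to be in the transient region of $\M_{\sigma_U}$ at time $i$; by (L1) this has probability $\leq c^{i}$. \textbf{(B)} ``The path enters some $C\in\mathcal{B}$ at time $\tau^{B}\leq i$ but some $(s',a')\in(C,A_C)$ is visited at most $i$ times in $[0,3i^{3}]$''. The window $[\tau^{B},3i^{3}]$ stays inside $C$ and has length $\geq 2i^{3}$; by (L2) and the strong Markov property applied at successive $\kappa$-step windows, the visit count to $(s',a')$ stochastically dominates $\mathrm{Bin}(\lfloor 2i^{3}/\kappa\rfloor,\alpha)$, and a Chernoff lower-tail estimate yields a probability bound of order $c^{\Theta(i^{3})}$ per pair; a union bound over the at most $\kappa$ pairs in each $C\in\mathcal{B}$ (and over $\mathcal{B}$) gives a contribution coarsely bounded by $i^{3}c^{i}$. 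Crucially, outside (A)$\cup$(B) we have $\tau^{B}\leq i$, so every transient state-action pair has been visited at most $\tau^{B}\leq i$ times and every pair outside $(C,A_C)$ is unvisited; thus $\app{i}=(C,A_C)$, which is an EC.

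Summing (A) and (B) gives a joint bad-probability of at most $2i^{3}c^{i}$; dividing by the denominator $\alpha=(p_{\min}/\maxE)^{\kappa}$ yields the claimed $2c^{i}i^{3}(p_{\min}/\maxE)^{-\kappa}$. \textbf{The main obstacle} is step (B): producing a concentration estimate on visits to each $(s',a')$ inside a BSCC that is strong enough to absorb the very small factor $\alpha$ that appears when passing from unconditional to conditional probability. The clean device is the strong Markov property at successive $\kappa$-step windows, turning the visit count into a binomial with parameters $(\Theta(i^{3}/\kappa),\alpha)$ on which Chernoff applies. A secondary delicate point is the book-keeping at time $\tau^{B}$: one must confirm that no transient or out-of-support pair accidentally accumulates more than $i$ visits, which is what makes the choice of the threshold $\tau^{B}\leq i$ (rather than some larger $t_{0}$) essential.
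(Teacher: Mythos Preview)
Your proposal is correct and follows essentially the same route as the paper: build the Markov chain induced by the fixed memoryless strategy $\sigma_U$, bound the probability of not being absorbed into a BSCC by time $i$ (your (A)), bound the probability that the entered BSCC is not fully covered more than $i$ times in the remaining $\Theta(i^3)$ steps (your (B)), identify that BSCC with an EC of $\mdp$, and pass to the conditional statement by dividing by the non-termination probability, which is at least $(p_{\min}/\maxE)^\kappa$. The paper's chain is on state--action pairs rather than states, but this is cosmetic; the one genuine difference is the device used in (B): the paper allots $i$ steps per ``hop'' through the BSCC (each hop failing with probability at most $2c^i$, via the cited Lemma~23 of~\cite{BKK11}) and union-bounds over the $(i{+}1)\kappa\le i^3$ hops, whereas you bin into $\kappa$-step windows and apply Chernoff to $\mathrm{Bin}(\lfloor 2i^3/\kappa\rfloor,\alpha)$.

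Two small points deserve explicit handling when you write this out. First, the iterated form of (L1) as stated, $\Pr[\text{transient at time }t]\le c^t$, is off by a constant: blocking into $\kappa$-step pieces only gives $(1-\alpha)^{\lfloor t/\kappa\rfloor}\le e^{\alpha}c^t$, which is why the paper carries the factor $2$ throughout. Second, your Chernoff bound only bites when the binomial mean $n\alpha\approx 2i^3\alpha/\kappa$ exceeds $i$, i.e.\ roughly when $i^2\gg\kappa/\alpha$; for smaller $i$ the quoted $c^{\Theta(i^3)}$ estimate does not follow. This is harmless because in that regime $2c^ii^3\ge 1$ and the lemma's conclusion is vacuous, but you should make that case split explicit. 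The paper's cycling-plus-union-bound argument sidesteps this by working uniformly for all $i\ge\kappa$, at the price of a weaker (but still sufficient) tail.
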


The above lemma allows us to modify the \textsc{Explore} phase of Algorithm~\ref{alg:skeleton} in such a way that simulations will be used to identify ECs. 
Discovered ECs will be subsequently collapsed.
We first present the overall skeleton (Algorithm~\ref{alg:skel-mec}) of the ``on-the-fly" ECs, which consists of two parts:
(i)~identification of ECs; and (ii)~processing them. 
The instantiations for BRTDP and DQL will differ in the identification 
phase.
Since the difference will be in the identification phase, before 
proceeding to the individual identification algorithms we first establish 
the correctness of the processing of ECs.

\begin{algorithm}[h]
\caption{Extension for general MDPs\label{alg:skel-mec}}
\begin{algorithmic}[1]
\Function{On-the-fly-EC}{}
\State $\mathcal{M} \gets \textsc{IdentifyECs} $ \Comment \textsc{Identification of ECs}
\ForAll{$(\xu,\xg)\in \mathcal{M}$ } \Comment \textsc{Process ECs}
 \State \collapse $(\xu,\xg)$ 
 \ForAll{$s\in \xu$ and $a\in E(s)\setminus \xg$}
  \State $U(s_{(\xu,\xg)},a) \gets U(s,a)$
  \State $L(s_{(\xu,\xg)},a) \gets L(s,a)$
 \EndFor
\If{$\xu \cap F \neq \emptyset$} 
\State \makeTerminal $(s_{(\xu,\xg)},\one)$
\ElsIf{no actions enabled in $s_{(\xu,\xg)}$} 
\State \makeTerminal $(s_{(\xu,\xg)},\zero)$ 
\EndIf
\EndFor
\EndFunction
\end{algorithmic}
\end{algorithm}

\begin{lemma}
\label{lem:proc}
Assume $(\xu,\xg)$ is an EC in MDP $\mdp$,  $V_\mdp$ the value before the \textsc{Process ECs} procedure in Algorithm~\ref{alg:skel-mec}, and $V_{\mdp'}$ the value after the procedure, then:
\begin{compactitem}
\item For $i \in \{0,1\}$ if $\makeTerminal(s_{(\xu,\xg)},i)$  is called, then $ \forall s\in \xu: \ V_{\mdp}({s})=i$;
\item $\forall s \in \states \setminus \xu: \  V_\mdp(s) = V_{\mdp'}(s);$
\item $\forall s \in \xu:  \  V_\mdp(s) = V_{\mdp'}(s_{(\xu,\xg)}).$
\end{compactitem}
\end{lemma}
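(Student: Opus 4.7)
My approach is to handle the terminal cases (first bullet) using the fundamental closure and reachability property of ECs, and then to establish the value preservation for the remaining two bullets by transferring strategies between $\mdp$ and $\mdp'$ in both directions. The crucial EC fact I rely on is the standard observation that for any EC $(\xu,\xg)$ there is a memoryless ``sticky'' strategy $\strat^{\xg}$ using only actions in $\xg$ under which, starting from any $s \in \xu$, every state of $\xu$ is visited infinitely often with probability $1$; this follows directly from closure of $\xg$ under $\Delta$ inside $\xu$ together with finite-state Markov chain theory.

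For the first bullet, consider the two subcases. If $\makeTerminal(s_{(\xu,\xg)},\one)$ is called then $\xu \cap F \neq \emptyset$; picking any $s^{\star} \in \xu \cap F$, the strategy $\strat^{\xg}$ reaches $s^{\star}$ from every $s \in \xu$ with probability $1$, so $V_\mdp(s)=1$. If $\makeTerminal(s_{(\xu,\xg)},\zero)$ is called then the first branch did not fire (so $\xu \cap F = \emptyset$) and there are no enabled actions at $s_{(\xu,\xg)}$, which by the definition of $\enab'$ means $\bigcup_{s\in \xu}\enab(s) \subseteq \xg$. Hence every action available at any state of $\xu$ lies in $\xg$, so by EC closure any strategy is trapped in $\xu$ forever and, since $F \cap \xu = \emptyset$, never reaches $F$; thus $V_\mdp(s) = 0$.

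For the remaining two bullets, let $\iota: \states \to \states'$ be the map sending $s \mapsto s$ for $s \notin \xu$ and $s \mapsto s_{(\xu,\xg)}$ for $s \in \xu$. I prove $V_\mdp(s) = V_{\mdp'}(\iota(s))$ by two inequalities. For $V_\mdp(s) \geq V_{\mdp'}(\iota(s))$, take an arbitrary strategy $\strat'$ in $\mdp'$ and construct $\strat$ in $\mdp$ that mimics $\strat'$ at every state outside $\xu$; whenever $\strat'$ picks action $a$ at $s_{(\xu,\xg)}$, the strategy $\strat$ first uses $\strat^{\xg}$ to navigate (inside $\xu$) to the unique state $s_a \in \xu$ with $a \in \enab(s_a)$, which succeeds with probability $1$, and then plays $a$. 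By construction of $\Delta'$, the one-step distributions over successors outside $\xu$ (respectively, the event of re-entering $\xu$) agree, so $\strat$ and $\strat'$ induce the same probability of reaching $F$.

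For $V_{\mdp'}(\iota(s)) \geq V_\mdp(s)$, it suffices by existence of memoryless deterministic optimal strategies for reachability to consider such a $\strat$ in $\mdp$. Since the first bullet has been excluded, $\xu$ contains at least one state where $\strat$ takes an exit action in $\bigcup_{s\in\xu}\enab(s) \setminus \xg$ (otherwise $\strat$ stays in $\xu$ forever, contributing $0$ to the reachability probability, which is trivially dominated). Using that $\strat^{\xg}$ reaches every state of $\xu$ a.s., a standard end-component argument shows that, conditioned on entering $\xu$, the probability under $\strat$ of eventually taking a particular exit pair $(s,a)$ defines a distribution $\mu$ over $\enab'(s_{(\xu,\xg)})$; setting $\strat'(s_{(\xu,\xg)}) = \mu$ yields an $\mdp'$-strategy whose one-step exit distribution matches $\strat$'s long-run exit distribution by construction of $\Delta'$, giving the desired inequality.

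The main obstacle I anticipate is the $(\leq)$ direction: projecting a strategy on $\mdp$ down to $\mdp'$ requires reasoning about a potentially unbounded sojourn of $\strat$ inside the EC and showing that the induced exit distribution can be realized as a single memoryless choice at the collapsed state. This is where the EC-connectivity property (and the exclusion of the trapping subcase by the first bullet) is essential, and also where care is needed to argue the two measures over paths induce equal reachability probabilities for $F$.
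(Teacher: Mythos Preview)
Your argument for the first bullet is essentially identical to the paper's: the paper also uses the uniform strategy on $\xg$ to reach all of $\xu$ almost surely in the $i=1$ case, and the closure-plus-disjointness argument in the $i=0$ case.

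For the second and third bullets the routes diverge. The paper does not prove these directly at all; it simply invokes Theorem~2 of~\cite{CBGK08}, which establishes that collapsing an EC preserves maximal reachability values. Your approach is instead a self-contained strategy-simulation argument in both directions, which is the standard way such a theorem is actually proved. So your proof is more elementary and keeps the paper self-contained, at the cost of having to manage the bookkeeping of the $(\leq)$ direction yourself.

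On that $(\leq)$ direction, one point deserves tightening: the exit distribution $\mu$ you describe depends on \emph{where} the path enters $\xu$, so it is not a single well-defined distribution to place at $s_{(\xu,\xg)}$. The clean fix is to first observe (via your sticky strategy $\strat^{\xg}$) that all states of $\xu$ share the same value $v$ in $\mdp$, then pick a single exit action $a^\star \in \bigcup_{s\in\xu}\enab(s)\setminus \xg$ maximising $V_\mdp(s_{a^\star},a^\star)$; the strategy ``navigate to $s_{a^\star}$ via $\strat^{\xg}$, then play $a^\star$'' already realises $v$ from every $s\in\xu$, and the corresponding $\mdp'$-strategy is simply the deterministic choice of $a^\star$ at $s_{(\xu,\xg)}$. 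This avoids the entry-point dependence and the sub-distribution issue you allude to in your final paragraph.
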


\subsection{Complete information}

\noindent{\bf Modification of Algorithm~\ref{alg:skeleton}.}
To obtain BRTDP working with unrestricted MDPs, we modify Algorithm~\ref{alg:skeleton} as follows:
for iteration $i$ of the \textsc{Explore} phase, we insert a check after line 9 that if 
the length of the path $\omega$ (which is the number of states in $\omega$) explored is $k_i$, 
then we invoke the \otf{} function for BRTDP. 
The \otf{} function possibly modifies the MDP by processing (collapsing) some ECs 
as described in Algorithm~\ref{alg:skel-mec}.
After the \otf{} function terminates, we interrupt the current \textsc{Explore} phase, 
and start the \textsc{Explore} phase for the $i+1$-th iteration (i.e., 
generating a new path again, starting from $\sinit$ in the modified MDP).
To complete the algorithm description we describe the choice of $k_i$ and 
identification of ECs.

\smallskip\noindent{\bf Choice of $k_i$.}
We do not call \otf{} every time a new state is explored, but only after
every $k_i$ steps of the repeat-until loop at lines~\ref{ln:explore-begin}--\ref{ln:explore-end} in 
iteration $i$.
The specific value of $k_i$ can be decided experimentally and change as the computation progresses,
a reasonable choice for $k_i$ to ensure that there is an EC with high probability can be 
obtained from Lemma~\ref{lemma:mec-explore}.

\smallskip\noindent{\bf Identification of ECs: Algorithm~\ref{alg:brtdp-mec}.}
Given the current explored path $\omega$, let $(T,G)$ be $\app{0}$, that is
the set of states and actions explored in $\omega$.
To obtain the ECs from the set $T$ of explored states, 
Algorithm~\ref{alg:brtdp-mec} computes an auxiliary MDP $\mdp^T=\langle T',\sinit,\act', E', \Delta'\rangle$ 
defined as follows:
\begin{itemize}
 \item $T' = T\cup \{t \mid \exists s\in T, a\in \enab(s)\text{ such that }\Delta(s,a)(t) > 0\}$,
 \item $\act' = \bigcup_{s\in T} \enab(s) \cup \{a_\bullet\} $
 \item $E'(s)= E(s)$ if $s\in T$ and $E'(s) = \{a_\bullet\}$ otherwise
 \item $\Delta'(s,a) = \Delta(s,a)$ if $s\in T$, and $\Delta'(s,a_\bullet)(s) = 1$ otherwise. 
\end{itemize}
Then the algorithm computes all MECs of $\mdp^T$ that are contained in $T$ and 
identifies them as ECs.
The following lemma establishes that every EC identified is indeed one 
in the original MDP.
\begin{algorithm}
\caption{Identification of ECs for BRTDP\label{alg:brtdp-mec}}
\begin{algorithmic}[1]
\Function{\textsc{IdentifyECs}}{$\mdp$, $T$}
\State compute $\mdp^T$
\State $\mathcal{M}' \gets \text{ MECs of } \mdp^T$
\State $\mathcal{M} \gets \{(\xu,\xg)\in \mathcal{M}'\mid \xu \subseteq T \}$
\EndFunction
\end{algorithmic}
\end{algorithm}
\begin{lemma}\label{lem:mec-is-ec}
Let $\mdp$ and $\mdp^T$ be the MDPs from the construction above and $T$ the corresponding set of explored states.
 Then every MEC $(\xu,\xg)$ in $\mdp^T$ such that $\xu \subseteq T$ is an EC in $\mdp$.
\end{lemma}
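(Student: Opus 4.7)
The plan is to unpack both definitions and observe that the construction of $\mdp^T$ preserves the full transition structure of $\mdp$ on states of $T$, so any EC of $\mdp^T$ that happens to be contained in $T$ is automatically an EC of $\mdp$.

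Concretely, I would first record the key compatibility property between $\mdp$ and $\mdp^T$: for every $s \in T$, the definition gives $E'(s) = E(s)$ and $\Delta'(s,a) = \Delta(s,a)$ for every $a \in E(s)$. The only new actions introduced in $\mdp^T$, namely the self-loop $a_\bullet$, are enabled exclusively in the ``frontier'' states $T' \setminus T$, so they cannot appear in any $B$ whose source states all lie in $T$.

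Now fix an MEC $(\xu,\xg)$ of $\mdp^T$ with $\xu \subseteq T$. I would verify the two defining clauses of an EC in $\mdp$. For the closure clause, take $s \in \xu$, $a \in \xg$, and $s' \in \states$ with $\Delta(s,a)(s') > 0$; since $s \in T$ we have $\Delta(s,a)(s') = \Delta'(s,a)(s')$, and because $(\xu,\xg)$ is an EC of $\mdp^T$ this forces $s' \in \xu$. For the reachability clause, given $s, s' \in \xu$, pick the witnessing path in $\mdp^T$ using only actions from $\xg$; by the closure clause all intermediate states lie in $\xu \subseteq T$, hence every action and transition probability along this path coincides with its counterpart in $\mdp$, so the same sequence is a valid path in $\mdp$. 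Finally, $\xg \subseteq \bigcup_{s \in \xu} E'(s) = \bigcup_{s \in \xu} E(s)$, which gives the remaining well-formedness condition.

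There is no real obstacle here; the only point requiring a line of care is to observe that $a_\bullet$ never contaminates $\xg$, which is immediate because $a_\bullet$ is enabled only in states outside $T$ while $\xg$ is drawn from actions enabled at states of $\xu \subseteq T$. Hence $(\xu, \xg)$ satisfies all clauses of the EC definition in $\mdp$, which is what is required.
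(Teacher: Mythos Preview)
Your proof is correct and follows essentially the same approach as the paper's own proof: both argue that the defining clauses of an EC transfer from $\mdp^T$ to $\mdp$ because $E'(s)=E(s)$ and $\Delta'(s,a)=\Delta(s,a)$ for all $s\in\xu\subseteq T$. You are slightly more explicit than the paper in noting that $a_\bullet$ cannot occur in $\xg$ and in spelling out why the reachability path stays inside $T$, but the underlying argument is identical.
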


Finally, we establish that the modified algorithm, which we refer as OBRTDP (on-the-fly BRTDP),
almost surely converges; and the proof is an extension of Theorem~\ref{thm:BRTDP-nomec}. 

\begin{theorem}\label{thm:BRTDP-mec}
The OBRTDP (on-the-fly BRTDP) algorithm converges almost surely for all MDPs. %
\end{theorem}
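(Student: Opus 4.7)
Plan: The strategy is to reduce to Theorem~\ref{thm:BRTDP-nomec} by arguing that OBRTDP performs only finitely many EC collapses, and that after the last one the algorithm is effectively running BRTDP on an EC-free MDP. Three things must be checked: soundness of the bounds $L,U$ across collapses, finiteness of the collapse operations, and almost-sure detection of every EC that would otherwise trap the Explore phase.

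First I would establish the soundness invariant $L(s,a)\le V(s,a)\le U(s,a)$ throughout the whole run. Between collapses this is the standard BRTDP argument (the \textsc{Update} of Algorithm~\ref{alg:RTDP} is a monotone Bellman backup). At a collapse step, Lemma~\ref{lem:mec-is-ec} guarantees that every $(\xu,\xg)$ identified by \textsc{IdentifyECs} really is an EC of the current MDP, and Lemma~\ref{lem:proc} then yields $V_{\mdp'}(s_{(\xu,\xg)})=V_{\mdp}(s)$ for every $s\in \xu$, and $V_{\mdp'}(s)=V_{\mdp}(s)$ elsewhere; moreover, \makeTerminal{} is only called when the true value is $0$ or $1$. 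Since the only actions enabled at $s_{(\xu,\xg)}$ are those in $\bigcup_{s\in \xu}E(s)\setminus \xg$, whose $U,L$ values are copied in lines~6--7 of Algorithm~\ref{alg:skel-mec}, the invariant transfers to the new MDP.

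Next, I would observe that every collapse strictly reduces the number of non-terminal states of the current MDP, so only finitely many collapses can occur along any run. Let $\mdp^\ast$ denote the MDP obtained after the last collapse on a given run. By construction of \textsc{IdentifyECs} together with Lemma~\ref{lem:mec-is-ec}, the reachable part of $\mdp^\ast$ can contain no non-trivial EC that is ever fully explored by the algorithm from that point on, since otherwise a subsequent Explore phase would trigger another collapse, contradicting maximality.

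The main step, and the expected obstacle, is showing that with probability one the algorithm reaches such a stable $\mdp^\ast$. Suppose, for contradiction, that at some stage the current MDP contains a non-trivial EC $(\xu,\xg)$ reachable from $\sinit$ under the greedy strategy $\sigma_U$. Since the soundness invariant keeps $U\equiv 1$ inside any such EC (as illustrated in Figure~\ref{fig:noConverge}), the $\sigma_U$-induced Markov chain has positive probability bounded away from $0$ of entering $(\xu,\xg)$ and never leaving; conditioned on this event, the Explore phase of Algorithm~\ref{alg:skeleton} fails to hit $\{0,1\}$. Lemma~\ref{lemma:mec-explore} then gives, for $i\ge \kappa$ and runs of length at least $3i^3$, that $\app{i}$ is an EC of $\mdp$ with probability at least $1-2c^i i^3(p_{\min}/\maxE)^{-\kappa}$, a quantity tending to $1$. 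Choosing the cutoff $k_i$ to grow so that this probability stays above a fixed constant on each iteration, a Borel--Cantelli argument across the infinitely many Explore phases forces almost-sure identification of the trapping EC, which is then collapsed by \textsc{On-the-fly-EC}. Thus almost surely only finitely many Explore phases fail to terminate at $\{0,1\}$, and the sequence of MDPs stabilises at some $\mdp^\ast$ whose only ECs reached by $\sigma_U$ are the trivial ones at $\one$ and $\zero$.

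From that point on, the algorithm behaves exactly like BRTDP on an EC-free MDP with the preserved soundness invariant and the preserved value function. Theorem~\ref{thm:BRTDP-nomec} then yields almost-sure convergence $U(\sinit)-L(\sinit)\to 0$, so the termination condition $U(\sinit)-L(\sinit)<\epsilon$ is met almost surely, completing the proof. The delicate point to justify carefully is the lower bound on the probability that $\sigma_U$ actually enters and remains in a given EC long enough to be detected; this uses that $U$ on the EC stays pinned to its pre-collapse value so the greedy tie-breaking repeatedly favours EC-internal actions.
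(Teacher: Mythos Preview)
Your outline matches the paper's proof closely: both establish the $L\le V\le U$ invariant through collapses (via Lemmas~\ref{lem:mec-is-ec} and~\ref{lem:proc}), note that collapsing can happen only finitely often, argue by contradiction that any EC reached after the fixing point would trigger another collapse, and then run the gap argument from the EC-free case.

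Two technical points differ and are worth tightening. First, your black-box appeal to Theorem~\ref{thm:BRTDP-nomec} does not quite go through as stated: the stabilised MDP $\mdp^\ast$ need not satisfy Assumption-EC globally, since ECs that are never reached under any of the evolving strategies $\sigma_{U_i}$ may survive uncollapsed. The paper therefore does not invoke Theorem~\ref{thm:BRTDP-nomec} but instead re-runs the $\delta$-argument directly on $\mdp^\ast$, proving only that the max-gap set $D\subseteq S'_\infty$ cannot contain an EC using actions from $A'_\infty$ (the actions played infinitely often after the fixing point); that localised statement is what the contradiction actually delivers and is all that is needed. Your reduction can be repaired by restricting to the states and actions used infinitely often, but at that point you are essentially reproducing the paper's inline argument.

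Second, your detection step leans on Lemma~\ref{lemma:mec-explore} together with Borel--Cantelli, which is the mechanism for the limited-information (DQL) variant. For BRTDP the identification procedure is Algorithm~\ref{alg:brtdp-mec}, which computes MECs of the explored sub-MDP $\mdp^T$ \emph{deterministically}. The paper's argument is accordingly simpler: if an EC $(D',G)$ with $U\equiv 1$ is visited infinitely often, then almost surely some simulation of length $k_i\ge|S|$ visits all of $(D',G)$, so $(D',G)\subseteq\app{0}=T$ and the MEC computation finds it and collapses it, contradicting stabilisation. No probabilistic bound on $\app{i}$ is required; Lemma~\ref{lemma:mec-explore} is cited in the BRTDP section only to motivate the choice of $k_i$, not for correctness.
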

\subsection{Limited information}
We now present the on-the-fly algorithm for DQL. 
The three key aspects to describe are as follows: (i)~modification of 
Algorithm~\ref{alg:skeleton} and identification of ECs;
(ii)~interpretation of collapsing of ECs; and (iii)~the correctness 
argument.

\smallskip\noindent{\bf Modification Algorithm~\ref{alg:skeleton} 
and identification of ECs.}
The modification of Algorithm~\ref{alg:skeleton} is done exactly as 
for the modification of BRDTP (i.e., we insert a check after line 9
of \textsc{Explore}, which invokes the \otf\ function if the length of
path $\omega$ exceeds $k_i$). 
In iteration $i$, we set $k_i$ as $3 \ell_i^3$, for some $\ell_i$ 
(to be described later). 
The identification of the EC is as follows: 
we consider $\app{\ell_i}$ the set of states and actions that have appeared
more than $\ell_i$ times in the explored path $\omega$, which is of length 
$3\ell_i^3$, and identify the set as an EC;
i.e., $\mathcal{M}$ in line 2 of Algorithm~\ref{alg:skel-mec} is defined
as the set containing the single tuple $\app{\ell_i}$.
We refer the algorithm as ODQL (on-the-fly DQL).

\smallskip\noindent{\bf Interpretation of collapsing.}
We now describe the interpretation of collapsing for MDPs when the 
algorithm has limited information. 
Intuitively, once an EC $(\xu,\xg)$ is collapsed, the algorithm in 
the \textsc{Explore} phase can choose a state $s\in \xu$ and action $a \in E(s) \setminus \xg$
to leave the EC. 
This is simulated in the \textsc{Explore} phase by considering all actions of the 
EC uniformly at random until $s$ is reached, and then action $a$ is chosen.
Since $(\xu,\xg)$ is an EC, playing all actions of $B$ uniformly at random ensures
that $s$ is almost surely reached. Note, that the steps made inside a collapsed EC
do not count to the length of the explored path.

\smallskip\noindent{\bf Choice of $\ell_i$ and correctness.}
The choice of $\ell_i$ is as follows. 
Note that in iteration $i$, the error probability, obtained from Lemma~\ref{lemma:mec-explore},
is at most $2c^{\ell_i} \ell_i^3 \cdot \left(p_{\min}/\maxE\right)^{-\kappa}$ and we choose $\ell_i$ such that 
$2c^{\ell_i} \ell_i^3 \cdot \left(p_{\min}/\maxE\right)^{-\kappa} \leq \frac{\delta/2}{2^{i}}$, 
where $\delta$ is the error tolerance.
Note that since $c<1$, we have that $c^{\ell_i}$ decreases exponentially, 
and hence for every $i$ such $\ell_i$ exists. 
It follows that the total error of the algorithm due to the on-the-fly EC
collapsing is at most $\delta/2$.
It follows from the proof of Theorem~\ref{thm:dql}
that for ODQL the error is at most $\delta$ if we use the same $\bar\epsilon$ as for DQL, but now with DQL error tolerance $\delta/4$, i.e. with $m= \frac{\ln(24 |\states||\act| ( 1 + \frac{|\states||\act|}{\bar\eps})/\delta)}{2 {\bar\eps}^2}$.
\begin{theorem}
\label{thm:odql}
ODQL (on-the-fly DQL) is probably approximately correct for all MDPs. %
\end{theorem}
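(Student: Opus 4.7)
The plan is to combine three ingredients: the PAC guarantee of DQL on EC-free MDPs (\thmref{thm:dql}), the high-probability EC-identification bound from \lemref{lemma:mec-explore}, and the value-preservation property of collapsing from \lemref{lem:proc}. At a high level, the total error must be split between two sources: the chance that some \otf{} call misidentifies a non-EC as an EC (and thus collapses something it should not), and the usual DQL sampling error inherited from \thmref{thm:dql}.

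First, I would bound the EC-identification error. For iteration $i$, the \textsc{Explore} phase runs for $3\ell_i^3$ steps before invoking \otf{}, and by \lemref{lemma:mec-explore} the conditional probability that $\app{\ell_i}$ fails to be an EC (given the simulation has not reached $\{\zero,\one\}$) is at most $2c^{\ell_i} \ell_i^3 \cdot (p_{\min}/\maxE)^{-\kappa}$. Since $c<1$ this quantity shrinks exponentially in $\ell_i$, so one can pick $\ell_i$ large enough that this bound is at most $\delta/2^{i+1}$. A union bound over $i \geq 1$ then gives total EC-misidentification probability at most $\sum_{i\geq 1}\delta/2^{i+1} = \delta/2$.

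Second, I would argue that, conditioned on every \otf{} invocation correctly identifying an EC, the on-the-fly modifications do not distort the reachability value. By \lemref{lem:proc}, each \collapse{} together with the subsequent \makeTerminal{} calls preserves $V$ on the non-collapsed states and assigns to the collapsed representative the common value of the EC's states; the update in \algref{alg:skel-mec} that copies $U(s,a),L(s,a)$ to $U(s_{(\xu,\xg)},a),L(s_{(\xu,\xg)},a)$ for $a \in E(s)\setminus \xg$ maintains the invariant $L \leq V \leq U$ across the transition. Using the interpretation of collapsing described in the text (upon entering a collapsed EC, all actions of $\xg$ are sampled uniformly at random until a chosen exit action is taken, almost surely in finite time), the sequence of successively collapsed MDPs that ODQL actually ``sees'' is always EC-free (up to the two trivial terminal ECs), so Assumption-EC is maintained throughout.

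Third, conditioned on the same event, I would invoke the proof of \thmref{thm:dql} on this EC-free process. Since collapsing only decreases $|\states|$ and $|\act|$ and does not introduce smaller transition probabilities (products of old probabilities may only appear multiplied together with the guaranteed $p_{\min}$ bound preserved), the conservative values of $\bar\eps$ and $m$ chosen from $\statebound$, $\maxE$, and $p_{\min}$ remain valid. Running DQL with error tolerance $\delta/4$ in the formulas for $\bar\eps$ and $m$ makes the DQL sampling error at most $\delta/2$. A final union bound yields total error at most $\delta/2 + \delta/2 = \delta$, which is the PAC guarantee.

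The main obstacle, and the one I would be most careful about, is the interaction between the dynamic collapsing and DQL's accumulator-based updates: one must check that the delayed estimator $\accu^U(s,a)/m$ computed before a collapse remains a valid $\bar\eps$-accurate estimate after surrounding ECs disappear, and that counters $c(s,a)$ and LEARN flags can be inherited (or safely reset) so that the key monotonicity/accuracy invariants of the DQL proof of \thmref{thm:dql} survive each \otf{} invocation. Once that bookkeeping is handled, the two error sources decouple cleanly and the above union-bound argument closes the proof.
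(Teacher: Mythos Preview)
Your high-level error budget and the use of \lemref{lemma:mec-explore} and \lemref{lem:proc} match the paper. The gap is in your second step: you assert that ``the sequence of successively collapsed MDPs that ODQL actually `sees' is always EC-free (up to the two trivial terminal ECs)''. This is not true and is where the real difficulty lies. ODQL only collapses an EC when a simulation gets stuck in it; ECs that are never entered, or entered only finitely often, are never collapsed. Consequently, even conditioning on all \otf{} calls being correct, the MDP on which DQL runs can retain nontrivial ECs indefinitely, so you cannot invoke \thmref{thm:dql} under Assumption-EC.

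The paper closes this gap in two moves you are missing. First, an ``EC-dichotomy'' lemma: with high probability, after some finite stage each remaining EC of the (no-longer-changing) collapsed MDP is either never visited again, or is visited infinitely often and has value $1$ (because the greedy-on-$U$ strategy keeps $U=1$ on its actions, so the only reachable bottom SCC through it is $\one$). Second, the DQL correctness proof is actually established not under Assumption-EC but under the weaker Assumption that every state in an EC other than $\zero$ has value $1$; this relaxed version (with conservative initial $L\le V\le U$) is what is applied to the eventually-fixed MDP. Your sketch would need both pieces; without them, the step ``so Assumption-EC is maintained throughout'' fails, and the appeal to \thmref{thm:dql} is unjustified. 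The bookkeeping worry you flag about accumulators across collapses is handled in the paper by waiting until collapsing stops and then treating the remainder as a fresh DQL run with conservatively improved bounds, which also avoids the delicate tracking you anticipate.
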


\newcommand{\RQ}{\mathit{M}}
\newcommand{\RS}{\mathit{N}}
\newcommand{\ov}{\overline}

\subsection{Extension to LTL}
While in this work we focus on probabilistic reachability, our techniques also extend to 
analysis of MDPs with linear temporal logic (LTL) objectives.
Given an LTL objective describing a set of desired infinite paths, the objective can be converted to 
deterministic $\omega$-automaton~\cite{DBLP:conf/lics/VardiW86,DBLP:conf/focs/Safra88,KE12,CGK13}, and thus analysis
of MDPs with LTL objectives reduces to analysis of MDPs with $\omega$-regular condition
such as Rabin acceptance conditions~\cite{Thomas96languages}.
A Rabin acceptance condition consists of a set $\{ (\RQ_1,\RS_1) \ldots (\RQ_d,\RS_d)\}$
of $d$ pairs $(\RQ_i,\RS_i)$, where each $\RQ_i \subseteq S$ and $\RS_i 
\subseteq S$.
The acceptance condition requires that for some $1\leq i\leq d$ states in $\RQ_i$ are visited
infinitely often and states in $\RS_i$ are visited finitely often. 

The value computation for MDPs with Rabin objectives is achieved as follows: an EC $(\xu,\xg)$ is 
\emph{winning} if for some $1\leq i \leq d$ we have $\xu \cap \RQ_i \neq \emptyset$ and 
$\xu \cap \RS_i =\emptyset$, and the value computation reduces to probabilistic reachability to winning
ECs~\cite{DBLP:conf/icalp/CourcoubetisY90}.
Thus extension of our results from reachability to Rabin objectives requires processing of ECs for 
Rabin objectives (line 3-11 of Algorithm~\ref{alg:skel-mec}). 
The principle of processing ECs is as follows: 
Given an EC $(\xu,\xg)$ is identified, we first obtain the EC in the original MDP (i.e., obtain the set of states and actions 
corresponding to the EC in the original MDP) as $(\ov{\xu},\ov{\xg})$ and then determine if there is a sub-EC of $(\ov{\xu},\ov{\xg})$ that is winning using standard algorithms for MDPs with Rabin objectives~\cite{DBLP:books/daglib/0020348}; 
and if so then we merge the whole EC as in line 9 of Algorithm~\ref{alg:skel-mec}; if not, and moreover, there is no action out of the EC, we merge as in line 11 of Algorithm~\ref{alg:skel-mec}.
With the modified EC processing we obtain OBRTDP and ODQL for MDPs with Rabin objectives.

\section{Experimental Results}\label{sec:expt}

\startpara{Implementation}
We developed an implementation of our learning-based framework
as an extension of the PRISM model checker~\cite{KNP11},
building upon its simulation engine for generating trajectories.%
We focus on the complete-information case, i.e., BRTDP,
for which we can perform a more meaningful comparison with PRISM.
We implement \algalgref{alg:skeleton}{alg:RTDP},
and the on-the-fly EC detection algorithm of \sectref{sec:mec},
with the optimisation of taking $T$ as the set of all states explored so far.

We consider three distinct variants of the learning algorithm,
by modifying the \textsc{GetSucc} function in~\algref{alg:skeleton},
which is the heuristic responsible for picking a successor state $s'$
after choosing some action $a$ in each state $s$ of a trajectory.
The first variant takes the unmodified \textsc{GetSucc}, selecting
$s'$ at random according to the distribution $\Delta(s,a)$. This behaviour
follows the one of the original RTDP algorithm~\cite{BBS95}.
The second uses the heuristic proposed for BRTDP in \cite{MBLG05},
selecting the successor $s'\in\support{\Delta(s,a)}$ that maximises the
difference $U(s')-L(s')$ between bounds for those states.
For the third, we propose an alternative approach
that systematically chooses all successors $s'$
in a round-robin (R-R) fashion, and guarantees sure termination.

\startpara{Results}
We evaluated our implementation on four existing benchmark models, using %
a machine with a 2.8GHz Xeon processor and 32GB of RAM, running Fedora~14.
We use three models from the PRISM benchmark suite~\cite{KNP12b}:
\emph{zeroconf}, %
\emph{wlan}, %
and \emph{firewire\_impl\_dl}; %
and a fourth one from \cite{FKP11}: \emph{mer}.
The first three use (unbounded) probabilistic reachability properties;
the fourth a time-bounded property.
The latter is used to show differences
between heuristics that were less visible in the unbounded case.
%
%

%
%
%

%
%
%
%

We run BRTDP and compare its performance to PRISM.
We terminate it when the bounds $L$ and $U$ differ by at most $\epsilon$
for the initial state of the MDP. We use $\epsilon=10^{-6}$ in all cases
except \emph{zeroconf}, where $\epsilon=10^{-8}$ is used since the actual values are very small.
For PRISM, we use its fastest engine, which is the ``sparse'' engine,
running value iteration. This is terminated when the values for all states
in successive iterations differ by at most $\epsilon$.
Strictly speaking, this is not guaranteed to produce an $\epsilon$-optimal strategy
(e.g. in the case of very slow numerical convergence), but on all these examples it does.

The experimental results are summarised in Table \ref{heuristics_table}.
For each model, we give the
the number of states in the full model and the time for PRISM
(model construction, precomputation of zero/one states and value iteration)
and for BRTDP with each of the three heuristics described earlier.
All times have been averaged over 20 runs.

\begin{table}[!t]
\vspace*{-3em}
\begin{center}	
\begin{tabular}{|c|c|r||c|c|c|c|}
\hline
\multirow{2}{1.6cm}{\centering Name\\{[param.s]}} & 
\multirow{2}{2.0cm}{\centering $\,$Param.$\,$\\values} & 				
\multirow{2}{1.8cm}{\centering Num.\\states} &
\multicolumn{4}{c|}{Time (s)} \\\cline{4-7}
& & &
{\centering PRISM}&
{\centering RTDP}&
{\centering BRTDP}&
{\centering $\ $R-R$\ $}\\
\hline
\hline
\multirow{3}{2.1cm}{\centering \emph{zeroconf}\\$[N, K]$}
 & $20, 10$ & 3,001,911 & 129.9 & 7.40 & 1.47 & 1.83\\
 & $20, 14$ & 4,427,159 & 218.2 & 12.4 & 2.18 & 2.26\\
 & $20, 18$ & 5,477,150 & 303.8 & 71.5 & 3.89 & 3.73\\

\hline
\multirow{3}{2.1cm}{\centering \emph{wlan}\\$[BOFF]$}
 & $4$ & 345,000 & 7.35 & 0.53 & 0.48 & 0.54\\
 & $5$ & 1,295,218 & 22.3 & 0.55 & 0.45 & 0.54\\
 & $6$ & 5,007,548 & 82.9 & 0.50 & 0.43 & 0.49\\
\hline

\multirow{3}{2.4cm}{\centering \emph{firewire\_impl\_dl}\\$[delay,$\\$deadline]$}
 & $36, 200$ & 6,719,773 & 63.8 & 2.85 & 2.62 & 2.26\\
 & $36, 240$ & 13,366,666 & 145.4 & 8.37 & 7.69 & 6.72\\
 & $36, 280$ & 19,213,802 & 245.4 & 9.29 & 7.90 & 7.39\\
\hline

\multirow{4}{2.4cm}{\centering \emph{mer}\\$[N, q]$}
 & $3000, 0.0001$ & 17,722,564 & 158.5 & 67.0 & 2.42 & 4.44\\
 & $3000, 0.9999$ & 17,722,564 & 157.7 & 10.9 & 2.82 & 6.80\\
 & $4500, 0.0001$ & 26,583,064 & 250.7 & 67.3 & 2.41 & 4.42\\
 & $4500, 0.9999$ & 26,583,064 & 246.6 & 10.9 & 2.84 & 6.79\\

\hline
\end{tabular}
\end{center}
\caption{Verification times using BRTDP (three different heuristics) and PRISM.}
\label{heuristics_table}
\end{table}

We see that our method outperforms PRISM on all four benchmarks.
The improvements in execution time on these benchmarks are possible because
the algorithm is able to construct an $\epsilon$-optimal policy
whilst exploring only a portion of the state space.
The number of distinct states visited by the algorithm is, on average,
three orders of magnitude smaller that the total size of the model
(column `Num. states') and reachable state space under the optimal adversary
contains hundreds of states.

The RTDP heuristic is generally the slowest of the three,
and tends to be sensitive to the probabilities in the model.
In the \emph{mer} example, changing the parameter $q$
can mean that some states, which are crucial for the convergence of the algorithm,
are no longer visited due to low probabilities on incoming transitions.
This results in a considerable slow-down.
This is a potential problem for MDPs containing rare events i.e. modelling failures that
occur with very low probability.
The BRTDP and R-R heuristics perform very similarly, despite being quite
different (one is randomised, the other deterministic).
Both perform consistently well on these examples.

\vspace{-0.5em}

\section{Conclusions}

We have presented a framework for verifying MDPs using learning algorithms.
Building upon methods from the literature, we provide novel techniques to analyse
unbounded probabilistic reachability properties of arbitrary MDPs,
yielding either exact bounds, in the case of complete information,
or probabilistically correct bounds, in the case of limited information.
Given our general framework, one possible direction would be to explore
other learning algorithms in the context of verification.
Another direction of future work is to explore whether learning algorithms
can be combined with symbolic methods for probabilistic verification.\\

\paragraph{Acknowledgement} We thank Arnd Hartmanns and anonymous reviewers for careful reading and valuable feedback.

\bibliographystyle{splncs03}
\bibliography{smc2}

\newpage

\appendix

\section{Proof of Theorem~\ref{thm:BRTDP-nomec}: Correctness of BRTDP}\label{app:mecs}
Assume that there are no ECs in $\mdp$
with the exception of (trivial) components containing two distinguished terminal states $\one$ (the only target state) and $\zero$ (a ``sink'' state).
Consider Algorithm~\ref{alg:skeleton} with \textsc{Update} defined in Algorithm~\ref{alg:RTDP}, but now with line 17 being ``\textbf{until} false'', i.e. iterating the outer repeat loop ad infinitum. Denote the functions $U$ and $L$ after $i$ iterations by $U_i$ and $L_i$, respectively.

\begin{lemma}\label{lem:monotone-brtdp-nomec}
For every $i\in\mathbb N$, all $s\in S$ and $a\in A$, 
\[U_1(s,a)\geq \cdots \geq U_i(s,a)\geq V(s,a) \geq L_i(s,a)\geq \cdots\geq  L_1(s,a)\]
\end{lemma}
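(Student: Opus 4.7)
The statement bundles two things: that $U_i$ stays $\geq V$ and $L_i$ stays $\leq V$, and that the sequences are monotone. The clean way to obtain both at once is to prove them for a single \emph{atomic} update (one pop off the stack) and then note that one iteration of the outer loop is just a finite composition of such atomic updates. I will therefore strengthen the induction hypothesis, inducting not on outer iterations but on the total number of atomic \textsc{Update} calls. The strengthened invariant maintained by the procedure is:
\begin{itemize}
\item[(i)] $L(s,a)\leq V(s,a)\leq U(s,a)$ for every state-action pair;
\item[(ii)] $U$ is \emph{superharmonic}: $U(s,a)\geq \sum_{s'}\Delta(s,a)(s')\max_{a'\in\enab(s')} U(s',a')$;
\item[(iii)] $L$ is \emph{subharmonic}: $L(s,a)\leq \sum_{s'}\Delta(s,a)(s')\max_{a'\in\enab(s')} L(s',a')$.
\end{itemize}

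\textbf{Base case.} After the initialisation on lines 2--3, we have $U(\cdot,\cdot)\equiv 1$ except $U(\zero,\cdot)=0$, and $L(\cdot,\cdot)\equiv 0$ except $L(\one,\cdot)=1$. Since $V(s,a)\in[0,1]$ always, and $V(\one)=1$, $V(\zero)=0$, (i) is immediate. Invariants (ii) and (iii) also hold trivially: the right-hand sides sit in $[0,1]$ and equal $U,L$ on the terminals $\one,\zero$ (which are absorbing), while on non-terminal states $U=1$ dominates and $L=0$ is dominated.

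\textbf{Inductive step.} Suppose (i)--(iii) hold with functions $U,L$, and the next atomic update rewrites $(s_0,a_0)$ to $U'(s_0,a_0)=\sum_{s'}\Delta(s_0,a_0)(s')\max_{a'} U(s',a')$, leaving $U'(\hat s,\hat a)=U(\hat s,\hat a)$ elsewhere. Then (ii) applied at $(s_0,a_0)$ gives $U'(s_0,a_0)\leq U(s_0,a_0)$, so $U'\leq U$ pointwise. Monotonicity of the Bellman operator now does all the work: at $(s_0,a_0)$ the new value equals the old Bellman evaluation, which dominates the Bellman evaluation under $U'\leq U$, so (ii) is preserved there; at every other pair, the LHS is unchanged while the RHS can only have decreased. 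For (i), use the fact that $V$ is a fixed point: $V(s_0,a_0)=\sum_{s'}\Delta(s_0,a_0)(s')\max_{a'} V(s',a')\leq \sum_{s'}\Delta(s_0,a_0)(s')\max_{a'} U(s',a')=U'(s_0,a_0)$, and the other pairs are unchanged. The symmetric argument, reading $\geq$ as $\leq$ throughout, handles the $L$-update: (iii) supplies $L'(s_0,a_0)\geq L(s_0,a_0)$, hence $L'\geq L$ pointwise, which preserves (iii) everywhere by the same monotonicity, and preserves $L'\leq V$ because $V$ is a fixed point.

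\textbf{Conclusion and obstacle.} Chaining the atomic steps performed during outer iterations $1,\ldots,i$ yields $U_1\geq U_2\geq\cdots\geq U_i\geq V$ and $L_1\leq L_2\leq\cdots\leq L_i\leq V$ pointwise, which is the claimed chain. The only subtle point, and the one the plan is designed to handle, is that during a single backward sweep the update of $(s_{k},a_{k})$ uses the value $\max_{a'}U(s_{k+1},a')$ that may itself have been updated one step earlier in the same sweep; phrasing the induction per atomic update, and carrying the superharmonicity invariant rather than a weaker one, sidesteps this issue because the Bellman operator is monotone in its argument. Note that Assumption-EC is not even needed for this particular lemma; it will be used later to ensure that the algorithm actually terminates, but correctness of the bounds at every finite stage holds for any MDP with the specified initialisation.
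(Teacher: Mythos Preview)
Your proof is correct and is precisely the careful version of the paper's two-word proof (``Simple induction.''). The only thing you add beyond what the paper makes explicit is carrying the super/subharmonic invariants (ii)--(iii), which is exactly the standard strengthening one needs so that an asynchronous Bellman update is guaranteed to move $U$ downward and $L$ upward; your observation that Assumption-EC plays no role here is also correct.
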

\begin{proof}
Simple induction. \qed
\end{proof}

\begin{lemma}
$\lim_{i\to\infty}(U_i(\sinit)-L_i(\sinit))=0$ almost surely.
\end{lemma}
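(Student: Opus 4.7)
My plan is as follows. By Lemma~\ref{lem:monotone-brtdp-nomec} the sequences $U_i(s,a)$ and $L_i(s,a)$ are monotone and bounded, so the pointwise limits $U_\infty(s,a)=\lim_i U_i(s,a)$ and $L_\infty(s,a)=\lim_i L_i(s,a)$ exist and satisfy $L_\infty(s,a)\le V(s,a)\le U_\infty(s,a)$. Writing $U_\infty(s)=\max_{a\in\enab(s)}U_\infty(s,a)$ and likewise for $L_\infty$, it suffices to show $U_\infty(\sinit)=L_\infty(\sinit)=V(\sinit)$ almost surely. The argument will be essentially the same for $U$ and $L$, so I describe only the $U$ side.

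The first ingredient is that Assumption-EC guarantees that, under any strategy and from any state, $\{\one,\zero\}$ is reached with probability $1$: otherwise there would be a bottom strongly connected component of the induced Markov chain outside $\{\one,\zero\}$, which would contain a non-trivial EC. Consequently each \textsc{Explore} phase terminates almost surely, so infinitely many iterations of the outer loop occur with probability $1$, and the sample space can be partitioned into a union of paths of finite length from $\sinit$ to $\{\one,\zero\}$.

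The second, main ingredient is to identify the limits as fixed points of a Bellman-style equation along a reachable subgraph. Define the random set $Y$ of states that appear in infinitely many \textsc{Explore} phases, and the random set $B(s)$ of actions $a\in\enab(s)$ that are chosen in $s$ in infinitely many iterations (for $s\in Y$). A Borel--Cantelli argument yields: $\sinit\in Y$; whenever $s\in Y$ and $a\in B(s)$, then $\Delta(s,a)(s')>0$ implies $s'\in Y$ (since the successor is sampled from $\Delta(s,a)$ each time $(s,a)$ is chosen, and $\Delta(s,a)(s')>0$); and, because the \textsc{Explore} phase picks only actions in $\argmax_a U_i(s,a)$, any $a\in B(s)$ must satisfy $U_\infty(s,a)=U_\infty(s)$ (otherwise $a$ would stop being chosen for all $i$ sufficiently large). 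For every $s\in Y$ and $a\in B(s)$, the update rule $U_i(s,a)\leftarrow \sum_{s'}\Delta(s,a)(s')U_{i-1}(s')$ is applied infinitely often, so passing to the limit gives
\[
U_\infty(s,a)=\sum_{s'\in S}\Delta(s,a)(s')\,U_\infty(s').
\]

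With this fixed-point equation in hand, I conclude as follows. Consider the memoryless strategy $\sigma^\star$ that, in every $s\in Y$, plays an arbitrary action of $B(s)$. By the closure property of $Y$ under $\sigma^\star$, the Markov chain induced by $\sigma^\star$ starting in $\sinit$ lives in $Y$, and by Assumption-EC it reaches $\{\one,\zero\}$ almost surely. Iterating the Bellman equation along this chain and using $U_\infty(\one)=1$, $U_\infty(\zero)=0$ (preserved by the initialisation in line~3 of Algorithm~\ref{alg:skeleton}) shows that $U_\infty(\sinit)$ equals the reachability probability $\Prb_{\mdp,\sinit}^{\sigma^\star}(\reach\{\one\})$, which is at most $V(\sinit)$. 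Combined with $U_\infty(\sinit)\ge V(\sinit)$ from Lemma~\ref{lem:monotone-brtdp-nomec}, equality follows, and symmetrically $L_\infty(\sinit)=V(\sinit)$.

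The main obstacle is the second ingredient: establishing the closure property of $Y$ under the limit-greedy actions and ensuring the update rule passes cleanly to the limit. The subtlety is that the greedy set $\argmax_a U_i(s,a)$ can oscillate over $i$, so care is needed in arguing that the actions chosen infinitely often are exactly those with $U_\infty(s,a)=U_\infty(s)$; this relies on the fact that $U_i(s,a)$ is updated (and thus inspected) only when $s$ is visited, but the monotone convergence of $U_i$ means that once the gaps $U_\infty(s)-U_\infty(s,a)$ for suboptimal $a$ become smaller than the current $U_i$-gaps, such $a$ cease to be chosen.
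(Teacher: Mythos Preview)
Your overall plan is sound and the $U$-side argument is correct: the closure of the infinitely-visited set $Y$ under the infinitely-chosen actions $B(s)$, the Bellman identity $U_\infty(s,a)=\sum_{s'}\Delta(s,a)(s')U_\infty(s')$ for $(s,a)$ updated infinitely often, the fact that $a\in B(s)$ forces $U_\infty(s,a)=U_\infty(s)$, and the uniqueness of the harmonic extension on an absorbing chain together give $U_\infty(\sinit)=\Prb_{\mdp,\sinit}^{\sigma^\star}(\reach\{\one\})\le V(\sinit)$, hence equality.

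However, the claim ``symmetrically $L_\infty(\sinit)=V(\sinit)$'' is not justified. The \textsc{Explore} phase selects actions that maximise $U$, not $L$, so for $a\in B(s)$ you do get $U_\infty(s,a)=U_\infty(s)$, but there is no reason why $L_\infty(s,a)=L_\infty(s)$; in general $L_\infty(s,a)\le L_\infty(s)$, and your step (c)---turning the state-action Bellman identity into a state-level harmonic equation---fails for $L$. The fix is short: from $L_\infty(s)\ge L_\infty(s,\sigma^\star(s))=\sum_{s'}\Delta(s,\sigma^\star(s))(s')\,L_\infty(s')$ you obtain that $L_\infty$ is \emph{super}-harmonic for the $\sigma^\star$-chain on $Y$, whence (by optional stopping, since the chain is absorbed at $\{\one,\zero\}$ a.s.) $L_\infty(\sinit)\ge \Prb_{\mdp,\sinit}^{\sigma^\star}(\reach\{\one\})=U_\infty(\sinit)=V(\sinit)$; combined with $L_\infty(\sinit)\le V(\sinit)$ you are done.

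For comparison, the paper argues differently: it works directly with the difference $\delta(s)=\lim_i\bigl(U_i(s,a_i(s))-L_i(s,a_i(s))\bigr)$ (with $a_i(s)\in\argmax_a U_i(s,a)$), shows that on the infinitely-visited set $\delta$ satisfies $\delta(t)=\sum_s\Delta(t,a)(s)\,\delta(s)$ for every infinitely-played action, and then derives a contradiction from Assumption-EC by looking at the set $D$ of states where $\delta$ is maximal. Your approach separates $U$ and $L$ and identifies each with a concrete reachability probability under $\sigma^\star$; this is arguably more informative (it pins down the limits, not just their difference) but requires the asymmetric treatment of $L$ noted above. The paper's approach avoids this asymmetry by bundling $U$ and $L$ into a single quantity from the start.
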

\begin{proof}
Let $a_i(s)\in \enab(s)$ maximise $U_i(s,a)$ and define $\delta_i(s):= U_i(s,a_i(s))-L_i(s,a_i(s))$. Since $\delta_i(s)\geq \max_a U_i(s,a)-\max_a L_i(s,a)$ (expression of line 17 in the original Algorithm~\ref{alg:skeleton}),
it is sufficient to prove that $\lim_{i\to\infty}\delta_i(\sinit)=0$ almost surely.

By Lemma~\ref{lem:monotone-brtdp-nomec}, the limits $\lim_{i\to\infty} U_i(s,a)$ and $\lim_{i\to\infty} L_i(s,a)$ are well defined and finite. Thus $\lim_{i\to\infty}\delta_i(s)$ is also well defined and we denote it by $\delta(s)$ for every $s\in S$.

Let $\Sigma_U$ be the set of all memoryless strategies in $\mdp$ which occur as $\sigma_{U_i}$ for infinitely many $i$. Each $\sigma\in \Sigma_U$ induces a chain with reachable state space $\states_{\sigma}$ and uses actions $\act_{\sigma}$. Note that under $\sigma\in \Sigma_U$, all states of $\states_{\sigma}$ will be almost surely visited infinitely often if infinitely many simulations are run. Similarly, all actions of $\act_{\sigma}$ will be used almost surely infinitely many times. Let $\states_{\infty}=\bigcup_{\sigma\in\Sigma_U} \states_{\sigma}$ and let $\act_{\infty}=\bigcup_{\sigma\in \Sigma_U} \act_{\sigma}$. During almost all computations of the learning algorithm, all states of $\states_{\infty}$ are visited infinitely often, and all actions of $\act_{\infty}$ are used infinitely often.  By definition of $\delta$, for every $t\in \states_{\infty}$ and $a\in \act_{\infty}$ holds $\delta(t)=\sum_{s\in \states_{\infty}} \Delta(t,a)(s)\cdot \delta(s)$ almost surely.

Let $\delta=\max_{s\in S_{\infty}}\delta(s)$ and $D=\{s\in S_{\infty}\mid \delta(s)=\delta\}$. To obtain a contradiction, consider a computation of the learning algorithm such that $\delta>0$ and $\delta(t)=\sum_{s\in \states_{\infty}} \Delta(t,a)(s)\cdot \delta(s)$ for all $s\in S$ and $a\in \enab(s)$. Then $\one,\zero\not\in D$ and thus
$D$ cannot contain any EC by assumption. By definition of EC we get
$$\exists t\in D:\forall a\in \enab(t):supp(\Delta(t,a))\not\subseteq D$$
and thus for every $a\in\enab(t)$ we have $t_a\notin D$ with $\Delta(t,a)(t_a)>0$. %
Since $t_a\notin D$ we have $\delta(t_a)<\delta$. Now for every $a\in \enab(t)\cap \act_{\infty}$ we have:
\begin{align*}
\delta(t)&=\sum_{s\in S_{\infty},s\neq t_a}\Delta(t,a)(s)\cdot\delta(s)+\Delta(t,a)(t_a)\cdot \delta(t_a)\\
&<\sum_{s\in S_{\infty},s\neq t_a}\Delta(t,a)(s)\cdot\delta+\Delta(t,a)(t_a)\cdot \delta\\
&=\delta
\end{align*}
a contradiction with $t\in D$. 

\qed
\end{proof}

As a corollary, Algorithm 1 with \textsc{Update} defined in Algorithm 2 almost surely terminates for any $\varepsilon>0$. Further, $U_i\geq V\geq L_i$ pointwise and invariantly for every $i$ by the first lemma, the returned result is correct.

\newcommand{\wb}{\overline}
\newcommand{\reals}{\mathbb{R}}
\newcommand{\close}{K}
\newcommand{\trans}{\Delta}
\newcommand{\expect}{E}

\newcommand{\set}[1]{\{#1\}}
\newcommand{\pacStraa}{\mathcal{A}}
\newcommand{\greStraa}{\pi}
\newcommand{\reached}{\mathcal{R}}
\newcommand{\cons}{\overline{t}}

\section{Proof of Theorem \ref{thm:dql}: Analysis of the DQL algorithm}\label{app:dql}

In this section we present the analysis of the DQL algorithm for MDPs with reachability objectives, and show that 
the algorithm is probably approximately correct. We explicitly provide Algorithm~\ref{alg:pac} as a full pseudocode of the DQL algorithm, with minor modifications that will be discussed later.

\smallskip\noindent\textbf{Initialization of Algorithm~\ref{alg:pac}.}
The algorithm initializes the following variables:
$U(s,a)$ is the upper bound on the value of the state action pair $(s,a)$ and is initialized to 0 for $s=0$ and to $1$ otherwise; $\accum^U$ is the accumulator as discussed in Section~\ref{sec:nomec}, and is initialized
to $0$; $c^U(s,a)$ is counting the number of times the state action pair $(s,a)$ was experienced, and is initialized to $0$; $t^u(s,a)$ is the iteration number (timestep) of the last update 
of the $U(s,a)$ estimate of the state action pair $(s,a)$, initialized to 0; and $Learn^U(s,a)$ is boolean flag indicating whether the strategy
 is considering a modification to its upper-bound estimate $U(s,a)$; and the value $t^U_*$ denotes the iteration (timestep) of the last upper bound estimate change, and is initialized to $0$.
  The similar variables for lower bounds are distinguished by a $L$ superscript.

\smallskip\noindent\textbf{Body of Algorithm~\ref{alg:pac}.}
Let $s$ denote the state of the MDP in iteration (timestep) $t$. In every iteration the algorithm chooses uniformly at random an action $a$ from the set of enabled actions $\enab(s)$, that has
a maximal estimate of the upper bound. The strategy plays action $a$ and the MDP reaches a new state $s'$. If the strategy considers updating of the $U(s,a)$ estimate, the value
$U(s')$ is added to the estimator $\accum^U(s,a)$. Whenever the state action pair $(s,a)$ is experienced $m$ times, an attempt to update the estimate $U(s,a)$ will occur. The update will be successful
if the difference between the current estimate $U(s,a)-\accum^U(s,a)/m$ is greater or equal to $2\eps_1$. In case of a successful update the new upper bound for the state action
pair $(s,a)$ is $\accum^U(s,a)+\eps_1$ (the precise values for $m$ and $\eps_1$ will be given later in the analysis part). If the attempted update is not successful and $t^U(s,a) \geq t^U_*$
the strategy will not consider any updates of the upper bound until some other state action pair $(s',a')$ is successfully updated. The code for lower bound estimates is symmetric with a single
difference: when the strategy does not intend to perform updates of the lower bound  of the state action pair $(s,a)$, i.e., $\learn^L(s,a) =\false$ a successful update of the upper bound estimate can make the strategy consider 
the updates again, i.e., sets $\learn^L(s,a)$ back to $\true$.
Finally, if the newly reached state $s'$ is in $\{\zero, \one\}$, i.e.,  the simulation reached a terminal state and is restarted back to the initial state $\sinit$. Otherwise the following iterations starts
with $s$ being the newly reached state $s'$.

For simplicity of analysis we consider a slightly less succinct version of the DQL algorithm presented in the main text (Algorithm~\ref{alg:skeleton} with Algorithm~\ref{alg:DQL} as
the \textsc{Update} function) and present Algorithm~\ref{alg:pac}. The main differences of the algorithms are as follows:
\begin{compactitem}
\item 
For every state-action pair $(s,a)$ we introduce separate boolean flag $\learn^U(s,a)$ (resp. $\learn^L(s,a)$) for the upper (resp. lower) bound. The DQL procedure in Algorithm~\ref{alg:DQL}
contains a single shared boolean flag. Having these flags separated allows us to reason about upper bounds without considering updates of the lower bounds. The effect of having a single shared flag does not affect the bounds presented in Theorem~\ref{thm:dql}. This follows from the fact that only the number of attempted updates of the upper bound estimates has doubled. However, as the number of steps 
is given in big O notation, the statement of Theorem remains unaffected.
\item The updates of the $\accum^U$ (resp. $\accum^L$) accumulator are performed in a different order. In Algorithm~\ref{alg:pac} the updates of the accumulator are executed immediately. In the DQL procedure
of Algorithm~\ref{alg:DQL} the updates of the accumulator occur only after the simulation reaches the terminal state ($\zero$ or $\one$) and the updates are done in a stack-like fashion, i.e., last visited
state is the first to be updated. First consider an intermediate step: the updates of the accumulator are performed in a queue-like fashion, after the simulation reaches the terminal states. This can double the amount of required iterations, as for an update to be performed one has to wait until the simulation terminates. However, the constant does not affect the statement of Theorem~\ref{thm:dql} as the results are given in big O notation. It is easy to observe, that updating values in a stack-like fashion can only increase the rate of convergence as opposed to queue like updating. This follows from the fact that every simulation ends in terminal state $\zero$ or $\one$ and propagating the  value in a stack-like fashion can update the accumulator even of the initial state after a single simulation.
\end{compactitem}

\begin{algorithm}[]
\caption{DQL algorithm}
\label{alg:pac}
\begin{algorithmic}[1]
\State \textbf{Inputs:} $((\states,\loos,\win),\act,\m,\eps_1)$
\ForAll{$(s,a) \in \states \times \act$}
\State $U(s,a) \gets 1$; $\accum^U(s,a) \gets 0$; $c^U(s,a) \gets 0$; $t^U(s,a) \gets 0$; $\learn^U(s,a) \gets \true$
\State $L(s,a) \gets 0$; $\accum^L(s,a) \gets 0$; $c^L(s,a) \gets 0$; $t^L(s,a) \gets 0$; $\learn^L(s,a) \gets \true$
\EndFor
\ForAll{$(s,a) \in 0 \times \act$}
\State $U(s,a) \gets 0$
\EndFor
\ForAll{$(s,a) \in 1 \times \act$}
\State $L(s,a) \gets 1$
\EndFor

\State $t^U_* \gets 0$; $t^L_* \gets 0$;\ $s \gets s_0$

\For{$t=1,2,3, \ldots$}
\State Choose uniformly an action $a$ from  $\argmax_{a' \in \enab(s)} U(s,a')$
\State $s' \gets \makestep(s,a)$
 
\Comment{Upper Bounds:}
\If{$\learn^U(s,a)$}
\State  $\accum^U(s,a) \gets \accum^U(s,a) + U(s')$
\State $c^U(s,a) \gets c^U(s,a) +1$
\If{$c^U(s,a) = m$}
\Comment Update attempt
\If{$U(s,a) - \accum^U(s,a)/m \geq 2 \eps_1$}

\State $U(s,a) \gets \accum^U(s,a)/m + \eps_1$
\Comment Successful attempt
\State $t^U_* \gets t$
\ElsIf{$t^U(s,a) \geq t^U_*$}
 $\learn^U(s,a) \gets \false$
\EndIf
\State $t^U(s,a) \gets t;\  \accum^U(s,a) \gets 0;\  c^U(s,a) \gets 0$
\EndIf
\ElsIf{$t^U(s,a) < t^*$} $\learn^U(s,a) \gets \true$

\EndIf

\Comment{Lower Bounds:}

\If{$\learn^L(s,a)$}
\State  $\accum^L(s,a) \gets \accum^L(s,a) + L(s')$
\State $c^L(s,a) \gets c^L(s,a) +1$
\If{$c^L(s,a) = m$}
\Comment Update attempt
\If{$\accum^L(s,a)/m - L(s,a) \geq 2 \eps_1$}

\State $L(s,a) \gets \accum^L(s,a)/m - \eps_1$
\Comment Successful attempt
\State $t^L_* \gets t$
\ElsIf{$t^L(s,a) \geq t^L_*$}
 $\learn^L(s,a) \gets \false$
\EndIf
\State $t^L(s,a) \gets t;\  \accum^L(s,a) \gets 0;\  c^L(s,a) \gets 0$
\ElsIf{$t^L(s,a) < \max(t^L_*,t^U_*$)}
 $\learn^L(s,a) \gets \true$
 \EndIf

\EndIf

\If{$s' \in 0 \cup 1$ }
\Comment{Terminate the simulation or continue}
 \State $s' \gets s_0$
\Else
 \State$s \gets s'$
\EndIf

\EndFor
\end{algorithmic}
\end{algorithm}

\smallskip\noindent\textbf{Analysis of upper bounds $U(s,a)$.}
In what follows we present an adapted proof of DQL~\cite{Strehl} that analyses MDPs with discounted rewards. We adapt the proof to our setting of
undiscounted reachability objectives.
We write $U_t(s) = \max_{a \in \act}\{U_t(s,a)\}$ for the maximal U-value estimate of state $s$ at iteration (time) $t$. We write $U^*(s)$ for the actual upper bound
at state $s$ and $U^*(s,a)$ for the actual upper bound at state $s$ when action $a$ is played. We denote by $U^{\straa}_{\mdp}(s,T)$ the value function of 
strategy $\straa$ in MDP $\mdp$ starting in state $s$ for the $T$-step bounded reachability objective.

\begin{assumption}
\label{ass:correctMDP}
We say an MDP $\mdp$ with a reachability objective $\reach(F)$ satisfies Assumption~\ref{ass:correctMDP} if for every state $s \neq \zero$ that is in an EC, we have
that the value of that state is~$1$.
\end{assumption}

\begin{lemma}
\label{lem:perturbation}
Let $\mdp$ be an MDP and $\sigma$ a memoryless strategy
ensuring that a terminal state is reached almost surely.
Then the system of Bellman equations
\begin{eqnarray*}
 f(1) &=& 1\\
 f(0) &=& 0\\
 f(s) &=& \varepsilon(s) + \sum_{s'\in S} \sigma(s)(a)\cdot\Delta(s,a)(s') \cdot f(s') \quad\text{otherwise}
\end{eqnarray*}
 has unique solution, for any choice of numbers $\varepsilon(s)$.
\end{lemma}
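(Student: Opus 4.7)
The plan is to recognize that, once $\sigma$ is fixed, the Bellman system is a linear system on the finite state space $S$, and to reduce the question of uniqueness to the invertibility of a single matrix associated with the transient part of the induced Markov chain.

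First I would fix the memoryless strategy $\sigma$ and regard it together with $\Delta$ as inducing a Markov chain with transition matrix $P$ on $S$, where $P(s,s') = \sum_{a} \sigma(s)(a) \cdot \Delta(s,a)(s')$ for $s \notin \{0,1\}$ and where $0,1$ are absorbing. Split $S$ into terminal states $T = \{0,1\}$ and the remaining \emph{transient} states $S' = S \setminus T$. For any prospective solution $f$, the two equations $f(1)=1$ and $f(0)=0$ fix $f$ on $T$, so it suffices to show uniqueness of $f$ restricted to $S'$. Rewriting the third equation for $s \in S'$ and separating the contributions of transient and terminal successors, I obtain
\begin{equation*}
  f|_{S'} \;=\; b \;+\; P'\, f|_{S'},
\end{equation*}
where $P'$ is the submatrix of $P$ indexed by $S' \times S'$ and $b(s) = \varepsilon(s) + P(s,1)$ is a fixed vector. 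Thus the claim reduces to showing that $I - P'$ is invertible, regardless of the choice of $\varepsilon$ (which only affects $b$).

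The main obstacle, and the only place where the hypothesis on $\sigma$ is used, is establishing invertibility. The key step is to observe that the assumption that a terminal state is reached almost surely under $\sigma$ is equivalent to saying that for every $s \in S'$ the probability of being in $S'$ after $n$ steps tends to $0$. In matrix terms, $(P')^n \to 0$ entrywise as $n \to \infty$, which, since $S'$ is finite, implies that the spectral radius of $P'$ is strictly less than $1$. Consequently $I - P'$ is invertible with Neumann series $(I - P')^{-1} = \sum_{n=0}^{\infty} (P')^n$.

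To finish, I would conclude that $f|_{S'} = (I - P')^{-1} b$ is the unique solution on $S'$, and together with the forced values $f(1)=1$, $f(0)=0$ this yields a unique $f$ on all of $S$, for any choice of the perturbations $\varepsilon(s)$. The only subtlety worth emphasizing is that the finiteness of $S$ is used to pass from pointwise convergence of $(P')^n$ to a bound on the spectral radius; everything else is routine linear algebra.
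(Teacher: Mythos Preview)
Your proof is correct. Both your argument and the paper's hinge on the same probabilistic fact---that under $\sigma$ the mass on the non-terminal states $S'$ eventually vanishes---but you package it differently. The paper defines the Bellman update $F$ and shows that the $|S|$-fold iterate $F^{|S|}$ is a contraction in the max norm: since a terminal state is reached with positive probability within $|S|$ steps, the row sums of $(P')^{|S|}$ are strictly below~$1$, and the Banach fixed-point theorem then gives a unique fixed point. You instead isolate the linear system $f|_{S'} = b + P' f|_{S'}$ directly and invoke $(P')^n \to 0 \Rightarrow \rho(P') < 1 \Rightarrow I - P'$ invertible via the Neumann series.

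Your route is the standard ``fundamental matrix'' argument for absorbing Markov chains and is arguably cleaner for an affine system: it makes explicit that the perturbations $\varepsilon(s)$ only enter through the right-hand side $b$, so uniqueness is immediate once invertibility is shown. The paper's contraction argument is slightly more quantitative (it pins down $|S|$ as the number of iterates needed to see a contraction factor) and generalises verbatim to non-affine Bellman operators, which is perhaps why the authors chose it given the surrounding DQL analysis. Either way, the substantive step---almost-sure absorption forces the transient submatrix to be ``small''---is identical.
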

\begin{proof}
Let $F:\Rset^{|S|} \rightarrow \Rset^{|S|}$ be the function
that performs one iteration of the Bellman equations.
We show that $F^{|S|}$ is a contraction. Let $P(s,s',k)$ be the probability
that when using $\sigma$ and starting in $s$, we end in $s'$ after exactly $k$ steps.
{\small
\begin{eqnarray*}
 \lefteqn{F^{|S|}(x)(s) }\\
 &=& \Big(\sum_{s'\in S}P(s,s',|S|) \cdot x(s')\Big)
    + \Big(\sum_{s'\in S\setminus\{0,1\}} \sum_{i=0}^{|S|-1}P(s, s', i) \cdot \varepsilon(s')\Big)\\
 &=& \Big(\sum_{s'\in S \setminus \{0,1\}}P(s, s', |S|) \cdot x(s')\Big) + P(s, 1, |S|)
  + \Big(\sum_{s'\in S\setminus\{0,1\}} \sum_{i=0}^{|S|-1}P(s, s', i) \cdot \varepsilon(s')\Big)
\end{eqnarray*}
}%
note that in the last line above, the second and third summands are independent of $x$, and so
{\small
\begin{eqnarray*}
 \lefteqn{|F^{|S|}(x)(s) -  F^{|S|}(y)(s)|}\\
  &=& \Big|\Big(\sum_{s'\in S \setminus \{0,1\}}P(s, s', |S|) \cdot x(s')\Big)
  -\Big(\sum_{s'\in S \setminus \{0,1\}}P(s, s', |S|) \cdot y(s')\Big)\Big|\\
  &=& \sum_{s' \in S \setminus \{0,1\}}(P(s, s', |S|) \cdot |x(s') - y(s')|\\
  &\le& ||(x,y)||_\infty \cdot\sum_{s' \in S \setminus \{0,1\}}(P(s, s', |S|)
\end{eqnarray*}
}%
where $||(x,y)||_\infty = \max_s |x(s) - y(s)|$ is the maximum norm.
Because $\mdp$ is MEC-free, a terminal state is reached with nonzero probability within $|S|$ steps,
and hence $\sum_{s'\in S \setminus \{0,1\}}(P(s, s', |S|) < 1$, implying that
\[
  || F^{|S|}(x), F^{|S|}(y)||_\infty<||(x,y)||_\infty.
\]
We have proved that $F^{|S|}$ is a contraction.
Applying Banach fixpoint theorem we get that there is a unique fixpoint for $F^{|S|}$, meaning that
$f$ has unique solution.
\end{proof}

\begin{lemma}
Let $\mdp$ be a MEC-free MDP, then $\mdp$ satisfies Assumption~\ref{ass:correctMDP}.
\end{lemma}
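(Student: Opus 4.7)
The plan is to unpack the definitions and observe that the statement reduces almost to a tautology under the convention for ``MEC-free'' adopted in this paper. Recall from Section~\ref{sec:nomec} that a MEC-free MDP, in the sense of Assumption-EC, has no end components other than the two trivial singletons $\{\one\}$ and $\{\zero\}$, corresponding respectively to the designated absorbing target and the absorbing sink.

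First I would recall the standard structural fact that, in a finite MDP, every EC is contained in some MEC: the EC relation is closed under set-union of overlapping ECs, so taking the union of all ECs containing a given state yields the unique MEC that contains it. Consequently, if a state $s$ lies in an EC of $\mdp$, then $s$ also lies in some MEC of $\mdp$.

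Next I would instantiate this with a state $s \neq \zero$ that lies in some EC. By MEC-freeness the only available MECs are $\{\one\}$ and $\{\zero\}$, so the MEC containing $s$ must be $\{\one\}$, forcing $s = \one$. Since $F = \{\one\}$ by the setup of Section~\ref{sec:nomec}, the constant strategy from $\one$ witnesses $\val(\one) = \sup_{\sigma}\Prb_{\mdp,\one}^{\sigma}(\reach{F}) = 1$, which is precisely the conclusion required by Assumption~\ref{ass:correctMDP}.

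There is essentially no technical obstacle here; the only subtlety worth flagging is terminological, namely that ``MEC-free'' in our usage still permits the two trivial singleton MECs $\{\one\}$ and $\{\zero\}$ (otherwise no absorbing target or sink would be allowed, contradicting Assumption-EC). Under that convention the implication is immediate, and so the lemma will follow in a few lines without any further calculation.
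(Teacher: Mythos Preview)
Your proposal is correct and matches the paper's approach, which simply reads ``Trivially, by definition.'' Your argument is just a fully spelled-out version of that one-liner; the detour through ``every EC sits inside some MEC'' is harmless but unnecessary, since Assumption-EC already rules out all ECs (not merely all MECs) other than $\{\one\}$ and $\{\zero\}$.
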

\begin{proof}
Trivially, by definition.
\end{proof}

\begin{lemma}
\label{lem:suc_updates}
The number of successful updates of the U-value estimates in Algorithm~\ref{alg:pac} is bounded by $\frac{\vert\states\vert \vert\act\vert}{\eps_1}$.
\end{lemma}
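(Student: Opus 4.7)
The plan is to argue that each successful update strictly decreases the corresponding $U(s,a)$ by at least $\epsilon_1$, and that $U(s,a)$ stays in $[0,1]$ throughout. Summing the total possible decrease over all state-action pairs then yields the bound.

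First I would verify the per-update decrement. A successful update at line containing $U(s,a) \gets \accum^U(s,a)/m + \epsilon_1$ happens only when the guard $U(s,a) - \accum^U(s,a)/m \geq 2\epsilon_1$ holds. Substituting, the new value satisfies
\[
U_{\text{new}}(s,a) \;=\; \frac{\accum^U(s,a)}{m} + \epsilon_1 \;\leq\; U(s,a) - 2\epsilon_1 + \epsilon_1 \;=\; U(s,a) - \epsilon_1,
\]
so every successful update decreases $U(s,a)$ by at least $\epsilon_1$.

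Next I would show the invariant $0 \le U(s,a) \le 1$ for all $(s,a)$ at all times. Initially $U(s,a) \in \{0,1\}$. Since $U(s') \in [0,1]$ for all $s'$ (by induction, maintained simultaneously across all state-action pairs), the accumulator satisfies $0 \le \accum^U(s,a) \le m$ whenever an update is attempted; hence $0 \le \accum^U(s,a)/m \le 1$. After a successful update we have $U(s,a) = \accum^U(s,a)/m + \epsilon_1 \ge 0$, and the update only fires when this quantity is strictly below the previous $U(s,a) \le 1$, so the upper bound of $1$ is preserved as well. In particular $U(s,a)$ is always nonnegative.

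Finally, combining the two facts: for each fixed pair $(s,a)$, the total cumulative decrease of $U(s,a)$ over the run is at most $U_{\text{init}}(s,a) - 0 \le 1$, and each successful update accounts for at least $\epsilon_1$ of this decrease. Therefore the number of successful updates at $(s,a)$ is at most $1/\epsilon_1$, and summing over all $|S|\cdot|A|$ state-action pairs yields the claimed bound $|S||A|/\epsilon_1$. The only mildly delicate point is maintaining the simultaneous invariant $U(s,a) \in [0,1]$, which is what makes the telescoping argument sound; this is the step I would be most careful with.
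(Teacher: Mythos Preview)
Your proof is correct and follows essentially the same approach as the paper: each successful update decreases $U(s,a)$ by at least $\epsilon_1$, the estimates stay nonnegative, so there can be at most $1/\epsilon_1$ successful updates per state-action pair and hence at most $|S||A|/\epsilon_1$ in total. Your treatment of the invariant $0 \le U(s,a) \le 1$ is in fact more careful than the paper's, which simply asserts that updates cannot produce a negative value.
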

\begin{proof}
Let $(s,a) \in \states \times \act$ be a fixed pair and $U(s,a)$ its value estimate. The value of $U(s,a)$ is initialized to $0$ or $1$
 and every successful update decreases the estimate by at least $\eps_1$. It is also impossible for any update to result in a negative U-value 
 estimate. It follows that the number of successful estimate updates for a fixed pair $(s,a)$ is bounded by $\frac{1}{\eps_1}$. As there
 are $ \vert\states\vert   \vert\act\vert $ pairs of U-value estimates for the upper bounds in the algorithm, we have that the number of successful updates
 of U-value estimates is bounded by $ \frac{\vert\states\vert \vert\act\vert}{\eps_1}$.
 \hfill
 \qed
\end{proof}

\begin{lemma}
\label{lem:att_updates}
The number of attempted updates of the U-value estimates in Algorithm~\ref{alg:pac} is bounded by $ \vert\states\vert \vert\act\vert( 1+ \frac{\vert\states\vert \vert\act\vert}{\eps_1})$.
\end{lemma}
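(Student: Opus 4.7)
}

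The plan is to split the count of attempted updates into successful and unsuccessful ones, bound the former via the preceding lemma, and then bound the latter by showing that each pair $(s,a)$ can contribute only a bounded number of unsuccessful attempts between two consecutive successful updates (at any pair). Formally, let $K$ denote the total number of successful updates during the run of Algorithm~\ref{alg:pac}. By Lemma~\ref{lem:suc_updates}, $K \le |S||A|/\eps_1$.

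The core of the argument is a structural analysis of the three interacting variables $\learn^U(s,a)$, $t^U(s,a)$, and $t^U_*$. I would first record the following invariants that follow directly from inspection of the body of Algorithm~\ref{alg:pac}: (i)~an attempted update for $(s,a)$ at time $t$ requires $\learn^U(s,a) = \true$ and $c^U(s,a) = m$; (ii)~after any attempted update, $t^U(s,a)$ is set to $t$ and $c^U(s,a)$ is reset to $0$; (iii)~the variable $t^U_*$ increases only at successful updates; and (iv)~if an unsuccessful attempt at $(s,a)$ finds $t^U(s,a) \geq t^U_*$, then $\learn^U(s,a)$ is turned off, and it can be turned back on only when $t^U(s,a) < t^U_*$, i.e., only after some later successful update elsewhere.

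Armed with these invariants, I would partition the execution into $K+1$ ``phases'' delimited by successful updates, and argue that within each phase a fixed pair $(s,a)$ can make only a constantly bounded number of attempts before its $\learn^U$ flag is driven to false and remains so until the next phase. Concretely: at the start of a phase either $t^U(s,a) < t^U_*$ or $t^U(s,a) = t^U_*$; if $(s,a)$ makes an unsuccessful attempt with $t^U(s,a) \ge t^U_*$ then $\learn^U(s,a)$ becomes false and no further attempts occur in this phase; if instead the attempt has $t^U(s,a) < t^U_*$, then after the attempt $t^U(s,a)$ is set to the current time, which exceeds $t^U_*$, so any subsequent attempt in the same phase falls into the previous case. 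Hence each pair contributes $O(1)$ attempts per phase, giving at most $|S||A|(K+1)$ unsuccessful attempts in total, and combining with the successful count yields the claimed $|S||A|(1 + |S||A|/\eps_1)$ bound.

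The main obstacle I anticipate is the careful bookkeeping in the per-phase analysis: one has to enumerate the possible relative positions of $t^U(s,a)$ and $t^U_*$ at the moment a phase begins, and then verify that line~25 of Algorithm~\ref{alg:pac}, which re-enables learning, cannot fire twice within a single phase for the same pair without an intervening successful update. The argument is clean once one observes that each re-enabling requires $t^U(s,a)$ to drop strictly below $t^U_*$, but $t^U(s,a)$ only increases (to the current time) while $t^U_*$ is constant throughout a phase; once this monotonicity is established, the bound on attempts per phase follows and the lemma drops out by summing over the $K+1 \le 1 + |S||A|/\eps_1$ phases and $|S||A|$ pairs.
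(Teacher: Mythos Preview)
Your approach is essentially the paper's: both bound the attempts at each fixed pair $(s,a)$ by $1+K$, where $K$ is the total number of successful updates, arguing via the $\learn^U$/$t^U(s,a)$/$t^U_*$ mechanism that further attempts at a pair require an intervening successful update somewhere. The paper states this more tersely (without spelling out the phase decomposition or the invariants), but the combinatorial content and the final count $|S||A|(1+K)\le |S||A|(1+|S||A|/\eps_1)$ are identical.
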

\begin{proof}
Let $(s,a) \in \states \times \act$ be a fixed pair, $U(s,a)$ its value estimate, and $\learn^U(s,a)$ the learning flag for this pair. After
visiting state $s$ and playing action $a$ for $m$ times an attempt to update the estimate $U(s,a)$ occurs. In order to have another attempt to update the estimate
$U(s,a)$ some some estimate needs to be successfully updated after the last attempt to update the estimate $U(s,a)$. If there is no successful update of any estimate,
 the learning flag $\learn^U(s,a)$ is set to $\false$, and there are no attempted updates for $U(s,a)$ while $\learn^U(s,a) = \false$. By Lemma~\ref{lem:suc_updates} the
 number of successful updates of the U-value estimates in bounded by $\frac{\vert\states\vert \vert\act\vert}{\eps_1}$. It follows that the number of attempted updates
 for the estimate $U(s,a)$ is bounded by $1+\frac{\vert\states\vert \vert\act\vert}{\eps_1}$. As there
 are $ \vert\states\vert   \vert\act\vert $ many pairs of U-value estimates in the algorithm, we have that the number of attempted updates
 of U-value estimates is bounded by $ \vert\states\vert \vert\act\vert( 1+ \frac{\vert\states\vert \vert\act\vert}{\eps_1})$.
 \hfill
 \qed
\end{proof}

For every timestep $t$ we define $\close_t$ to be the set of all state-action pairs $(s,a) \in \states \times \act$ such that:
$$U_t(s,a) - \sum_{s' \in \states} \trans(s,a)(s') U_t(s') \leq 3 \eps_1 $$

\begin{assumption}
\label{ass:1}
Suppose an attempted update of the U-value estimate $U(s,a)$ of the pair $(s,a) \in \states \times \act$ occurs at time $t$, and that the $m$ most 
recent visits to the state $s$ while $a$ was played are $k_1 < k_2 < \cdots < k_m  = t$. If $(s,a) \not \in \close_{k_1}$, then the attempted update 
at time $t$ will be successful.
\end{assumption}
We specify the value of $m$:
$$m = \frac{\ln(6 \vert\states\vert \vert\act\vert ( 1 + \frac{\vert\states\vert \vert\act\vert}{\eps_1})/\delta)}{2 {\eps_1}^2}$$
\begin{lemma}
\label{lem:a1Holds}
The probability that Assumption~\ref{ass:1} is violated during the execution of Algorithm~\ref{alg:pac} is bounded by $\delta /6$.
\end{lemma}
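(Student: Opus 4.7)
The plan is a union bound over all attempted upper-bound updates, combined with a Hoeffding concentration inequality at each one. By Lemma~\ref{lem:att_updates} the run performs at most $M:=|\states||\act|\bigl(1+|\states||\act|/\eps_1\bigr)$ attempted updates of $U$-estimates, so it will suffice to bound the probability that Assumption~\ref{ass:1} fails at one fixed attempted update by $\delta/(6M)$; the union bound then delivers the claimed $\delta/6$. The specific value of $m$ is tailored exactly so that Hoeffding's tail matches this per-update budget.

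For a fixed attempted update of $U(s,a)$ at iteration $t=k_m$, with $k_1<k_2<\cdots<k_m=t$ the $m$ most recent visits to $(s,a)$, I would first record two structural facts. First, because $c^U(s,a)$ is reset to $0$ after each attempted update, no change of $U(s,a)$ takes place between $k_1$ and $t$, so $U_t(s,a)=U_{k_1}(s,a)$. Second, conditional on the history up to iteration $k_1$, the successors $s'_1,\dots,s'_m$ drawn at times $k_1,\dots,k_m$ are i.i.d.\ with distribution $\Delta(s,a)$, since each time the pair $(s,a)$ is reached the next state is sampled freshly from $\Delta(s,a)$ independently of past choices of the strategy.

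Next I would introduce the auxiliary samples $Y_i:=U_{k_1}(s'_i)\in[0,1]$. Conditional on the history up to $k_1$, the function $U_{k_1}$ is fixed, so the $Y_i$ are i.i.d.\ with common mean $\mu=\sum_{s'}\Delta(s,a)(s')\,U_{k_1}(s')$. Monotonicity of the upper bounds over time (every successful update only decreases $U$) implies $U_{k_i}(s'_i)\le U_{k_1}(s'_i)=Y_i$, and hence $\accum^U(s,a)/m\le\tfrac{1}{m}\sum_{i=1}^m Y_i$. Hoeffding's inequality gives $\Pr\bigl[\tfrac{1}{m}\sum_i Y_i\ge\mu+\eps_1\bigr]\le e^{-2m\eps_1^2}$, and the prescribed $m=\ln(6M/\delta)/(2\eps_1^2)$ makes this probability equal to $\delta/(6M)$. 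Combining this with the hypothesis $(s,a)\notin\close_{k_1}$, which by definition reads $U_{k_1}(s,a)-\mu>3\eps_1$, I obtain on the Hoeffding-good event that $\accum^U(s,a)/m\le\mu+\eps_1<U_{k_1}(s,a)-2\eps_1=U_t(s,a)-2\eps_1$, so the update is successful. A union bound over the at most $M$ attempted updates yields the required $\delta/6$.

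The main subtlety I expect to have to spell out is the conditional i.i.d.\ claim for $s'_1,\dots,s'_m$: the revisit times $k_1,\dots,k_m$ are themselves random and depend in a complicated way on earlier outcomes, so one needs a strong-Markov-style argument to justify that the transitions taken at these stopping-time-like indices are still i.i.d.\ $\Delta(s,a)$ once we condition on the history at $k_1$. Everything downstream is a routine Hoeffding-plus-union-bound calculation driven by the specific choice of $m$, together with the monotonicity observation that lets us upper bound the actual accumulated values $U_{k_i}(s'_i)$ by the frozen-at-$k_1$ values $Y_i=U_{k_1}(s'_i)$.
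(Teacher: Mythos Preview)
Your proposal is correct and follows essentially the same route as the paper: Hoeffding applied to the frozen-at-$k_1$ samples $Y_i=U_{k_1}(s'_i)$, monotonicity of $U$ to dominate the actual accumulator, the $3\eps_1$ gap from $(s,a)\notin\close_{k_1}$, and a union bound over the at most $|\states||\act|\bigl(1+|\states||\act|/\eps_1\bigr)$ attempted updates supplied by Lemma~\ref{lem:att_updates}. The paper handles the i.i.d.\ subtlety you flag by a domination argument (the probability of any fixed successor sequence~$\mathcal{Q}$ occurring together with $m$ revisits is at most the probability of~$\mathcal{Q}$ under $m$ independent draws from $\Delta(s,a)$), which is a slightly different phrasing of the same strong-Markov idea you plan to use.
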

\begin{proof}
Fix any timestep $k_1$ (and the complete history up to time $k_1$) such that at time $k_1$ state $s$ is visited and action $a$ played,
$(s,a) \not \in \close_{k_1}$,  and after 
$m-1$ more visits to state $s$ while playing action $a$ an attempt of an update will occur.
Let $\mathcal{Q} = \langle s[1],s[2], \ldots , s[m] \rangle \in \states^m$ be any sequence of $m$ next states, reached from state $s$ after playing action $a$.
Due to the Markov property, whenever the strategy is in a state $s$ and plays action $a$, the resulting next state does not depend on the 
history of the play. Therefore, the probability that the state $s$ is visited and action  $a$ is played  $m-1$ more times and the resulting sequence
of next states is equal to $\mathcal{Q}$, is at most the probability that $\mathcal{Q}$ is obtained by $m$ independent draws from the transition probability
distribution $\trans(s,a)$. It follows that it suffices to show that the probability that a random sequence $\mathcal{Q}$ causes an unsuccessful update
is at most $\delta/3$.

Let us fix a sequence of states $\mathcal{Q} = \langle s[1],s[2], \ldots , s[m]\rangle$ that is drawn from the transition probability distribution $\trans(s,a)$.
Let $X_1, X_2, \ldots, X_m$ be a sequence of independent and identically distributed random variables, where every $X_i$ is defined as $X_i = U_{k_1}(s[i])$. 
Let $\wb{X} = \frac{1}{m}\sum_{i=1}^{m} X_i$,
by Hoeffding bound~\cite{H63} we have the following inequality:
$$\Prb(\wb{X} - \expect[\wb{X}] < \eps_1) > 1 - e^{-2m {\eps_1}^2} $$
Our choice of $m$ evaluates the right-hand side of the inequality to 
$$1 - e^{-2m {\eps_1}^2} = 1- \frac{\delta}{(6 \vert\states\vert \vert\act\vert (1 + \frac{\vert\states\vert \vert\act\vert}{\eps_1}))}$$

As the random variables are independent and identically distributed we have that $\expect[\wb{X}] = \expect[X_i]$ for all $1 \leq i \leq m$, in particular 
$\expect[\wb{X}] = \expect[X_1]$. It follows that the probability of $\wb{X} - \expect[X_1] < \eps_1$ is at least $1- \frac{\delta}{(6 \vert\states\vert \vert\act\vert
 (1 + \frac{\vert\states\vert \vert\act\vert}{\eps_1}))}$.

If $\wb{X} - \expect[X_1] < \eps_1$ holds and an attempt to update the U-value estimate of the pair $(s,a)$ occurs using these $m$ samples, the update will be successful. Suppose
that state $s$ is visited and action $a$ played at times $k_1 < k_2 < \ldots < k_m$, where $k_m = t$ and at time $k_i$ the next state drawn from the transition
distribution $\trans(s,a)$ is $s[i]$. Then we have:
\begin{align*}
U_t(s,a) - \frac{1}{m}\sum_{i=1}^{m}U_{k_i}(s[i]) &\geq U_t(s,a) - \frac{1}{m}\sum_{i=1}^{m}U_{k_1}(s[i])\\
& > U_t(s,a) - \expect[X_1] - \eps_1\\
& = U_{k_1}(s,a) - \sum_{s' \in \states} \trans(s,a)(s') U_{k_1}(s') - \eps_1\\
&> 2\eps_1
\end{align*}

The first inequality follows from the fact, that the U-value estimates can only decrease, i.e., for all states $s \in \states$ and all $i \leq j$ we have $U_{i}(s) \geq U_j(s)$. The 
second inequality follows from the presented Hoeffding bound. The equality follows from the definition and the fact that $U_t(s,a) = U_{k_m}(s,a) = U_{k_1}(s,a)$,\
 and the last inequality follows from the assumption that $(s,a) \not \in \close_{k_1}$,
i.e., $U_{k_1}(s,a) - \sum_{s' \in \states} \trans(s,a)(s') U_{k_1}(s') > 3 \eps_1$.

To conclude the proof we extend the argument, using the union bound, to all possible timesteps $k_1$ that satisfy the conditions above. The number of such timesteps is
bounded by the number of attempted updates, that is by Lemma~\ref{lem:att_updates} equal to $\vert\states\vert \vert\act\vert (1+ \frac{\vert\states\vert \vert\act\vert}{\eps_1})$. We have that the probability
that Assumption~\ref{ass:1} is violated is at most $\delta/6$.

\hfill
 \qed
\end{proof}

\begin{lemma}
\label{lem:uppBound}
During the execution of Algorithm~\ref{alg:pac} we have that $U_t(s,a) \geq U^*(s,a)$ for all timesteps $t$ and state action pairs $(s,a)$ is at most $1 - \delta/6$.
\end{lemma}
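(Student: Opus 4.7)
The plan is to show by induction on time that, off an exceptional event of probability at most $\delta/6$, every successful upper-bound update produces a value that still dominates $U^*(s,a) = V(s,a)$. Since $U_0(s,a)$ is initialized to $1$ (or $0$ for $s=\zero$), the invariant $U_t(s,a)\ge U^*(s,a)$ holds initially, and $U(s,a)$ changes only at successful updates (lines where $U(s,a) \gets \accum^U(s,a)/m+\eps_1$). So it suffices to control each successful update in isolation and then union bound over them.

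First, I would fix an attempted update of $(s,a)$ occurring at time $t=k_m$, where $k_1<k_2<\cdots<k_m$ are the most recent $m$ visits to $(s,a)$, with observed successor states $s[1],\ldots,s[m]$. By the Markov property, given the history up to $k_1$, these successors are i.i.d.\ draws from $\Delta(s,a)$. Let $Y_i \coloneqq V(s[i])$ and $X_i \coloneqq U_{k_i}(s[i])$, so that $\accum^U(s,a)/m = \wb X$. Assuming inductively that $U_{k_i}(s'')\ge V(s'')$ for every state $s''$ and every $k_i<t$, we have $X_i\ge Y_i$ pointwise, hence $\wb X \ge \wb Y$.

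Second, I would apply a one-sided Hoeffding inequality to the i.i.d.\ $[0,1]$-valued variables $Y_i$ whose common mean is $\expect[Y_1]=\sum_{s'\in S}\Delta(s,a)(s')\,V(s')=V(s,a)$, giving
\[
\Prb\bigl(\wb Y < V(s,a)-\eps_1\bigr)\ \le\ e^{-2m\eps_1^2}.
\]
With the chosen $m=\tfrac{\ln(6|S||A|(1+|S||A|/\eps_1)/\delta)}{2\eps_1^{2}}$, the right-hand side is at most $\delta/\bigl(6|S||A|(1+|S||A|/\eps_1)\bigr)$. On the complementary (good) event, the successful update produces
\[
U_t(s,a)\ =\ \accum^U(s,a)/m + \eps_1\ =\ \wb X+\eps_1\ \ge\ \wb Y+\eps_1\ \ge\ V(s,a)\ =\ U^*(s,a),
\]
preserving the invariant.

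Finally, I would take a union bound over all attempted updates throughout the run. By \lemref{lem:att_updates}, there are at most $|S||A|(1+|S||A|/\eps_1)$ such attempts, so the probability that any one of them violates the inequality above is at most $\delta/6$. Off this failure event, the induction goes through for every successful update, and the invariant $U_t(s,a)\ge U^*(s,a)$ holds for all $t$ and $(s,a)$. The main subtlety I anticipate is the independence argument: the variables $X_i=U_{k_i}(s[i])$ are \emph{not} i.i.d.\ (they depend on algorithmic history up to $k_i$), so the Hoeffding step must be applied to the deterministic functions $Y_i=V(s[i])$ of the truly i.i.d.\ successors, while the induction hypothesis is used to convert the bound on $\wb Y$ into a bound on $\wb X+\eps_1$.
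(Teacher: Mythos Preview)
The proposal is correct and follows essentially the same approach as the paper: apply Hoeffding to the i.i.d.\ variables $Y_i=V(s[i])=U^*(s[i])$ (rather than to the history-dependent $U_{k_i}(s[i])$), use the induction hypothesis to pass from $\overline{X}$ to $\overline{Y}$, and union bound over the at most $|S||A|(1+|S||A|/\eps_1)$ attempted updates from \lemref{lem:att_updates}. Your explicit remark about the independence subtlety is precisely the point the paper handles by referring back to the argument of \lemref{lem:a1Holds}.
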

\begin{proof}
It can be shown by similar arguments as in Lemma~\ref{lem:a1Holds}, that $1/m \sum_{i=1}^{m}U^{*}(s_{k_i}) \geq U^*(s,a) - \eps_1$  holds, for all attempted
updates, with probability at least $1 - \delta/6$. Assuming the inequality holds, the proof is by induction on the timestep $t$. For the base case we initialize
all U-value estimates for states $s$ in $\states \setminus \zero$ and actions $a \in \act$ to $U_1(s,a)=1$ which is clearly an upper bound. All the states $s \in \zero$
are absorbing non-target states, therefore the initialization value $0$ is also an upper bound.
Suppose the claim holds for all timesteps less than or equal to $t$, i.e., $U_t(s,a) \geq U^*(s,a)$ and $U_t(s) \geq U^*(s)$ for all state action pairs $(s,a)$.

Assume $s$ is the $t$-th state reached and $a$ is an action played at time $t$. If there is no attempt to update or the update is not successful, no U-value estimate is changed 
and there is nothing to prove. Assume there was successful update of the U-value estimate of the state action pair $(s,a)$ at time $t$. Then we have:

$$U_{t+1}(s,a) = 1/m \sum_{i=1}^{m} U_{k_i}(s_{k_i}) + \eps_1 \geq 1/m \sum_{i=1}^{m} U^*(s_{k_i}) + \eps_1 \geq U^*(s,a).$$

The first step of the inequality follows by construction of the Algorithm, the second inequality follows from the induction hypothesis, and the last one from the equation above.
\hfill
\qed
\end{proof}

\begin{lemma}
\label{lem:fail->close}
If Assumption~\ref{ass:1} holds then: If an unsuccessful update of the estimate $U(s,a)$ occurs at time~$t$ and $\learn^U_{t+1}(s,a)=\false$ then $(s,a) \in \close_{t+1}$.
\end{lemma}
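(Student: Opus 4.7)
The plan is to derive the closeness of $(s,a)$ at time $t+1$ by combining the contrapositive of Assumption~\ref{ass:1} with a monotonicity argument showing that none of the relevant $U$-values have changed between some earlier reference time and $t+1$.

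First I would invoke Assumption~\ref{ass:1} in its contrapositive form. Let $k_1 < k_2 < \cdots < k_m = t$ be the $m$ most recent visits to $(s,a)$ used in the attempted update at time $t$. Since this attempted update is unsuccessful, Assumption~\ref{ass:1} forces $(s,a) \in \close_{k_1}$, i.e.\ $U_{k_1}(s,a) - \sum_{s'} \Delta(s,a)(s')\,U_{k_1}(s') \leq 3\eps_1$. This gives me the "closeness" inequality at time $k_1$; the remaining task is to transport it forward to time $t+1$.

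Next I would analyze the learn-flag condition. Because $\learn^U_{t+1}(s,a) = \false$, the elif branch at line~21 must have fired at time $t$, so before line~22 updates $t^U(s,a)$, one has $t^U(s,a) \geq t^U_*$. Writing $t'$ for this pre-update value (the time of the previous attempt on $(s,a)$), the inequality $t' \geq t^U_*$ means that no successful update of any pair has occurred strictly after time $t'$ and up to (and including) time $t$, since $t^U_*$ is only advanced by successful updates (line~20) and the current attempt at $t$ failed. A small auxiliary argument is needed here to rule out interference from line~24: since $t^U(s,a) = t'$ is fixed throughout $[t'+1, t]$ and $t^U_* \leq t'$ in this window, the condition $t^U(s,a) < t^U_*$ at line~24 cannot fire, so the learn flag cannot be reset in this interval, and $k_1$ really is a visit strictly after $t'$.

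With the absence of successful updates on $[t'+1, t]$ established, $U$-values for every state are invariant across this interval; combined with $k_1 > t'$ and the fact that $U_{t+1} = U_t$ (no successful update at $t$), I conclude $U_{t+1}(s,a) = U_{k_1}(s,a)$ and $U_{t+1}(s') = U_{k_1}(s')$ for every $s'$. Substituting into the closeness inequality from the first step yields $U_{t+1}(s,a) - \sum_{s'} \Delta(s,a)(s')\,U_{t+1}(s') \leq 3\eps_1$, i.e.\ $(s,a) \in \close_{t+1}$. The main obstacle I expect is the careful bookkeeping around the learn flag and the variables $t^U(s,a)$, $t^U_*$ to justify that nothing changes $U$ on the interval $[k_1, t+1]$; the algebraic step itself is then immediate.
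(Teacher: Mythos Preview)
Your proof is correct and uses essentially the same ingredients as the paper's. The only organisational difference is that the paper splits into a three-case analysis on whether $(s,a)\in K_{k_i}$ holds for each $i$, whereas you argue directly that $\learn^U_{t+1}(s,a)=\false$ forces $t^U_* \le t'$ and hence $U$ is unchanged on $[k_1,t{+}1]$, which collapses the paper's cases (ii) and (iii) into a single step.
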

\begin{proof}
Assume an unsuccessful update of the estimate $U(s,a)$ occurs at time~$t$ and let $k_1, k_2, k_3, \ldots, k_m = t$
be the $m$ most recent visits to state $s$ while action~$a$ was played.
We consider the following possibilities: (i)~If $(s,a) \not \in \close_{k_1}$, then by Assumption~\ref{ass:1} the attempt to update the U-value estimate $U(s,a)$ at time $t$ will be 
successful and there is nothing to prove. (ii)~Assume $(s,a) \in \close_{k_1}$ and there exists $i \in \set{2,m}$ such that $(s,a) \not \in \close_{k_i}$. It follows there must have been
a successful update of the U-value estimate between times $k_1$ and $k_m$, therefore the learning flag $\learn^U(s,a)$ will be set to $\true$, and there is nothing to prove.
(iii)~For the last case we have for all $i \in \set{1,m}$ that $(s,a) \in \close_{k_i}$, in particular $(s,a) \in \close_{k_m} = \close_{t}$. As the attempt
to update the U-value estimate at time $t$ was not successful, we have that $\close_{t} = \close_{t+1}$, and therefore $(s,a) \in \close_{t+1}$. The result follows.
\hfill
 \qed
\end{proof}

\begin{lemma}
\label{lem:outsideClose}
The number of timesteps $t$ such that a state-action pair $(s,a)\not  \in \close_t$  is at most $\frac{2m\vert\states\vert \vert\act\vert }{\eps_1}$.
\end{lemma}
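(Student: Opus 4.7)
The plan is to bound the number of escape timesteps---those $t$ at which the played pair $(s_t, a_t)$ satisfies $(s_t,a_t) \not\in \close_t$---by an amortized charging argument against successful $U$-updates, whose total count is bounded by $|S||A|/\eps_1$ via Lemma~\ref{lem:suc_updates}. The principal tool is Assumption~\ref{ass:1}: whenever $m$ learn-true visits to a pair $(s,a)$ occur in sequence and the pair was outside $\close_{k_1}$ at the first such visit, the attempt at visit $k_m$ is guaranteed to succeed.

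First I would partition execution into \emph{epochs}, with boundaries at the successful $U$-updates, giving at most $|S||A|/\eps_1 + 1$ epochs. Crucially, within an epoch no $U$-value changes, so the set $\close_t$ is a constant set $\close_E$. Now fix an epoch $E$ and a pair $(s,a) \not \in \close_E$. I would argue that escape visits of $(s,a)$ within $E$ with $\learn^U(s,a)=\true$ are bounded by $m$: after $m$ such visits the counter $c^U(s,a)$ triggers an attempt, and since $\close_{k_1} = \close_E$ does not contain $(s,a)$, Assumption~\ref{ass:1} forces that attempt to succeed, which immediately terminates $E$. For escape visits of $(s,a)$ with learn-false inside $E$, I would invoke Lemma~\ref{lem:fail->close}: if the flag was flipped to false during $E$, then at that moment $(s,a) \in \close$, and by constancy of $\close$ inside $E$ this contradicts $(s,a)\not\in \close_E$. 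Hence the only learn-false escape visits arise from the flag being carried false across an epoch boundary, and the elif branch flips it back to true on the first visit, accounting for at most one additional escape visit per pair per epoch.

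Next I would combine these per-epoch-per-pair bounds through an amortization that assigns the escape visits to the specific successful update triggering the end of each epoch, rather than naively multiplying $m+1$ by the number of (pair, epoch) combinations. The key observation is that only the pair $(s,a)$ whose attempt terminates $E$ can actually realize its full $m$ learn-true visits inside $E$ (its successful attempt is what ends $E$); other pairs $\not\in \close_E$ may have accumulated only a partial batch inherited from the previous epoch, contributing at most the carry-over count. Charging these carry-over contributions forward to the subsequent triggering update, each successful update is responsible for at most $2m$ escape visits: up to $m$ from the pair it updates within the current epoch, plus up to $m$ from partial-batch carry-overs across the epoch boundary. Multiplying by the $|S||A|/\eps_1$ bound on successful updates yields the claimed $\tfrac{2m|S||A|}{\eps_1}$.

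The main obstacle will be formalising the amortization in the third step cleanly. The counter $c^U(s,a)$ is not reset at epoch boundaries (only at attempts for $(s,a)$ itself), so a pair's escape visits may legitimately span multiple epochs before triggering an attempt; one must justify that this carry-over cannot cause the total to exceed the $2m$-per-update budget. A careful tracking of the potential $\Phi_t := \sum_{(s,a)} c^U_t(s,a)$, together with the observation that Assumption~\ref{ass:1} plus learn-flag bookkeeping rules out any "wasted" learn-true escape visits other than those absorbed by the eventual successful update, should close this gap.
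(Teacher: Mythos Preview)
Your overall plan uses the right ingredients (Assumption~\ref{ass:1}, Lemma~\ref{lem:fail->close}, Lemma~\ref{lem:suc_updates}), but the epoch-based amortization in step~3 does not close. Two concrete issues:

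\textbf{The claim $K_{k_1}=K_E$ can fail.} Inside an epoch $E$, the counter $c^U(s,a)$ need not start at~$0$; some of the $m$ visits making up the first attempt in $E$ may lie in earlier epochs where $(s,a)$ could have been in $K$. So the first attempt in $E$ need not succeed by Assumption~\ref{ass:1}. Only the \emph{second} attempt in $E$ (whose $k_1$ is guaranteed to be in $E$) is forced to succeed. Thus the per-pair per-epoch bound is already $2m$ rather than~$m$, and only when that pair is the one terminating the epoch.

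\textbf{The global amortization over-charges.} The claim that ``each successful update is responsible for at most $2m$ escape visits'' fails because in a single epoch $E$ there can be many pairs $(s',a')\notin K_E$, each visited up to $m-1$ times with learn-true without triggering its own attempt. All of these escape visits would have to be charged to the one successful update that ends $E$, giving a per-update cost that can be as large as $m\cdot|S||A|$ rather than $2m$. Your potential $\Phi_t=\sum c^U_t$ does not repair this: attempts at \emph{non-escape} pairs also reset their counters, so decreases of $\Phi$ cannot be tied to escape visits alone.

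The paper avoids the global accounting entirely by arguing \emph{per pair}: it shows that whenever $(s,a)\notin K_t$ is played, a successful update of $U(s,a)$ \emph{for that same pair} occurs within the next $2m$ visits to $(s,a)$. Hence the escape visits to each fixed $(s,a)$ are grouped into batches of size at most $2m$, one per successful update of $U(s,a)$; since there are at most $1/\eps_1$ such updates per pair, this yields $2m/\eps_1$ escape visits per pair and the stated bound after summing. Recasting your argument in this per-pair form (charge escape visits of $(s,a)$ only to successful updates of $(s,a)$, not to arbitrary successful updates) is what makes the amortization go through.
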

\begin{proof}
We show that whenever $(s,a) \not \in \close_t$ for some time $t$, then in at most $2m$ more visits to the state $s$ while action $a$
is played a successful update of the U-value estimate $U(s,a)$ will occur.

Assume $(s,a) \not \in \close_t$ and $\learn^U_t(s,a)=\false$. It follows that the last attempt to update the U-value estimate $U(s,a)$ was not
successful. Let $t'$ be the time of the last attempt to update $U(s,a)$. We have that $t' \leq t$ and by Lemma~\ref{lem:fail->close} we have that 
$(s,a) \in \close_{t'+1}$. It follows there was a successful update of some U-value estimate since time $t'$ and before time $t$, otherwise
$\close_{t'} = \close_t$. By the construction of the algorithm, we have that $\learn_{t+1} = \true$ and by Assumption~\ref{ass:1} the next
attempt to update the U-value estimate $U(s,a)$ will be successful.

Assume $(s,a) \not \in \close_t$ and $\learn_t(s,a)=\true$. It follows from the construction of the algorithm, that in at most $m$ more
visits to state $s$ while action $a$ is played an attempt to update the estimate $U(s,a)$ will occur. Suppose this attempt
takes place at time $q \geq t$ and the $m$ most recent visits to state $s$ while action $a$ was played  happened at times
$k_1, k_2, \ldots, k_m = q$. There are two possibilities: (i)~If $(s,a) \not \in \close_{k_1}$ then by Assumption~\ref{ass:1} 
the attempt to update the estimate $U(s,a)$ at time $q$ will be successful; (ii)~If $(s,a) \in \close_{k_1}$, we have that $\close_{k_1} \not = \close_{t}$.
It follows there was a successful update of some U-value estimate ensuring that the learning flag $\learn_t(s,a)$ remains set to true even if the
update attempt at time $q$ will not be successful. If the update at time $q$ is not successful, it follows that $(s,a) \not \in \close_{q+1}$, and by Assumption~\ref{ass:1}
the next attempt to update $U(s,a)$ will succeed.

By Lemma~\ref{lem:suc_updates} the number of successful updates of the U-value estimate $U(s,a)$ is bounded by $\frac{\vert\states\vert \vert\act\vert }{\eps_1}$ and by the previous arguments we have
that whenever for some $t$ we have that $(s,a) \not \in \close_t$ then in at most $2m$ more visits to state $s$ while action $a$ is played, there
will be a successful attempt to update the estimate $U(s,a)$. The desired result follows.
\hfill
 \qed
\end{proof}

\begin{lemma}
\label{lem:finHor}
Let $\mdp$ be an MDP, $K$ a set of state action pairs, $\mdp'$ an arbitrary MDP, 
that coincides with $M$ on $K$ (identical transition function and identical (non-)target state),
$\straa$ a strategy, and $T \in \Nset$ a positive number. Let $A_M$ be the event that a state-action pair not in $K$ is 
encountered in a trial generated from the state $s_0$ in $\mdp$ by following strategy $\straa$ for $T$ turns. Then,

$$U^{\straa}_{M}(s_0,T) \geq U^{\straa}_{M'}(s_0,T) - \Prb(A_M)$$
\end{lemma}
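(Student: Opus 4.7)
The plan is to use a standard coupling argument: since $\mdp$ and $\mdp'$ agree on the set $K$ of state-action pairs (including which states are target / non-target), the distributions over length-$T$ trajectories induced by $\straa$ from $s_0$ coincide on the event $\neg A$, i.e. on trajectories that never leave $K$ in the first $T$ steps. First I would set up the probability spaces: let $\Prb_\mdp$ and $\Prb_{\mdp'}$ denote the measures over $T$-step trajectories induced by $\straa$ from $s_0$ in $\mdp$ and $\mdp'$, and let $A_\mdp$, $A_{\mdp'}$ denote the respective ``escape $K$ within $T$ steps'' events. Let $R$ denote the event ``the trajectory visits a target state within $T$ steps''. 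Then $U^\straa_\mdp(s_0,T) = \Prb_\mdp(R)$ and similarly for $\mdp'$.

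The key step is a coupling observation: the restriction of $\Prb_\mdp$ and $\Prb_{\mdp'}$ to trajectories that remain inside $K$ for all of the first $T$ steps is identical. This is because the only data used to sample the next state-action pair along such a trajectory are $\straa$ and the transition probabilities on pairs in $K$, and these are the same in both MDPs by hypothesis. A short induction on the step index makes this precise. In particular, $\Prb_\mdp(A_\mdp) = \Prb_{\mdp'}(A_{\mdp'})$, and, since whether a $K$-trajectory hits a target is determined by the same data,
\[
\Prb_\mdp(R \cap \neg A_\mdp) \;=\; \Prb_{\mdp'}(R \cap \neg A_{\mdp'}).
\]

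From here the bound is a one-line calculation. Splitting on $A$:
\begin{align*}
U^\straa_\mdp(s_0,T) - U^\straa_{\mdp'}(s_0,T)
&= \bigl(\Prb_\mdp(R \cap \neg A_\mdp) + \Prb_\mdp(R \cap A_\mdp)\bigr) \\
&\quad{} - \bigl(\Prb_{\mdp'}(R \cap \neg A_{\mdp'}) + \Prb_{\mdp'}(R \cap A_{\mdp'})\bigr) \\
&= \Prb_\mdp(R \cap A_\mdp) - \Prb_{\mdp'}(R \cap A_{\mdp'}) \\
&\geq 0 - \Prb_{\mdp'}(A_{\mdp'}) \;=\; -\Prb_\mdp(A_\mdp),
\end{align*}
which rearranges to the claimed inequality. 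The main obstacle, such as it is, is just being careful with the coupling: one must formally argue that the cylinder-set probabilities of histories that stay inside $K$ agree in $\mdp$ and $\mdp'$, despite the two MDPs possibly differing wildly off $K$; everything else is routine. No assumption on $\mdp'$ (e.g.\ EC-freeness, matching state space outside $K$) is needed because the argument only ever looks at the first $T$ steps and only uses agreement on $K$.
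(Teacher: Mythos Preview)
Your proposal is correct and follows essentially the same argument as the paper: both split the $T$-step reachability probability according to whether the trajectory stays inside $K$, use that the two MDPs induce identical measures on trajectories that never leave $K$ (so those terms cancel), and bound the remaining difference by the probability of escaping $K$. You are slightly more explicit than the paper in noting that $\Prb_\mdp(A_\mdp)=\Prb_{\mdp'}(A_{\mdp'})$, which the paper uses without comment when it identifies $\sum_{p_T\in\neg K_T}\Prb^{\straa}_{\mdp'}(p_T)$ with $\Prb(A_M)$.
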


\begin{proof}
For a finite path $p_T = s_0 a_0 s_1 a_1 \cdots a_{T-1} s_T \in \fpaths_s$, let $\Prb^\straa_M(p_T)$ denote the probability of path $p_T$ in $\mdp$ when
executing strategy $\straa$ from state $s_0$. Let $K_T$ be the set of all paths $p_T$ starting in $s_0$ of length $T$ such that all the state-action
pairs $(s,a)$ in $p_T$ are in $K$.
  Similarly, we will write $\neg K_T$ for
all paths of length $T$ starting in $s_0$ not in $K_T$, i.e., there exists a state-action pair $(s,a)$ in the $p_T$ such that $(s,a) \not \in K$.
Let $\reached(p_T)$ be 
a function that returns $1$ if the path $p_T$ reaches the target state $\one$, and $0$ otherwise.
Now we have the following:

\begin{align*}
U^{\straa}_{M'}(s_0,T)-U^{\straa}_{M}(s_0,T) &=  \sum_{p_T \in K_T} (\Prb^\straa_{M'}(p_T) \reached(p_T) - \Prb^\straa_M(p_T) \reached(p_T)) +\\
& \sum_{p_T \in \neg K_T} (\Prb^\straa_{M'}(p_T) \reached(p_T) - \Prb^\straa_M(p_T) \reached(p_T))\\
& = \sum_{p_T \in \neg K_T} (\Prb^\straa_{M'}(p_T) \reached(p_T) - \Prb^\straa_M(p_T) \reached(p_T)) \\
& \leq \sum_{p_T  \in \neg K_T} (\Prb^\straa_{M'}(p_T) \reached(p_T)) \\
& \leq \sum_{p_T \in \neg K_T} (\Prb^\straa_{M'}(p_T)) = \Prb(A_M)
\end{align*}
The first step in the derivation above splits the sum, according to the set $K_T$. The first term can be eliminated as for paths $p_T$ in $K_T$ visit only
states-action pairs that are common to both MDPs.

\hfill
 \qed

\end{proof}

\begin{lemma}
\label{lem:apBound}
Given a Markov chain $M$, a state $s$ in the Markov chain, $p_m$ the minimal positive transition probability, and $\tau \in \reals^+$, then for $T \geq \frac{\states \cdot \ln(2/\tau)}{{p_m}^{\states}}$ we have:
$$ V_M(s) - V_M(s,T) \leq \tau$$
\end{lemma}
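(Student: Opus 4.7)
I first rewrite the quantity of interest. In the Markov chain $M$ the value $V_M(s)$ is the probability of eventually reaching the target $\one$, while $V_M(s,T)$ is the probability of reaching $\one$ within $T$ steps. Hence
\[
  V_M(s) - V_M(s,T) \;=\; \Prb_{M,s}\bigl(\text{reach } \one \text{ only after strictly more than } T \text{ steps}\bigr) \;\leq\; q_T(s),
\]
where $q_T(s) := \Prb_{M,s}(\text{no terminal state reached within the first } T \text{ steps})$. It therefore suffices to prove the pure ``escape'' bound $q_T(s) \leq \tau$ for the stated $T$.

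The heart of the argument is a standard geometric decay estimate. I would show that from every non-terminal state $s'$, a terminal state is reached within $|S|$ steps with probability at least $p_m^{|S|}$. In the intended usage the Markov chain $M$ arises by fixing a strategy in an EC-free MDP, so by Assumption-EC a terminal state is reached almost surely from every state; in particular there is a simple (hence length-at-most-$|S|$) path from $s'$ to $\{\zero,\one\}$, and since each of its transitions has probability $\geq p_m$, the whole path has probability $\geq p_m^{|S|}$. Consequently $q_{|S|}(s') \leq 1 - p_m^{|S|}$ for every non-terminal $s'$.

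Iterating via the Markov property, conditioning on the state after $k\cdot|S|$ steps, I obtain $q_{k|S|}(s) \leq (1-p_m^{|S|})^k$ for every $k\in\Nset$; and since $q_T$ is monotonically non-increasing in $T$, setting $k=\lfloor T/|S|\rfloor$ and using $1-x \leq e^{-x}$ yields
\[
  q_T(s) \;\leq\; \exp\!\bigl(-\lfloor T/|S|\rfloor \cdot p_m^{|S|}\bigr).
\]
A direct calculation shows that the right-hand side is at most $\tau$ whenever $T \geq |S|\ln(2/\tau)/p_m^{|S|}$: the extra factor of $2$ inside the logarithm provides exactly the slack needed to absorb both the floor in $\lfloor T/|S|\rfloor$ and the looseness of $1-x \leq e^{-x}$ (for $\tau\geq 1$ the bound is trivial).

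The only genuinely non-syntactic ingredient is the existence of a path of length at most $|S|$ to $\{\zero,\one\}$ from every non-terminal state; this is the step where Assumption-EC (inherited from the setting in which this lemma is applied) is indispensable. Everything else is a one-line Markov-chain computation, so I expect this to be the only place where care is needed.
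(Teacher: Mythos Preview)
Your proposal is correct and follows essentially the same route as the paper. The paper's proof also rewrites $V_M(s)-V_M(s,T)$ as the probability $V_M^{>T}(s)$ of first hitting the target after time $T$, and then bounds this by $2c^T$ with $c=e^{-p_m^{|S|}/|S|}$; the remaining algebra is identical to yours. The only difference is that the paper does not prove the exponential decay bound itself but imports it as a black box (Lemma~23 of~\cite{BKK11}), whereas you supply the standard self-contained argument: a length-$|S|$ path to $\{\zero,\one\}$ exists from every non-terminal state, giving $q_{|S|}\leq 1-p_m^{|S|}$, and then iterate in blocks of $|S|$ steps. Your derivation is thus slightly more informative, at the cost of having to manage the floor in $\lfloor T/|S|\rfloor$ explicitly; the paper's cited bound $2c^T$ already has the constant $2$ baked in for the same purpose.
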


\begin{proof}
We can express $V_M(s)$ as a sum of $V^{\leq T}_M(s)$ the probability to reach the target state within $T$ timesteps and $V^{>T}_M(s)$ the probability 
to reach the target state for the first time after $T$ steps. Then:
$$ V_M(s) - V_M(s,T)   = V^{\leq T}_M(s) + V^{> T}_M(s) - V_M(s,T) = V^{> T}_M(s) $$
It follows we need to show that $V^{> T}_M(s) \leq \tau$. We have by Lemma~23 of~\cite{BKK11}, that 
$ V^{> T}_M(s) \leq 2c^T$, where $c=  e^{-x^n/n}$.

\begin{align*}
2c^T & \leq \tau \hspace{6em} \Leftrightarrow \\
T \ln c & \geq \ln \tau/2  \hspace{4em} \Leftrightarrow  \\
T & \geq \frac{\ln \tau/2}{\ln c} \hspace{4em} \Leftrightarrow  \\
T & \geq \frac{\ln \tau/2}{-\frac{p_m^\states}{\states}} \hspace{4em} \Leftrightarrow  \\
T & \geq  \frac{\states \ln 2/\tau}{p_m^\states} \hspace{4em} 
\end{align*}
\hfill
\qed
\end{proof}

\begin{lemma}
\label{lem:upper}
Let $\mdp$ be an MDP that satisfies Assumption~\ref{ass:correctMDP} and $\eps, \delta \in \reals^{+}$ two positive real numbers. If Algorithm~\ref{alg:pac} is executed on
$\mdp$ it follows an $\eps/2$-optimal strategy on all but $\mathcal{O}(\frac{\zeta T}{\eps_2}\ln(\frac{1}{\delta}))$ steps with probability at least  $1-\delta/2$.
\end{lemma}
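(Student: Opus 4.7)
The plan is to adapt the standard PAC analysis of delayed Q-learning (Strehl et al.) to the undiscounted reachability setting, where the usual discount-based horizon is replaced by an effective horizon $T$ supplied by Lemma~\ref{lem:apBound}. Throughout the argument, the strategy actually played by the algorithm at time $t$ is the greedy strategy $\pi_t = \sigma_{U_t}$ with respect to the current upper-bound estimates, and the goal is to show that $V^{\pi_t}_{\mdp}(s) \geq V^*(s) - \eps/2$ for all but few timesteps $t$ with current state $s$.

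First, I would condition on the high-probability events already established: $U_t \geq V^*$ pointwise at all times (Lemma~\ref{lem:uppBound}) and Assumption~\ref{ass:1} holding throughout (Lemma~\ref{lem:a1Holds}). These together fail with probability at most $\delta/3$, leaving the remaining $\delta/6$ budget for the concentration argument below. Next, I would fix a horizon $T$ large enough, using Lemma~\ref{lem:apBound} applied to the Markov chain induced by any memoryless strategy, so that the $T$-step reachability value approximates the infinite-horizon value up to $\tau := \eps/8$. This $T$ is polynomial in $|\states|$ and $1/p_{\min}$, and supplies the factor $T$ appearing in the statement.

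Second, for each timestep $t$, I would build an auxiliary MDP $\mdp'_t$ that coincides with $\mdp$ on the pairs in $K_t$ (the ``closed'' set) and is modified outside $K_t$ to make the algorithm's current estimate $U_t$ an exact Bellman solution. On $K_t$, the defining inequality $U_t(s,a) - \sum_{s'} \Delta(s,a)(s') U_t(s') \leq 3\eps_1$ ensures that $U_t(s) \leq V^{\pi_t}_{\mdp'_t}(s,T) + 3\eps_1 T + \tau$ by unrolling Bellman's equation for $T$ steps. Combining with $V^*(s) \leq U_t(s)$ and applying Lemma~\ref{lem:finHor} to transport the bound from $\mdp'_t$ back to $\mdp$ yields
\[
 V^{\pi_t}_{\mdp}(s) \;\geq\; V^*(s) - 3\eps_1 T - \tau - \Pr(A_{\mdp}),
\]
where $A_\mdp$ is the event that the $T$-step trajectory from $s$ under $\pi_t$ visits some pair outside $K_t$. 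Choosing $\eps_1 \leq \eps/(24 T)$ makes the deterministic error terms at most $\eps/4$, so $\pi_t$ is $\eps/2$-optimal at $s$ whenever $\Pr(A_\mdp) \leq \eps/8$.

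Third, I would bound the number of ``bad'' timesteps for which $\Pr(A_\mdp) > \eps/8$. By Lemma~\ref{lem:outsideClose}, the total number of timesteps actually spent at non-closed pairs is at most $\zeta := 2m|\states||\act|/\eps_1$. A Hoeffding-type concentration over independent length-$T$ windows then gives that, with probability at least $1 - \delta/6$, the number of bad timesteps is at most $O((T/\eps)\cdot \zeta \cdot \ln(1/\delta))$, which is the bound claimed in the lemma. A final union bound over the three failure events (Lemmas~\ref{lem:a1Holds}, \ref{lem:uppBound}, and the window concentration) keeps the total failure probability below $\delta/2$. The main obstacle, and the place where the undiscounted setting really differs from the discounted original, is the third step: without a discount factor we have no automatic geometric tail, so the counting must be done by explicitly partitioning time into length-$T$ windows and invoking Lemma~\ref{lem:apBound} to guarantee that within each window the $T$-truncated analysis is faithful to the true value.
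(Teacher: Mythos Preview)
Your overall plan closely follows the paper's proof: condition on the good events from Lemmas~\ref{lem:a1Holds} and~\ref{lem:uppBound}, build an auxiliary MDP agreeing with $\mdp$ on $K_t$, transfer values via Lemma~\ref{lem:finHor}, and use a Chernoff-type argument over length-$T$ windows to bound the number of bad timesteps. Two points distinguish your version from the paper's.

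\emph{Different error bound.} You obtain $U_t(s) \leq V^{\pi_t}_{\mdp'_t}(s,T) + 3\eps_1 T + \tau$ by unrolling the Bellman inequality $T$ times, and accordingly set $\eps_1 \leq \eps/(24T)$. The paper instead proves a dedicated Lemma~\ref{lem:matrix}, which interprets $U_t - V^{\pi_t}_{\mdp'}$ as the expected total cost (at most $3\eps_1$ per step) accumulated before absorption in $\{0,1\}$, bounding it by $3\overline{t}\,\eps_1$ with $\overline{t} = |\states|/(p_{\min}/\maxE)^{|\states|}$; this gives $\eps_1 = \eps/(4\overline{t})$. Since $T$ and $\overline{t}$ differ only by a $\ln(1/\eps_2)$ factor, both routes work, but your unrolling still needs the residual $E[U_t(s_T)]$ to be controlled by $\tau$, i.e.\ that the chain induced by $\pi_t$ on $\mdp'_t$ reaches $\{0,1\}$ almost surely. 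Lemma~\ref{lem:apBound} alone does not give this; under Assumption~\ref{ass:correctMDP} it holds only because $\pi_t$ randomises uniformly over all $U_t$-maximising actions and every value-$1$ EC has an outgoing action with $U_t = 1$. The paper makes exactly this argument inside the proof of Lemma~\ref{lem:matrix}, and you would need it too.

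\emph{A gap in the window argument.} Your counting step links ``$\Pr(A_\mdp) > \eps/8$'' to an actual visit of a non-closed pair and then invokes Lemma~\ref{lem:outsideClose}. But $\Pr(A_\mdp)$ is computed under the fixed snapshot policy $\pi_t$, whereas the algorithm's trajectory over the next $T$ steps follows the evolving policy $\mathcal{A}_t$, which agrees with $\pi_t$ only until a successful update occurs. The paper closes this by also tracking $\Pr(U)$, the probability of a successful update within the window, and case-splitting on whether $\Pr(A_\mdp) + \Pr(U) \geq \eps_2$: in each bad window either $A_\mdp$ or $U$ occurs with probability at least $\eps_2/2$, and since successful updates are themselves bounded by $|\states||\act|/\eps_1$ (Lemma~\ref{lem:suc_updates}) they are absorbed into $\zeta$. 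Without this bridge, the Chernoff step does not connect the probability under $\pi_t$ to the resource consumed under $\mathcal{A}_t$.
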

\begin{proof}

Suppose Algorithm~\ref{alg:pac} is run on a MDP $\mdp$. We assume Assumption~\ref{ass:1} holds, and that $U_t(s,a) \geq U^*(s,a)$ holds
for all time-steps and all state-action pairs $(s,a) \in \states \times \act$. By Lemmas~\ref{lem:a1Holds} and \ref{lem:uppBound} we have
that the probability that either one of these assumption is broken is at most $\frac{2\delta}{6}$.

Consider a timestep $t$, and let $\pacStraa_{t}$ denote the strategy executed by Algorithm~\ref{alg:pac}. Let $\greStraa_t$ be the memoryless
strategy given by the U-value estimates at time $t$, i.e., $\greStraa_t(s) = \argmax_{a \in \enab(s)} U_t(s,a)$. Let $s_t$ be the state of the
MDP occupied at time $t$.

We define a new MDP $\mdp'$, that is identical to the original MDP $\mdp$ on state-action pairs that are in $\close_t$. Let $\one$ (resp. $\zero$) be the target 
(resp. losing absorbing) state in MDP $\mdp$. Given a state-action pair $(s,a) \not \in \close_t$, we define the probability
to reach the target state $\one$ from $s$ while playing $a$ to $U_t(s,a)$ and with the remaining probability $1-U_t(s,a)$ the loosing absorbing
state $\zero$ is reached. 

Let $T \geq \frac{\states \ln 2 / \eps_2}{{p_m}^{\states}}$ (see Lemma~\ref{lem:apBound}), such that 
$V^{\greStraa_t}_{\mdp'}(s_t) - V^{\greStraa_t}_{\mdp'}(s_t,T)  \leq \eps_2$.
Let $\Prb(A_M)$ denote the probability of reaching a state-action pair $(s,a)$ not in $K_t$, while playing the strategy $\pacStraa_t$ from state $s_t$
in MDP $\mdp$ for $T$ turns. Let $\Prb(U)$ denote the probability of performing a successful update of the U-value estimate of some
state-action pair $(s,a)$, while playing the strategy $\pacStraa_t$ from state $s_t$ in MDP $\mdp$ for $T$ turns.
We have that:
\begin{align*}
V^{\pacStraa_t}_M(s_t,T) &\geq V^{\pacStraa_t}_{M'}(s_t,T)  - \Prb(A_M)\\
& \geq V^{\greStraa_t}_{M'}(s_t,T)  - \Prb(A_M) - \Prb(U) \\
& \geq V^{\greStraa_t}_{M'}(s_t) - \eps_2  - \Prb(A_M) - \Prb(U) 
\end{align*}
The first step follows from Lemma~\ref{lem:finHor}, the second inequality follows from the fact that $\pacStraa_t$ behaves as $\greStraa_t$ as long as no
U-value estimate is changed. The last step follows from Lemma~\ref{lem:uppBound}.

Next we consider two mutually exclusive cases. 

\smallskip\noindent\textbf{First case:} First suppose that $\Prb(A_M) + \Prb(U) \geq \eps_2$, i.e.,
by following strategy $\pacStraa_t$ the algorithm will either perform a U-value estimate update in $T$ timesteps  or encounter a
state action pair $(s,a) \not \in \close_t$ in $T$ timesteps, with probability at least $\eps_2 / 2$ (since $\Prb(A_M \text{ or } U) \geq (\Prb(A_M) + \Prb(U))/2$).
By Lemma~\ref{lem:suc_updates} the former event cannot happen more that $\frac{\vert\states\vert \vert\act\vert}{\eps_1}$ times and by Lemma~\ref{lem:outsideClose} the latter
event cannot happen more than $\frac{2m\vert\states\vert \vert\act\vert }{\eps_1}$ times. We are interested in the number of steps
after which every state-action pair $(s,a)$ will have its U-value estimate updated $\frac{1}{\eps_1}$ times with probability at
least $1 - \frac{\delta}{6}$.

Let $\zeta =(2m+1)\frac{\vert\states\vert \vert\act\vert}{\eps_1}$, $u = 4 \ln (\frac{6}{\delta}) +1$, and $\gamma = \frac{u-1}{u}$. We also assume that the probability of event $A_M \text{ or } U$
happening in $T$ timesteps is exactly $\eps_2$, as higher probabilities can only decrease the number of steps needed for updating all of the U-value
estimates with sufficiently high probability. 

Let $k = \frac{\zeta}{\eps_2} u$. We define a random variable $X_i$ for $0 \leq i \leq k$
 that is equal to $1$ if event $A_M \text{ or } U$ happened 
between times $i T$ and $(i+1) T $ and $0$ otherwise, and let $S = \sum_{i=0}^{k} X_i$.
We want to show that $\Prb(S \leq \zeta) \leq \frac{\delta}{6}$.

By a variant of Chernoff bound~\cite{AV79} and the fact that $\expect[S] \geq k \cdot \eps_2$ we have: 
$$\Prb(S < (1-\gamma) k \cdot \eps_2)  \leq \Prb(S < (1-\gamma) \expect[S]) \leq e^{\frac{-\gamma^2 \expect[S]}{2}}  \leq e^{\frac{-\gamma^2 k \cdot \eps_2}{2}}$$
As $(1-\gamma) k \cdot \eps_2 = (1-\gamma) \zeta \cdot u = \zeta$, we have that $\Prb(S \leq \zeta) \leq e^{\frac{-\gamma^2 \zeta u}{2}}$ and it remains to show that
  $e^{\frac{-\gamma^2 \zeta u}{2}} \leq  \frac{\delta}{6}$:

  $$e^{\frac{-\gamma^2 \zeta u}{2}} = e^{\frac{-\left(\frac{(u-1)}{u}\right)^2 \zeta u}  {2}}  =  e^{-\frac{(u-1)^2}{2 u} \zeta} \leq e^{-\frac{(u-1)^2}{2 u} } $$
  $$= e^{-\frac{2(u-1)(u-1)}{4 u}} \leq e^{-\frac{u-1}{4}}  = e^{-\frac{(4 \ln (\frac{6}{\delta}) +1)-1}{4} }=  e^{-\ln((\frac{6}{\delta}))} = \frac{\delta}{6}$$

It follows that after $\mathcal{O}(\frac{\zeta T}{\eps_2}\ln(\frac{1}{\delta}))$ timesteps where $\Prb(A_M) + \Prb(U) \geq \eps_2$ all the U-value estimates are updated $\frac{1}{\eps_1}$ times with probability at least $1 - \frac{\delta}{6}$, and by Lemma~\ref{lem:suc_updates} no further updates are possible.

\smallskip\noindent\textbf{Second case:}
For the second case suppose that $\Prb(A_M) + \Prb(U) < \eps_2$, we prove the following statement for all states $s$ in Lemma~\ref{lem:matrix}
$$0 \leq U_t(s) - V^{\greStraa_t}_{M'}(s) \leq 3 \cons \eps_1$$
It follows that:
\begin{align*}
V^{\pacStraa_t}_M(s_t) & \geq V^{\pacStraa_t}_M(s_t,T) \\
& \geq V^{\greStraa_t}_{M'}(s_t) - \eps_2  - \Prb(A_M) - \Prb(U) \\
& \geq V^{\greStraa_t}_{M'}(s_t) - \eps_2  - \eps_2\\
& \geq U_{t}(s_t) - 3 \cons \eps_1 - 2 \eps_2  \\
& \geq V^*(s_t) - 3\cons \eps_1 - 2 \eps_2  \\
\end{align*}
By setting $\eps_1 = \frac{\eps}{4\cons} = \frac{\eps \cdot (p_{\min} / \max_{s \in \states} \enab(s))^{\vert\states\vert}}{12 \vert\states\vert}$  and $\eps_2 = \eps/8$ we get the desired results, i.e., 
$$V^{\pacStraa_t}_M(s_t) \geq V^*(s_t) - \eps/2$$
is true on all but $\mathcal{O}(\frac{\zeta T}{\eps_2}\ln(\frac{1}{\delta}))$ timesteps.

\hfill
\qed

\end{proof}

\begin{lemma}
\label{lem:matrix}
Let $s\in \states$ be a state and $V_t$, $V^{\greStraa_t}_{M'}$ are defined as in the proof of Lemma~\ref{lem:upper}, and $\cons = \frac{\vert \states \vert}{{(p_{\min} / \max_{s \in \states}\enab(s))}^{\vert \states \vert}}$ then for all states $s\in \states$ we have:
$$  0 \leq U_t(s) - V^{\greStraa_t}_{M'}(s) \leq 3 \cons  \eps_1$$
\end{lemma}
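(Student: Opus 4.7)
Set $W(s) := U_t(s) - V^{\greStraa_t}_{\mdp'}(s)$. The plan is to derive a one-sided recursion for $W$ in each direction and bound it separately, using the structure of $\mdp'$. For states $s$ with $(s, \greStraa_t(s)) \notin \close_t$, the construction of $\mdp'$ gives $V^{\greStraa_t}_{\mdp'}(s) = U_t(s, \greStraa_t(s)) = U_t(s)$ (since $\greStraa_t(s)$ is the argmax of $U_t(s,\cdot)$), so $W(s) = 0$ there; on all other states the $\mdp'$-transitions coincide with those of $\mdp$ and the inequality for $W$ propagates through the true kernel.

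For the upper bound, at any state $s$ with $(s, \greStraa_t(s)) \in \close_t$ the defining inequality of $\close_t$ reads $U_t(s) \leq 3\eps_1 + \sum_{s'} \Delta(s, \greStraa_t(s))(s')\, U_t(s')$. Subtracting the Bellman equation
\[
V^{\greStraa_t}_{\mdp'}(s) = \sum_{s'} \Delta(s, \greStraa_t(s))(s')\, V^{\greStraa_t}_{\mdp'}(s')
\]
yields $W(s) \leq 3\eps_1 + \sum_{s'} \Delta(s, \greStraa_t(s))(s')\, W(s')$. Combined with $W \equiv 0$ on the exit states, iterating this recursion gives $W(s) \leq 3\eps_1 \cdot N(s)$, where $N(s)$ is the expected absorption time under $\greStraa_t$ in $\mdp'$ started at $s$. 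To bound $N(s) \leq \cons$ I reuse the argument of Lemma~\ref{lem:apBound}: by Assumption-EC, $\mdp$ is EC-free, so from every state the Markov chain induced by $\greStraa_t$ admits a path of length $\leq |\states|$ to $\{\one, \zero\}$, and that path has probability at least $(p_{\min}/\maxE)^{|\states|}$, the $\maxE$ factor absorbing the uniform tie-breaking performed by $\greStraa_t$. A standard geometric tail estimate then gives $N(s) \leq |\states|/(p_{\min}/\maxE)^{|\states|} = \cons$, whence $W(s) \leq 3\cons\eps_1$.

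For the lower bound, the plan is to show that $U_t$ is a supersolution of the Bellman operator $T'_{\greStraa_t}$ induced by $\greStraa_t$ on $\mdp'$, and conclude $U_t \geq V^{\greStraa_t}_{\mdp'}$. The key auxiliary invariant is $U_t(s, a) \geq \sum_{s'} \Delta(s, a)(s')\, U_t(s')$ for every $(s, a)$, which I intend to carry along with the high-probability events of Lemma~\ref{lem:a1Holds} and Lemma~\ref{lem:uppBound}. It holds at initialisation trivially (because $U \equiv 1$ on all non-zero states and $U(\zero) = 0$); it is preserved by every successful update because the update trigger requires $U(s, a) - \accum^U(s, a)/m \geq 2\eps_1$, while the Hoeffding concentration already used in the proof of Lemma~\ref{lem:uppBound} gives $\accum^U(s, a)/m + \eps_1 \geq \sum_{s'} \Delta(s, a)(s')\, U_{k_1}(s')$ with high probability, and the monotonic non-increase of $U$ in time then shows the invariant survives subsequent updates at other pairs. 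Assuming the invariant, for $s$ with $(s, \greStraa_t(s)) \in \close_t$ we get $U_t(s) \geq \sum_{s'} \Delta(s, \greStraa_t(s))(s')\, U_t(s') = T'_{\greStraa_t}(U_t)(s)$, while for $(s, \greStraa_t(s)) \notin \close_t$ the definition of $\mdp'$ gives $T'_{\greStraa_t}(U_t)(s) = U_t(s, \greStraa_t(s)) = U_t(s)$. Hence $U_t \geq T'_{\greStraa_t}(U_t)$ pointwise; since $\mdp'$ retains $\{\one\}, \{\zero\}$ as its only absorbing components, Lemma~\ref{lem:perturbation} applies so that $T'_{\greStraa_t}$ has a unique fixed point and iterated monotonic application starting from $U_t$ converges downwards to $V^{\greStraa_t}_{\mdp'}$. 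This yields $U_t \geq V^{\greStraa_t}_{\mdp'}$, i.e., $W \geq 0$.

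The main obstacle I anticipate is a clean justification of the supersolution invariant $U_t(s, a) \geq \sum_{s'} \Delta(s, a)(s')\, U_t(s')$. It must be maintained jointly with the high-probability events of Assumption~\ref{ass:1} and Lemma~\ref{lem:uppBound}, and the subtlety is that decreases of $U$ at other state-action pairs occurring between the most recent successful update at $(s, a)$ and time $t$ can shift the right-hand side; the argument therefore needs monotone non-increase of $U$ combined with the $2\eps_1$ gap in the update trigger, so that the invariant is never broken. Once this invariant is secured, both directions of the lemma reduce to standard Bellman monotonicity and the EC-free absorption-time bound reused from Lemma~\ref{lem:apBound}.
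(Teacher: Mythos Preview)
Your upper-bound argument is essentially the paper's: both express $U_t - V^{\greStraa_t}_{\mdp'}$ as the expected accumulated ``cost'' of at most $3\eps_1$ per step until absorption in $\{\one,\zero\}$, and bound the expected absorption time by $\cons$ via the same geometric estimate. The paper phrases this as two Bellman systems differing by a bounded additive term and appeals to Lemma~\ref{lem:perturbation} for uniqueness, but the content is the same. One difference in scope: you invoke Assumption-EC (no ECs), whereas the paper argues under the weaker Assumption~\ref{ass:correctMDP} (every EC except $\{\zero\}$ has value~1) and shows separately that $\greStraa_t$ almost surely leaves each such EC because it randomises uniformly over the $U_t$-argmax, which must contain a leaving action. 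This extra generality matters when the lemma is reused for ODQL (Theorem~\ref{thm:odql}); for Theorem~\ref{thm:dql} alone your assumption is enough.

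For the lower bound, you and the paper are in the same boat: the paper simply writes $U_t(s)=\sum_{s'}\Delta(s,\greStraa_t(s))(s')U_t(s')+c_s$ with ``some positive $c_s$ bounded by $3\eps_1$'' and never justifies $c_s\ge 0$, which is precisely your supersolution invariant. You at least attempt an argument, but the step you flag as the obstacle is indeed broken as written. The Hoeffding event you cite bounds $\frac{1}{m}\sum_i U_{k_1}(s[i])$ (or $\frac{1}{m}\sum_i U^*(s[i])$) from below, whereas $\accum^U(s,a)/m=\frac{1}{m}\sum_i U_{k_i}(s[i])\le \frac{1}{m}\sum_i U_{k_1}(s[i])$ by monotonicity of $U$; the inequality points the wrong way to conclude $\accum^U(s,a)/m+\eps_1\ge\sum_{s'}\Delta(s,a)(s')U_{k_1}(s')$. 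A rigorous repair would need an Azuma-type martingale bound accommodating the time-varying $U_{k_i}$, or a fresh union bound over the finitely many values each $U_t(\cdot)$ can take. Fortunately the downstream use in Lemma~\ref{lem:upper} only needs the upper inequality $U_t(s)-V^{\greStraa_t}_{\mdp'}(s)\le 3\cons\eps_1$, so this gap (shared with the paper) does not affect the theorem.
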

\begin{proof}
Note that $V^{\greStraa_t}_{M'}$ is the least fixpoint of the following set of Bellman equations:
\begin{align*}
V^{\greStraa_t}_{M'}(\one) & =1\\
V^{\greStraa_t}_{M'}(\zero) & =0\\
V^{\greStraa_t}_{M'}(s) & = \sum_{s' \in \states, a \in \support{\greStraa_t(s)}} \greStraa_t(s)(a)\trans(s,a)(s') \cdot V^{\greStraa_t}_{M'}(s') \hspace{3em} \text{ for } (s,\greStraa_t(s)) \in \close_t \\
V^{\greStraa_t}_{M'}(s) & = U_t(s,\greStraa_t(s)) \hspace{17.1em} \text{ for } (s,\greStraa_t(s)) \not \in \close_t \\
\end{align*}

As the input MDP $\mdp$ satisfies Assumption~\ref{ass:correctMDP}, it follows that also the modified MDP $\mdp'$ satisfies Assumption~\ref{ass:correctMDP}, as no new ECs are introduced.
One can show that whenever an MDP satisfies Assumption~\ref{ass:correctMDP} there exists a unique fixpoint of the Bellman equations above.

Note that $\greStraa_t(s)$ plays uniformly at random actions $a$  that maximize $U_t(s,a)$. Similarly $U_t$ is the greatest fixpoint of the following set of equations:
\begin{align*}
U_t(\one) & =1\\
U_t(\zero) & =0\\
U_t(s) & = \max_{a\in \act} Q_t(s,a) = \sum_{a \in \support{\greStraa_t(s)}}U_t(s,a) \\
& \leq \sum_{s' \in \states, a \in \support{\greStraa_t(s)}} \greStraa_t(s)(a)\trans(s,a)(s') \cdot U_t(s')  + 3 \eps_1 \hspace{3em} \text{ for } (s,\greStraa_t(s)) \in \close_t\\
U_t(s) & = U_t(s,\greStraa_t(s)) \hspace{19.0em} \text{ for } (s,\greStraa_t(s)) \not \in \close_t \\
\end{align*}
where every inequality  given a fixed $\greStraa_t$ can be viewed as a equality $U_t(s)=\sum_{s' \in \states, a \in \support{\greStraa_t(s)}} \greStraa_t(s)(a)\trans(s,a)(s') \cdot U_t(s')  + c^{\greStraa_t}_s$ for some positive $c^{\greStraa_t}_s$ bounded by $3 \eps_1$.
It follows from Assumption~\ref{ass:correctMDP} and Lemma~\ref{lem:perturbation} that also the equations for $U_t$ have a unique fixpoint. We need to bound for all states
$s \in \states$ the difference between $U_t(s)$  $V^{\greStraa_t}_{M'}(s)$ in terms of $\eps_1$.

One can also view the equations for $U_t$ as assigning a positive cost bounded by $3 \eps_1$ to every move of the strategy before the terminal state $\zero$ or $\one$ are reached.
These two states are reached in the Markov chain obtained by playing strategy $\greStraa_t$ with probability $1$. This follows from Assumption~\ref{ass:correctMDP} and from the fact the strategy $\greStraa_t$ plays uniformly all the actions that maximize $U_t(s)$. Every EC in $\mdp'$ satisfies that all the states in the EC (except state $\zero$) have value $1$. It follows that from
every EC in $\mdp'$ with the exception of the terminal states $\one$ and $\zero$ (i)~there exists an action that with positive probability leaves the EC, and (ii)~this action is played by $\greStraa_t$ with positive
probability.

We denote by $c_{\min}$ the lower bound on the minimal transition probability in the Markov chain is $p_{\min}/\max_{s \in \states}\enab(s)$. The probability to
reach the terminal states $\zero, \one$ in $\vert S \vert$ steps is bounded from below by $c_{\min}^{\vert \states \vert}$. The probability not to reach the terminal states
in $\vert \states \vert$ steps is therefore $1- c_{\min}^{\vert \states \vert}$. The expected cost to reach the terminal states is bounded by:
$$3 \eps_1  \vert \states \vert  \sum_{n=0}^{\infty} (1-c_{\min}^{\vert S \vert})^n  = 3 \eps_1  \vert \states \vert  \frac{1}{ c_{\min}^{\vert \states \vert}} =  3 \eps_1  \vert \states \vert  \frac{1}{{(p_{\min} / \max_{s \in \states}\enab(s))}^{\vert \states \vert}}$$

\end{proof}

\smallskip\noindent\textbf{Discussion about the lower bound estimates.}
The case for the lower bounds is simpler, as at timestep $t$ the current greedy strategy $\greStraa^t$  is not influenced by the value 
of the lower bound estimates $L(s,a)$. By dual arguments to the case 
of upper bounds, one can show, that with high probability the lower bound estimates $L(s,a)$ are actual lower bounds. By Lemma~\ref{lem:upper} we have, 
that after $\mathcal{O}(\frac{\zeta T}{\eps_2}\ln(\frac{1}{\delta}))$
steps, the memoryless strategy $\greStraa^*$ determined by the upper bounds is $\eps/2$ optimal with probability $1 - \delta/2$ and no further improvement of the strategy $\greStraa^*$ will occur.
Once we fix the strategy $\greStraa^*$ and the MDP $\mdp$ we obtain a Markov chain in which the lower bounds are being propagated for $\mathcal{O}(\frac{\zeta T}{\eps_2}\ln(\frac{1}{\delta}))$ timesteps, in order to increase the estimates of the lower bound $\eps$ close to the actual value. This fact together with Lemma~\ref{lem:upper} establishes the main theorem:

\begin{theorem}\label{thm:PAC-1MEC}
Let $\mdp$ be an MDP that satisfies Assumption~\ref{ass:correctMDP} and $\eps, \delta \in \reals^{+}$ two positive real numbers. If Algorithm~\ref{alg:pac} is executed on
$\mdp$ it follows an $\eps/2$-optimal strategy on all but $\mathcal{O}(\frac{\zeta T}{\eps_2}\ln(\frac{1}{\delta}))$ steps and $U(s_0) - L(s_0) \leq \eps$ with probability at least  $1-\delta$.
\end{theorem}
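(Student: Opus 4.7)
The plan is to combine the already-established upper-bound analysis (Lemma~\ref{lem:upper}) with a symmetric argument for the lower bounds, and then glue them together via a union bound on the two failure probabilities of size $\delta/2$ each. Concretely, Lemma~\ref{lem:upper} already delivers, with probability at least $1-\delta/2$, both the invariant $U_t(s,a) \geq U^*(s,a)$ throughout the run and the fact that except on $\mathcal{O}(\frac{\zeta T}{\eps_2}\ln(1/\delta))$ steps the strategy $\pacStraa_t$ played is $\eps/2$-optimal in the current state. In particular this implies $U_t(s_0) \leq V^*(s_0) + \eps/2$ for all but $\mathcal{O}(\frac{\zeta T}{\eps_2}\ln(1/\delta))$ many $t$, because the greedy choice coincides with an $\eps/2$-optimal action and the invariant keeps $U$ above $V^*$.

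For the lower bounds the situation is easier because the greedy strategy $\greStraa_t$ determined by $U$ does \emph{not} depend on $L$. The first step is the symmetric counterpart of Lemma~\ref{lem:uppBound}: by an identical Hoeffding argument, with probability at least $1-\delta/6$ each attempted lower update satisfies $\frac{1}{m}\sum_i L^*(s_{k_i}) - L^*(s,a) \leq \eps_1$, and a straightforward induction then gives $L_t(s,a) \leq V^*(s,a)$ at every step. After the last successful upper-bound update has occurred, which is guaranteed within the $\mathcal{O}(\frac{\zeta T}{\eps_2}\ln(1/\delta))$ window by Lemmas~\ref{lem:suc_updates} and~\ref{lem:outsideClose}, the induced memoryless strategy $\greStraa^{*}$ is fixed, so the MDP collapses to a Markov chain; on this chain the $L$-updates are exactly delayed estimations of $V^{\greStraa^{*}}$.

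From this point the analysis mirrors Lemma~\ref{lem:upper} but with the roles of $U$ and $L$ swapped: whenever some $(s,a)$ lies outside a symmetric ``close'' set $\close^{L}_t$ (defined by $\sum_{s'} \Delta(s,a)(s') L_t(s') - L_t(s,a) > 3\bar\eps$), either an $L$-update fires within $2m$ further visits or $A_M$ is hit, and the same Chernoff bound of Lemma~\ref{lem:upper} shows that after another $\mathcal{O}(\frac{\zeta T}{\eps_2}\ln(1/\delta))$ steps we must be in the complementary regime where the analogue of Lemma~\ref{lem:matrix} applies to the single-strategy chain. That analogue, using the same perturbation argument (Lemma~\ref{lem:perturbation}), yields $V^{\greStraa^{*}}(s_0) - L_t(s_0) \leq 3\cons\bar\eps + 2\eps_2 \leq \eps/2$ for the chosen constants $\bar\eps = \eps/(4\cons)$ and $\eps_2 = \eps/8$.

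Putting everything together by a union bound over the four $\delta/6$-failure events (the two Hoeffding concentrations and the two Chernoff arguments bounding the length of the exploration regime), with probability at least $1-\delta$ we simultaneously have $L_t(s_0) \geq V^*(s_0) - \eps/2$ and $U_t(s_0) \leq V^*(s_0) + \eps/2$ for all but $\mathcal{O}(\frac{\zeta T}{\eps_2}\ln(1/\delta))$ steps, which gives $U(s_0) - L(s_0) \leq \eps$. The main obstacle is the lower-bound analogue of Lemma~\ref{lem:upper}: one has to be careful that the exploration-driven attempts at updating $L$ remain frequent enough even after the upper-bound greedy strategy has stabilised, since at that point the trajectory distribution is fixed by $\greStraa^{*}$ and any state-action pair that $\greStraa^{*}$ avoids with probability $1$ simply will never be updated—but such pairs do not influence $L(s_0)$ along the Markov chain induced by $\greStraa^{*}$, which is precisely why restricting to the fixed-strategy chain suffices.
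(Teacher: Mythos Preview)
Your proposal is correct and follows essentially the same approach as the paper. The paper's own ``proof'' of Theorem~\ref{thm:PAC-1MEC} is just the short discussion paragraph preceding the theorem: invoke Lemma~\ref{lem:upper} for the upper bounds, observe that the greedy strategy does not depend on $L$ so a dual Hoeffding argument gives $L_t(s,a)\le V^*(s,a)$ with high probability, and then note that once the upper-bound strategy stabilises to $\greStraa^*$ the process becomes a fixed Markov chain on which the lower bounds propagate within another $\mathcal{O}(\frac{\zeta T}{\eps_2}\ln(1/\delta))$ steps. You spell out precisely this, including the symmetric $\close^L_t$ set, the Lemma~\ref{lem:matrix} analogue, and the observation that state--action pairs avoided by $\greStraa^*$ are irrelevant for $L(s_0)$---details the paper leaves implicit. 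One minor point: your sentence ``this implies $U_t(s_0)\le V^*(s_0)+\eps/2$ \dots\ because the greedy choice coincides with an $\eps/2$-optimal action'' is not quite the right justification; the bound on $U_t(s_0)-V^*(s_0)$ comes from the chain of inequalities in the second case of the proof of Lemma~\ref{lem:upper} (via Lemma~\ref{lem:matrix} applied at $s_t=s_0$), not directly from $\eps/2$-optimality of the action.
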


\section{Proofs of Lemmata \ref{lemma:mec-explore}, \ref{lem:proc} and \ref{lem:mec-is-ec}: Correctness of EC Identification and Collapsing}

\subsection{Proofs of Lemma \ref{lemma:mec-explore}}

\begin{reflemma}{lemma:mec-explore}
Let $c=\exp\left(-\left(p_{\min}/\maxE\right)^{\kappa}\,/\,\kappa\right)$ where $\kappa=|S|\maxE+1$. 
Let $i\geq \kappa$. Assume that the \textsc{Explore} phase in Algorithm~\ref{alg:skeleton} terminates with probability less than $1$. Then, provided the \textsc{Explore} phase does not terminate within $3 i^3$ iterations, the conditional probability that $\app{i}$ is an EC is at least $1-2c^i i^3\cdot \left(p_{\min}/\maxE\right)^{-\kappa}$.
\end{reflemma}
\begin{proof}[Sketch]
The main idea behind the proof is following. Each execution of the explore phase simulates a path $\omega$ of $\mdp$ according to the memoryless strategy determined by the function $U$. In fact, $\omega$ can be seen as a path in a Markov chain $\mathit{MC}$ (i.e., a MDP where every state has exactly one enabled action) obtained from $\mdp$ by fixing the memoryless strategy. Here states of $\mathit{MC}$ correspond to state-action pairs $(s,a)$ of $\mdp$ such that $a$ is chosen in $s$ with a positive probability (we also add an initial state $\sinit$ where the first action is chosen). The chain $\mathit{MC}$ is constructed in such a way that each bottom scc~\footnote{A bottom scc (bottom strongly connected component) is a maximal set $D$ of states (with respect to the subset ordering) such that for all states $s,s'\in D$ the state $s'$ is reachable from $s$ with a positive probability and no state outside of $D$ is reachable from $s$.} corresponds to an end-component in $\mdp$.

Applying Lemma~23 of~\cite{BKK11}, we obtain a bound on the probability that a path starting in $\sinit$ visits a bottom scc of $\mathit{MC}$ in at most $i$ steps. 
Using the same lemma we also bound the probability that a path of $\mathit{MC}$ starting in a state of a bottom scc visits all states of this bottom scc $i+1$ times within $\kappa i (i+1)$ steps.
Putting these two bounds together, we obtain that with probability at most $1-2c^i i^3$, the first $2 i^4$ steps of a path starting in $\sinit$ visit all states of the bottom scc $i+1$ times and all other states at most $i$ times. 
Observe that this bottom scc may contain $\one$, or $\zero$ in which case the  \textsc{Explore} phase terminates within $2i^4$ iterations.
It follows, that with probability at least $1-2c^i i^3$, the \textsc{Explore} phase in Algorithm~\ref{alg:skeleton} either terminates within $2 i^4$ iterations, or $\app{i}$ is an EC.

Finally, to obtain the conditional probability, we observe that if there is a
 bottom scc in $\mathit{MC}$ reachable from $\sinit$ that does not contain
  $\one$, or $\zero$, then such bottom scc is reachable with probability at least
   $\left(p_{\min}/\maxE\right)^{\kappa}$. Using simple probability theory and
    algebra, we obtain that the desired conditional probability is
     $1-2c^i i^3\cdot \left(p_{\min}/\maxE\right)^{-\kappa}$.
\qed
\end{proof}
\begin{proof}
In what follows we denote by $\maxU(s)$ the set of all actions $a\in \enab(s)$ that maximise $U(s,a)$.
Note that the \textsc{Explore} phase samples an infinite path of a finite-state Markov chain $\mathit{MC}$ (i.e., a MDP where every state has exactly one enabled action) whose set of states is $\{\sinit\}\cup \{(s,a)\mid s\in S\wedge a\in \enab(s)\}$ and transitions are defined as follows: 
There is a transition with probability $x$ from $\sinit$ to $(\sinit,a)$ iff $a\in \maxU(\sinit)$ and  $x=1\,/\,|\maxU(\sinit)|$. There is a transition with probability $x$ from $(s,a)$ to $(s',a')$ iff $a'\in \maxU(s')$ and $x=\Delta(s,a)(s')/|\maxU(s')|$
(In other words, $x$ is the probability that $s'$ follows $(s,a)$ and then $a'$ is chosen by the \textsc{Explore} phase in $s'$.) 
Note that in $\exp\left(-\left(p_{\min}/\maxE\right)^{\kappa}\,/\,\kappa\right)$, the number $p_{\min}/\maxE$ is less than or equal to the minimum positive transition probability in $\mathit{MC}$, and $\kappa$ is the number of states of $\mathit{MC}$.

A bottom scc (or a recurrent class) of a Markov chain is a maximal set (with respect to inclusion) of states $D$ such that for all $s,s'\in D$ the state $s'$ is reachable from $s$ with positive probability and no state outside of $D$ is reachable from any state of $D$ with positive probability. It is well known that almost every infinite path initiated in any state of a finite $\mathit{MC}$ visits all states of some bottom scc infinitely many times. Also, observe that each bottom scc of $\mathit{MC}$ determines an end-component of $\mdp$ in a natural way.

Assume that there is a bottom scc reachable from $\sinit$ that contains neither $\one$, nor $\zero$. Let us denote by $R_{app}$ the 
set of all infinite paths of $\mathit{MC}$ starting in $\sinit$ that within $2i^3$ steps
\begin{itemize}
\item visit all states of a bottom scc at least $i+1$ times,
\item visit all other states at most $i$ times.
\end{itemize} 
Let $R_{\one,\zero}$ be the set of all infinite paths that visit $\{\one,\zero\}$ within $2i^3$ steps and let $\bar{R}_{\one,\zero}$ be the complement of $R_{\one,\zero}$. Let $P_{\one,\zero}$ be the probability $\Prb_{\mathit{MC},\sinit}\left(R_{\one,\zero}\right)$, and let $\bar{P}_{\one,\zero}=1-P_{\one,\zero}=\Prb_{\mathit{MC},\sinit}\left(\bar{R}_{\one,\zero}\right)$. 
As bottom scc of $\mathit{MC}$ determine end-components of $\mdp$ we obtain the following: Assuming that the \textsc{Explore} phase in Algorithm~\ref{alg:skeleton} {\em does not} terminate within $2 i^3$ iterations, $\app{i}$ is an end-component with (conditional) probability at least $\Prb_{\mathit{MC},\sinit}\left(R_{app}\mid \bar{R}_{\one,\zero}\right)$. So it suffices to bound 
the conditional probability $\Prb_{\mathit{MC},\sinit}\left(R_{app}\mid \bar{R}_{\one,\zero}\right)$.

First, we show the following inequality:
\begin{equation}\label{eq:appear}
\Prb_{\mathit{MC},\sinit}\left(R_{app}\right)\quad \geq \quad 1-2c^i i^3 
\end{equation}
By Lemma~23 of~\cite{BKK11}, with probability at most $2c^i$, an infinite path of $\mathit{MC}$ starting in $\sinit$ {\em does not} visit a bottom scc
of $\mathit{MC}$ within $i$ steps. 
Let $D=\{s_0,s_1,\ldots,s_k\}$ be a bottom scc of $\mathit{MC}$. Observe that with probability at most $2c^i$, an infinite path starting in $s_{\ell}$ {\em does not} visit $s_{\ell+1\mod k}$ within $i$ steps. It follows that with probability at most $k 2 c^i$, an infinite path starting in $s_{\ell}$ {\em does not} visit all states of $D$ within $k i$ steps. However, then with probability at most $(i+1) k 2 c^i\leq (i+1)\kappa 2 c^i$, 
an infinite path starting in $s_{\ell}$ {\em does not} visit all states of $D$ at least $i+1$ times within $k i (i+1)\leq \kappa i (i+1)$ steps. 
Finally, with probability at most $2c^i+(i+1) \kappa 2 c^i=2c^i (1+i\kappa+\kappa)$, an infinite path starting in $\sinit$ either fails to reach a bottom scc within $i$ steps, or reaches a bottom scc within $i$ steps but fails to subsequently reach all states of this bottom scc at least $i+1$ times within $\kappa i (i+1)$ steps.
Thus, with probability at least $1-2c^i (1+i\kappa+\kappa)\geq 1-2c^i i^3$, an infinite path starting in $\sinit$ visits an end-component within $i$ steps and then all states of this bottom scc at least $i+1$ times within $\kappa i (i+1)\leq 2i^3$ steps. This proves Equation~(\ref{eq:appear}).

Now it is easy to see that
\begin{equation}\label{eq:term}
\bar{P}_{\one,\zero}\quad \geq \quad \left(p_{\min}/\maxE\right)^{\kappa}
\end{equation}
Then the desired conditional probability satisfies:
\begin{align*}
\Prb_{\mathit{MC},\sinit}\left(R_{app}\mid \bar{R}_{\one,\zero}\right) & =  
	\Prb_{\mathit{MC},\sinit}\left(R_{app}\cap \bar{R}_{\one,\zero}\right)\,/\,\bar{P}_{\one,\zero}  \\
	& =  \Prb_{\mathit{MC},\sinit}\left(R_{app}\smallsetminus R_{\one,\zero}\right)\,/\, \bar{P}_{\one,\zero}\\
	& =  \left(\Prb_{\mathit{MC},\sinit}\left(R_{app}\right)-\Prb_{\mathit{MC},\sinit}\left(R_{\one,\zero}\cap R_{app}\right)\right)\,/\,\bar{P}_{\one,\zero} \\
	& \geq  \left(\Prb_{\mathit{MC},\sinit}\left(R_{app}\right)-P_{\one,\zero}\right)\,/\,\bar{P}_{\one,\zero} && \text{ by (\ref{eq:appear})}\\
	& \geq  \left(1-2c^i i^3-\left(1-\bar{P}_{\one,\zero}\right)\right)\,/\,\bar{P}_{\one,\zero} \\
	& =  \left(\bar{P}_{\one,\zero}-2c^i i^3\right)\,/\,\bar{P}_{\one,\zero} \\
	& =  1-\left(2c^i i^3\,/\,\bar{P}_{\one,\zero}\right) \\
	& \geq  1-2c^i i^3\cdot \left(p_{\min}/\maxE\right)^{-\kappa} && \text{ by (\ref{eq:term})}
\end{align*}
\qed\end{proof}

\subsection{Proofs of Lemma \ref{lem:proc}}

\begin{reflemma}{lem:proc}
Assume $(\xu,\xg)$ is an EC in MDP $\mdp$,  $V_\mdp$ the value before the \textsc{Process ECs} procedure in Algorithm~\ref{alg:skel-mec}, and $V_{\mdp'}$ the value after the procedure, then:
\begin{compactenum}
\item For $i \in \{0,1\}$ if $\makeTerminal(s_{(\xu,\xg)},i)$  is called, then $ \forall s\in \xu: \ V_{\mdp}(s)=i$;
\item $\forall s \in \states \setminus \xu: \  V_\mdp(s) = V_{\mdp'}(s);$
\item $\forall s \in \xu:  \  V_\mdp(s) = V_{\mdp'}(s_{(\xu,\xg)});$
\end{compactenum}
\end{reflemma}
\begin{proof}

\textbf{Point 1.} The function $\makeTerminal(s_{(\xu,\xg)},\zero)$ is called in Algorithm~\ref{alg:skel-mec} only if there are no actions
 available in state $s_{(\xu,\xg)}$ and $\xu \cap F = \emptyset$. It follows that the support of all the actions that were enabled in
  states in $\xu$ in MDP $\mdp$ stays in $\xu$, i.e, there is no action leaving the set $\xu$ . As $\xu \cap F = \emptyset$, it
   follows that for all states in $\xu$ the probability to reach the target state is $0$ and therefore 
   $ \forall s\in \xu: \ V_{\mdp}(s)=0$ .

The function $\makeTerminal(s_{(\xu,\xg)},\one)$ is called in Algorithm~\ref{alg:skel-mec} only if $\xu \cap F \neq \emptyset$. A strategy
in MDP $\mdp$ that plays in state $s \in \xu$ all the actions $\enab(s) \cap \xg$ uniformly at random, will visit all the states in $\xu$ almost
surely. It follows that from every state $s \in \xu$ the target set is reached almost surely. It follows that $ \forall s\in \xu: \ V_{\mdp}(s)=1$

\textbf{Points 2 and 3.}
These two points follow directly from Theorem~2 of~\cite{CBGK08}.
\qed\end{proof}

\subsection{Proofs of Lemma \ref{lem:mec-is-ec}}

\begin{reflemma}{lem:mec-is-ec}
Let $\mdp$ and $\mdp^T$ be the MDPs from the construction above and $T$ the corresponding set of explored states.
 Then every MEC $(\xu,\xg)$ in $\mdp^T$ such that $\xu \subseteq T$ is an EC in $\mdp$.
\end{reflemma}
\begin{proof}
Let $\mdp=\langle \states,\sinit,\act, \enab, \Delta\rangle$ and $\mdp^T=\langle T',\sinit,\act', \enab', \Delta'\rangle$ be the two
MDPs, and let $(\xu,\xg)$ be a MEC in $\mdp^T$ such that $\xu \subseteq T$. As $T \subseteq \states$ we have
that the states of $\xu$ are present in MDP $\mdp$. The three other required properties

\begin{enumerate}
\item $\xg \subseteq \bigcup_{s \in \xu} \enab(s)$;
\item if $s \in \xu$, $a \in \xg$, and $\Delta(s,a)(s')> 0 $ then $s' \in \xu$; and
\item for all $s,s' \in \xu$, there exists a path $\omega = s_0 a_0 s_1 a_1 \ldots s_n$ such that $s_0 =s$, $s_n = s'$, and for all
$0 \leq i < n$ we have that $a_i \in \xg$ and $\Delta(s_i,a)(s_{i+1})>0$;
\end{enumerate}

follow easily from the fact that for all states $s \in \xu$ and actions $a \in \xg$ we have: $\enab(s) = \enab'(s)$; $\Delta(s,a) = \Delta'(s,a$); and 
 $(\xu,\xg)$ is an EC in $\mdp^T$.
\qed
\end{proof}

\section{Proof of Theorem~\ref{thm:BRTDP-mec}: Correctness of OBRTDP}
Consider Algorithm~\ref{alg:skeleton} with line 17 being ``\textbf{until} false'', i.e. iterating the outer repeat loop ad infinitum (we prove that in this situation, the original expression $\max_a U(\sinit,a)-\max_a L(\sinit,a)$ from line 17  goes to zero).
As the MDP $\mdp$ may change during computation of the learning algorithm, we denote by $\mdp_i=\langle \states_i,\xi_i,\act_i, \enab_i, \Delta_i\rangle$ the current MDP after $i$ iterations of the outer repeat-until cycle of Algorithm~\ref{alg:skeleton}. 
Each $\mdp_i$ is obtained from $\mdp$ by possibly several collapses of end-components.
Recall that in an MDP $\mdp'$ obtained by collapsing $(\xu,\xg)$, the state $s_{(\xu,\xg)}$ corresponds to the set of states $R$ and, in particular, $V_{\mdp}(s_{(\xu,\xg)},a)=V_{\mdp'}(s,a)$ for all actions $a$ that are enabled both in $s_{(\xu,\xg)}$ and in $s$.
Thus slightly abusing notation we may consider states of each $\mdp_i$ to be sets of states of the original MDP $\mdp$. 
So given a state $\xi\in \states_i$ of $\mdp_i$, we write $s\in \xi$ to say that the state $s\in S$ of $\mdp$ belongs to (or corresponds to) the state $\xi$. 

Note that  $V_{\mdp}(s,a)=V_{\mdp_i}(\xi,a)$ for $s\in \xi\in \states_i$ and all $a\in \enab_i(\xi)$.
Thus, in what follows we use $V(s,a)$ to denote $V_{\mdp}(s,a)$. 
We also denote by $U_i$ and $L_i$ the functions $U$ and $L$ after $i$ iterations.
Observe that $U_i, L_i:\states_i\times \act_i\rightarrow [0,1]$. We extend $U_i$ and $L_i$ to states of $S$ by $U_i(s,a):=U_i(\xi,a)$ and $L_i(s,a):=L_i(\xi,a)$ for $s\in \xi\in \states_i$ and all $a\in \enab_i(\xi)$.
We also use $\enab_i(s)$ to denote $\enab_i(\xi)$ for $s\in \xi\in \states_i$.

\begin{claim}
For all $s\in S$, every $i\in\mathbb N$ and all $a\in \enab_i(s)$, 
\[U_1(s,a)\geq \cdots \geq U_i(s,a)\geq V(s,a) \geq L_i(s,a)\geq \cdots \geq L_1(s,a)\]
\end{claim}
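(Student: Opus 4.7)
The plan is to argue by induction on $i$, following the structure of \lemref{lem:monotone-brtdp-nomec} for the EC-free case, with extra bookkeeping for the iterations in which \otf{} is invoked.

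\textbf{Base case ($i=1$).} After the initialisation block (lines 2--3 of \algref{alg:skeleton}) we have $U_1 \equiv 1$ and $L_1 \equiv 0$ on all non-terminal states, and $U_1(\zero,\cdot)=L_1(\zero,\cdot)=0$, $U_1(\one,\cdot)=L_1(\one,\cdot)=1$. Since $V(s,a)\in [0,1]$ with $V(\zero,\cdot)=0$ and $V(\one,\cdot)=1$, the desired inequality holds trivially for every $s\in S$ and $a\in \enab_1(s)$.

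\textbf{Inductive step.} Assume the claim for $i$ and consider iteration $i{+}1$. I distinguish two cases depending on what happens during the iteration.

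\emph{Case A (no EC is collapsed).} Then $\mdp_{i+1}=\mdp_i$, and only the \textsc{Update} procedure of \algref{alg:RTDP} modifies $U$ and $L$. This is the situation already handled in \lemref{lem:monotone-brtdp-nomec}; the same one-step verification gives $U_{i+1}(s,a)\leq U_i(s,a)$ and $U_{i+1}(s,a)\geq V(s,a)$ for every updated pair $(s,a)$ (and equality $U_{i+1}=U_i$ for all other pairs). Concretely, for each updated $(s,a)$,
\[
U_{i+1}(s,a)=\sum_{s'\in S_i}\Delta_i(s,a)(s')\,U_i(s')
             \geq \sum_{s'\in S_i}\Delta_i(s,a)(s')\,V_{\mdp_i}(s')
             = V_{\mdp_i}(s,a)=V(s,a),
\]
where the last two equalities use that values are preserved under previous collapses (\lemref{lem:proc}, points 2--3). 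Monotonicity $U_{i+1}(s,a)\leq U_i(s,a)$ follows from the IH applied pointwise to $U_i(s')\leq U_{i-1}(s')$ together with the fact that $U_i(s,a)$ itself is either (i) the value of the previous Bellman average of the same form, or (ii) the initial value $1$, which dominates any average of values in $[0,1]$. The argument for $L$ is symmetric.

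\emph{Case B (an EC $(\xu,\xg)$ is collapsed into state $s_{(\xu,\xg)}$).} Let $U_i,L_i$ be the functions at the start of the iteration and $U_{i+1},L_{i+1}$ after the \textsc{Process ECs} loop of \algref{alg:skel-mec}. By \lemref{lem:proc}, $V_{\mdp_{i+1}}$ agrees with $V_{\mdp_i}$ on all non-collapsed states, and $V_{\mdp_{i+1}}(s_{(\xu,\xg)})=V_{\mdp_i}(s)$ for every $s\in \xu$; in particular, through our convention $V(s,a)$ is left unchanged for any $s\in S$ and $a\in \enab_{i+1}(s)$. For an action $a\in \enab_{i+1}(s_{(\xu,\xg)})$, lines 6--7 of \algref{alg:skel-mec} set $U_{i+1}(s_{(\xu,\xg)},a)=U_i(s^*,a)$ where $s^*\in \xu$ is the unique state with $a\in \enab_i(s^*)$, so both the upper-bound inequality $U_{i+1}(s_{(\xu,\xg)},a)\geq V(s^*,a)=V(s,a)$ and monotonicity $U_{i+1}(s_{(\xu,\xg)},a)\leq U_i(s,a)$ (for $s\in \xu$ with $s\in \xi_i\ni s$) hold directly by the IH applied to the old state. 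Finally, if $\makeTerminal(s_{(\xu,\xg)},b)$ fires for $b\in\{0,1\}$, then \lemref{lem:proc}(1) gives $V(s)=b$ for every $s\in \xu$, so setting $U_{i+1}=L_{i+1}=b$ at $s_{(\xu,\xg)}$ is consistent with the claim and preserves monotonicity, because $U_i\geq V=b$ and $L_i\leq V=b$ by IH.

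The argument for $L$ is symmetric in both cases. Combining the two cases closes the induction.

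\textbf{Main obstacle.} The only subtlety is Case~B: once a collapse happens, multiple original states $s\in \xu$ are identified with $s_{(\xu,\xg)}$, so one must verify that every inequality the IH provided about some $s\in \xu$ (for actions originally enabled at that specific $s$) is transferred to the corresponding action at $s_{(\xu,\xg)}$. This is where we rely on the fact (from the remark below \defref{def:limited-info}) that actions are uniquely associated with states, so the copying in lines 6--7 of \algref{alg:skel-mec} exactly matches the IH, and on \lemref{lem:proc} to guarantee that the quantity $V(s,a)$ we are bounding is preserved by the collapse.
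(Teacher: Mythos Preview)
Your proposal is correct and follows essentially the same approach as the paper, which in fact dismisses the claim in two lines (simple induction when no collapse occurs; preservation of $U$, $L$, $V$ via lines 6--7 of \algref{alg:skel-mec} when it does). Your expanded treatment of Case~B, including the separate handling of $\makeTerminal$ via \lemref{lem:proc}(1), is a faithful and more explicit version of what the paper intends; the only slip is the cross-reference---the uniqueness-of-actions remark sits right after the definition of MDPs, not below \defref{def:limited-info}.
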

\begin{proof}
A simple induction applies if end-components are not collapsed in the $i$-th iteration of the outer cycle of Algorithm~\ref{alg:skeleton}. Otherwise, if they are collapsed, then the claim follows from the fact that collapsing preserves the values of $U$, $L$, and $V$ (see lines 6 and 7 of Algorithm~\ref{alg:skel-mec}.
\end{proof}
It follows from the above claim that for all $a\in \bigcap_{i=1}^{\infty} \enab_i(s)$, the limits $\lim_{i\to\infty} U_i(s,a)$ and $\lim_{i\to\infty} L_i(s,a)$ are well defined and finite. As there are only finitely many actions,
$\lim_{i\to\infty}\left(\max_{a\in \enab_i(\sinit)} U_i(\sinit,a)-\max_{a\in \enab_i(\sinit)} L_i(\sinit,a)\right)$ is well defined and finite.
\begin{claim}
$\lim_{i\to\infty}\left(\max_{a\in \enab_i(\sinit)} U_i(\sinit,a)-\max_{a\in \enab_i(\sinit)} L_i(\sinit,a)\right)=0$ almost surely.
\end{claim}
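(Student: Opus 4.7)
The plan is to reduce the claim to the setting of Theorem~\ref{thm:BRTDP-nomec} by exploiting the fact that the on-the-fly collapsing procedure almost surely terminates. First, every successful application of \collapse{} strictly reduces $|S_i|$ (at least one state is eliminated), and $|S|$ is finite, so at most $|S|$ collapses can ever occur along any run. Hence there exists a (random) iteration $N$, almost surely finite, such that $\mdp_i = \mdp_\infty$ for all $i \geq N$. By Lemma~\ref{lem:proc} and by lines 6--7 of Algorithm~\ref{alg:skel-mec}, each collapse preserves the functions $U$, $L$ and the value function $V$ on corresponding states, so it suffices to prove almost-sure convergence of the BRTDP iterations running on the fixed MDP $\mdp_\infty$ starting from the configuration at iteration $N$.

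Following the proof of Theorem~\ref{thm:BRTDP-nomec}, for each state $\xi$ of $\mdp_\infty$ and each action $a \in \enab_\infty(\xi)$ the limits $U(\xi,a) \coloneqq \lim_i U_i(\xi,a)$ and $L(\xi,a) \coloneqq \lim_i L_i(\xi,a)$ exist by monotonicity, and I define $\delta(\xi) \coloneqq \max_{a} U(\xi,a) - \max_{a} L(\xi,a)$. Let $\Sigma_U$ be the set of memoryless strategies on $\mdp_\infty$ that arise as $\sigma_{U_i}$ for infinitely many $i \geq N$, and let $(S_\infty, A_\infty)$ collect all states reachable from $\sinit$ and all actions selected along such runs. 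On a probability-one event every $(\xi,a) \in A_\infty$ is invoked infinitely often, so the BRTDP update rule forces the Bellman-style identity $\delta(\xi) = \sum_{\xi' \in S_\infty} \Delta_\infty(\xi,a)(\xi')\,\delta(\xi')$ for each such $(\xi,a)$. Assume for contradiction that $\delta(\sinit) > 0$; set $\delta^* = \max_{\xi \in S_\infty} \delta(\xi)$ and $D = \{\xi \in S_\infty : \delta(\xi) = \delta^*\}$. Since $\delta(\one) = \delta(\zero) = 0$, we have $\one,\zero \notin D$, and by the Bellman identity together with the maximality of $\delta^*$, for every $\xi \in D$ and every action $a$ selected by some $\sigma \in \Sigma_U$ in $\xi$, the entire support of $\Delta_\infty(\xi,a)$ lies in $D$. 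Therefore $(D, A_D)$, where $A_D$ consists of the $U$-maximising actions from states of $D$, is an end component of $\mdp_\infty$.

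The main obstacle is to convert this structural observation into a contradiction with the termination of collapses, and here I invoke Lemma~\ref{lemma:mec-explore}. Because $D$ is reached with positive probability by some $\sigma \in \Sigma_U$ and, once entered, is never left, each such iteration $i$ has a uniformly positive probability of the Explore phase failing to terminate in $\{\one,\zero\}$ within $3k_i^3$ steps. By Lemma~\ref{lemma:mec-explore} the conditional probability that the multiset $\mathit{Appear}(\omega,k_i)$ is an end component tends to $1$ as $k_i \to \infty$; the identification procedure of Algorithm~\ref{alg:brtdp-mec} together with Lemma~\ref{lem:mec-is-ec} then returns a non-empty set of genuine ECs to collapse. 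Since this happens for infinitely many $i \geq N$ with a probability arbitrarily close to $1$, a further collapse occurs after iteration $N$, contradicting the definition of $N$. Consequently $\delta^* = 0$, and in particular $\max_{a \in \enab_i(\sinit)} U_i(\sinit,a) - \max_{a \in \enab_i(\sinit)} L_i(\sinit,a) \to 0$ almost surely, which proves the claim and hence Theorem~\ref{thm:BRTDP-mec}.
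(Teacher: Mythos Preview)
Your overall strategy mirrors the paper's: let the collapsing stabilise, pass to the limit $\delta$, take $D$ to be the set of maximisers, and derive a contradiction by showing that $D$ would contain an end component that the on-the-fly procedure must eventually collapse. Two points deserve tightening, however.

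First, the way you close the argument is borrowed from the wrong algorithm. Lemma~\ref{lemma:mec-explore} and the threshold $3k_i^3$ belong to ODQL, where ECs are identified statistically via $\mathit{Appear}(\omega,\ell_i)$. OBRTDP instead triggers \textsc{On-the-fly-EC} as soon as the path length exceeds $k_i$ and then runs Algorithm~\ref{alg:brtdp-mec}, which deterministically computes MECs of $\mdp^T$; Lemma~\ref{lemma:mec-explore} plays no role there. The paper's argument is more direct: for any EC $(D',G)$ contained in $D$ one has $U_i(\xi,a)=1$ for every $(\xi,a)\in D'\times G$ and every $i$ (an easy invariant, since actions of $G$ stay inside $D'$). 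Hence these actions are always in the $U_i$-argmax, the greedy strategy enters $D'$ and with positive probability never reaches $\{\one,\zero\}$, the bound $k_i$ is exceeded, and Algorithm~\ref{alg:brtdp-mec} finds and collapses an EC containing $D'$ --- contradicting stabilisation. You should replace the appeal to Lemma~\ref{lemma:mec-explore} by this argument; as written, your ``once entered, is never left'' step is also unjustified without the $U_i=1$ observation, since the greedy strategy could in principle select $U_i$-maximising actions that leave $D$.

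Second, $(D,A_D)$ need not itself be an end component: closure under $A_D$ follows from your Bellman identity, but strong connectivity does not. What you actually obtain is that $D$ (with actions from $\act'_\infty$) is closed and hence contains an EC, which is all the contradiction needs. The paper phrases this dually, first arguing that $D$ cannot contain an EC and then exhibiting a state $\xi\in D$ all of whose $\act'_\infty$-actions leave $D$, giving $\delta(\xi)<\delta^*$.
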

\begin{proof}
Given $s\in S$, let $a_i(s)\in \enab_i(s)$ be an action maximising $U_i(s,a)$ over $\enab_i(s)$. Define $\delta_i(s):=U_i(s,a_i(s))-L_i(s,a_i(s))$.
Since $\delta_i(s)\geq \max_{a\in \enab_i(s)} U_i(s,a)-\max_{a\in \enab_i(s)} L_i(s,a)$ (expression of line 17 in the original Algorithm~\ref{alg:skeleton}),
it is sufficient to prove that $\lim_{i\to\infty}\delta_i(\sinit)=0$ almost surely.

As the function \otf{} can collapse end-components only finitely many times, every computation of the learning algorithm eventually stays with a fixed MDP $\mdp'=\langle \states',\sinit',\act', \enab', \Delta'\rangle$, i.e., almost surely $\mdp'=\mdp_k=\mdp_{k+1}=\cdots$ for some $k$. Note that $\mdp'$ is obtained by a series of collapses of end-components of $\mdp$. We call the moment from which the MDP does not change the {\em fixing point}. 

Let us denote by $\states'$ the set of states of $\mdp'$.
Note that for every $\xi\in \states'$ and for all $s,s'\in \xi$ we have $\delta(s)=\delta(s')$ since $\delta_i(s)=\delta_i(s')$ for $i$ greater than the fixing point. We denote by $\delta(\xi)$ the value $\delta(s)$ for some (all) $s\in \xi$.
Let us denote by $\act'$ the set of actions of $\mdp'$. For every $\xi\in \states'$ we denote by $\enab'(\xi)$ the set of actions enabled in the state $\xi$ of $\mdp'$. Also, the initial state, $\bar{\xi}$, of $\mdp'$ is the only state of $\mdp'$ that contains $\sinit$.

Let $\Sigma_U$ be the set of all memoryless strategies in $\mdp'$ which occur as $\sigma_{U_i}$ for infinitely many $i$ {\em after} the fixing point. Each $\sigma\in \Sigma_U$ induces a chain with reachable state space $\states'_{\sigma}$ and uses actions $\act'_{\sigma}$. Note that under $\sigma\in \Sigma_U$, all states of $\states'_{\sigma}$ will be almost surely visited infinitely often if infinitely many simulations are run. Similarly, all actions of $\act'_{\sigma}$ will be used almost surely infinitely many times. Let $\states'_{\infty}=\bigcup_{\sigma\in\Sigma_U} \states'_i$ and let $\act'_{\infty}=\bigcup_{\sigma\in \Sigma_U} \act'_{\sigma}$. During almost all computations of the learning algorithm, all states of $\states'_{\infty}$ are visited infinitely often, and all actions of $\act'_{\infty}$ are used infinitely often. 

Let $\delta=\max_{\xi\in \states'_{\infty}}\delta(\xi)$ and $D=\{\xi\in \states'_{\infty}\mid \delta(\xi)=\delta\}$. 
To obtain a contradiction, assume that $\delta>0$, which implies that $0,1\not \in D$.
We claim that $D$ cannot contain a subset $D'$ forming an end-component with any set of actions from $A'_{\infty}$. Indeed, assume the opposite is true, and $(D',G)$ is such an end component in $\mdp'$.
At least one of the strategies $\sigma\in \Sigma_U$ visits a state of $D'$ infinitely many times. As all state-action pairs $(\xi,a)\in D'\times G(\xi)$ satisfy $U_i(\xi,a)=1$ for all $i$ and $k_i\geq |S|$, almost surely a simulation of $\sigma$ of length $k_i$ visits the whole component $(D',G)$. %
 This means that \otf{} is called while $\app{0}$ contains the component $(D',G)$, which in turn means that $(D',G)$ gets collapsed, a contradiction with the assumption that the learning procedure stays fixed on $\mdp'$ after the fixing point. 

By definition of an end-component we get
$$\exists \xi\in D:\forall a\in \enab'(\xi)\cap \act'_{\infty}:supp(\Delta(\xi,a))\not\subseteq D$$
because otherwise $D$ will form a ``closed'' component with some actions of $\act'_{\infty}$, and hence would contain an end-component.
Thus for every $a\in \enab'(\xi)\cap \act'_{\infty}$ we have $\xi_a\notin D$ with $\Delta(\xi,a)(\xi_a)>0$. %
Since $\xi_a\notin D$ we have $\delta(\xi_a)<\delta$. Now for every $a\in\enab'(\xi)\cap \act'_{\infty}$ we have
\begin{align*}
\delta(\xi)&=\sum_{\xi'\in \states'_{\infty}}\Delta(\xi,a)(\xi')\cdot \delta(\xi')\\
&=\sum_{\xi'\in \states'_{\infty},\xi'\neq \xi_a}\Delta(\xi,a)(\xi')\cdot \delta(\xi)+\Delta(\xi,a)(\xi_a)\cdot \delta(\xi_a)\\
&<\sum_{\xi'\in \states'_{\infty},\xi'\neq \xi_a}\Delta(\xi,a)(\xi')\cdot \delta+\Delta(\xi,a)(\xi_a)\cdot \delta\\
&=\delta
\end{align*}
a contradiction with $\xi\in D$.

As a corollary, Algorithm~\ref{alg:skeleton} with \textsc{Update} defined in Algorithm~\ref{alg:skel-mec} and extended with calls to \otf{} almost surely terminates for any $\varepsilon>0$. Further, $U_i\geq V\geq L_i$ pointwise and invariantly for every $i$ by the first claim, the returned result is correct.
\qed
\end{proof}

\section{Proof of Theorem \ref{thm:odql}: Correctness of ODQL}

We define a sequence of random variables $(X_i)_{i=1}^\infty$ on executions of ODQL. The value of $X_i$ is $1$ if there is a call to \textsc{Collapse} taking place in the execution after the $i$-th \textsc{Explore} phase, and $0$ otherwise. 

\begin{lemma}
For any $\epsilon_3,\epsilon_4>0$, we can find $i$ such that with probability $1-\epsilon_3$ after $i$ \textsc{Explore} phases the probability that a further collapse happens is less than $\epsilon_4$. 
\end{lemma}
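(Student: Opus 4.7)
My plan is to reduce the statement to the elementary observation that only finitely many collapses can ever occur during an execution of ODQL, and then convert the resulting unconditional tail bound into the conditional form required by the lemma via Markov's inequality.

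First I would establish a ``finite budget'' for collapses. Every invocation of \textsc{Collapse} in Algorithm~\ref{alg:skel-mec} either merges a region $\xu$ containing at least two distinct states of the current MDP into a single new state, or uses \makeTerminal\ to redirect a state to $\zero$ or $\one$. Each of these operations strictly decreases the potential function $\phi(\mdp_i):=|\states_i| + |\states_i \setminus \{\zero,\one\}|$, which takes non-negative integer values bounded initially by $2|S|$. Consequently the total number of collapses $N := \sum_{j\geq 1} X_j$ is surely at most $2|S|$, and in particular $N < \infty$ almost surely.

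Next, letting $\mathcal F_i$ denote the $\sigma$-algebra generated by the first $i$ \textsc{Explore} phases and their \textsc{Update}/\textsc{Collapse} consequences, I would consider the tail event $A_i := \{\exists\, j > i:\ X_j = 1\}$. The sequence $(A_i)_{i\geq 1}$ is decreasing with $\bigcap_i A_i = \{N = \infty\}$, which has probability $0$ by the previous paragraph; continuity of measure from above therefore gives $\Pr[A_i] \downarrow 0$.

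Finally, the conditional probability that a further collapse happens given the history up to phase $i$ is exactly $Y_i := \Pr[A_i \mid \mathcal F_i]$, and by the tower rule $\mathbb E[Y_i] = \Pr[A_i]$. Markov's inequality then yields
\[
\Pr[Y_i \geq \epsilon_4]\;\leq\;\frac{\mathbb E[Y_i]}{\epsilon_4}\;=\;\frac{\Pr[A_i]}{\epsilon_4},
\]
so choosing $i$ large enough that $\Pr[A_i] < \epsilon_3\,\epsilon_4$ gives $\Pr[Y_i < \epsilon_4] > 1 - \epsilon_3$, which is the lemma. The only subtle step is the first one: one must check that \textsc{Collapse} really is a ``progress'' operation which cannot be re-applied to the same region. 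This follows from the construction, since once a set of states is merged (or rerouted to a terminal state) the original states are no longer present in the current MDP and so can never again be identified as forming a new EC in a subsequent phase.
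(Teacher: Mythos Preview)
Your argument follows the same route as the paper's: establish a deterministic finite bound on the total number of collapses, deduce that the tail probability $\Pr[A_i]\to 0$, and then apply Markov's inequality to the conditional probability $Y_i=\Pr[A_i\mid\mathcal F_i]$ to obtain the two-parameter conclusion. The paper is terser---it defines $X_i$ directly as the indicator that some collapse occurs after phase $i$, notes this sequence is non-increasing with pointwise limit $0$, hence $\mathbb E[X_i]\to 0$, and invokes Markov---but the content is identical.

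One small technical point: your potential $\phi(\mdp_i)=|\states_i|+|\states_i\setminus\{\zero,\one\}|$ need not strictly decrease at every call to \textsc{Collapse}. In ODQL the identified tuple $\app{\ell_i}$ may well consist of a single state $s$ together with a proper subset $B\subsetneq\enab(s)$ of its actions (for instance when $(s,a)$ is a self-loop taken many times while $s$ retains other enabled actions). Collapsing $(\{s\},B)$ then removes the actions in $B$ but leaves the state count unchanged, and if $s\notin F$ and $\enab(s)\setminus B\neq\emptyset$ then \makeTerminal\ is not invoked either, so $\phi$ is unchanged. The repair is immediate: take instead the potential $|\states_i|+|\act_i|$ (or just $|\act_i|$), which strictly decreases because $B\neq\emptyset$ in every collapse. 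This is effectively what the paper does when it bounds the number of collapses by $|\states|\cdot\maxE$ on the grounds that each collapse removes at least one state or action.
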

\begin{proof}
On each execution, the sequence $(X_i)_{i=1}^\infty$ is non-increasing. Moreover, \textsc{Collapse} can happen at most $\vert \states \vert \cdot \maxE $ times in each execution, because each invocation of \textsc{Collapse} reduces the number of states or actions. Since there are finitely many collapses on each execution, for every execution there is $i$ where $X_i$ is $0$. Thus also $\lim_{i\to\infty}\mathbb E[X_i]=0$ and we conclude by Markov inequality. 
\qed\end{proof}

We use random variable $M'$ to denote the MDP after $i$ \textsc{Explore}'s. The probability that $V$ in $\mdp$ is the same as $V$ on $M'$ (extended to $\mdp$ by ``decollapsing'') is at least $1-\epsilon_3=\epsilon_4-\epsilon_5$, where $\epsilon_5$ is the probability that at least one of the collapses merges a non-EC. By Lemma \ref{lemma:mec-explore}, we can bound the probability of erroneous collapses by the choice of $\ell_i$, we obtain $\epsilon_5<\delta/2$. %

Furthermore, when along an execution all calls to \textsc{Collapse} only collapse ECs, due to Lemma \ref{lem:proc} we can use the analysis from Theorem~\ref{thm:dql} to obtain that for any $k$, after $k$ updates the probability that the bounds $U$ and $L$ are correct is the same as in the case of the MEC-free DQL, denote it $\delta'$ and note $\delta'<\delta$, where $\delta$ is the error tolerance of DQL.%

We now show what happens with the remaining ECs:
\begin{lemma}\label{lem:EC-dichotomy}
For every $\epsilon_6,\epsilon_7>0$ there is $j$ such that with probability $1-\epsilon_6$ after $j+i$ \textsc{Explore} phases, the probability that the following holds is at least $1-\epsilon_7$: for each EC $E$ in $M'$:
\begin{enumerate}
 \item either none of the states of $E$ is ever visited after the $j+i$th \textsc{Explore},
 \item or all states of $E$ are visited infinitely often a.s. and $V(s)=1$ for each state $s$ of $E$.
\end{enumerate}
\end{lemma}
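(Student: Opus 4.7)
The plan is to pick $j$ simultaneously controlling two failure events, so that after $j{+}i$ \textsc{Explore} phases the MDP $M'$ is stable and no \textsc{Collapse} invocation is spurious. For stability I would invoke the preceding lemma with the $\epsilon_6$ bound to choose $j_0$ with $\Pr[X_{j_0+i}=1]<\epsilon_6/2$. For correctness of identification I would use Lemma~\ref{lemma:mec-explore} together with the choice $\ell_r$ ensuring $2 c^{\ell_r} \ell_r^3 (p_{\min}/\maxE)^{-\kappa}\leq \delta/2^{r+1}$; summing the tail from index $j$ on, making $j\geq j_0$ large enough yields that with conditional probability $\geq 1-\epsilon_6$ no further collapse occurs \emph{and} every future application of the EC-identification rule reports a genuine EC. Conditioning on this compound event fixes $M'$ forever and lets me argue about a fixed finite set of ECs of $M'$.

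On this event I would prove the dichotomy EC-by-EC by contradiction. Fix an EC $E$ of $M'$ that is visited infinitely often, and assume toward a contradiction that $V(s^*)<1$ for some $s^*\in E$. For each subsequent \textsc{Explore} phase $r$, Lemma~\ref{lemma:mec-explore} guarantees that, conditioned on the phase not terminating within $3\ell_r^3$ steps, $\app{\ell_r}$ is an EC with probability at least $1- 2c^{\ell_r}\ell_r^3(p_{\min}/\maxE)^{-\kappa}$; by our choice of $j$ this EC would then be collapsed, contradicting stability of $M'$. Hence every visit to $E$ after step $j{+}i$ must, almost surely, exit $E$ within $k_r$ steps. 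On the other hand, by the invariant $U\geq V$ and the monotone decrease of $U$, the $U$-values of staying actions in $E$ remain at least $\max_{t\in E}V(t)$, whereas a leaving action $a$ at state $s\in E$ has $U(s,a)$ driven toward $\sum_{s'}\Delta(s,a)(s') V(s')$ by the PAC-style updates from the proof of Theorem~\ref{thm:dql}. Thus, once sufficiently many updates have been applied and given $V(s^*)<1$, the argmax defining $\sigma_U$ at some $s\in E$ strictly prefers a staying action, so with positive probability the algorithm remains in $E$ for more than $k_r$ steps infinitely often, contradicting the forced-exit property. Therefore $V(s)=1$ for every $s\in E$.

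To finish clause (2), I would use the EC structure of $E$: under $\sigma_U$ the action distribution at each $s\in E$ is uniform over the argmax of $U(s,\cdot)$, and once $V\equiv 1$ on $E$ the $U$-values on $E$ stabilise at $1$ for all enabled actions; consequently the induced chain on $E$ is irreducible on the support, and a Borel–Cantelli argument (together with infinitely many returns to $E$) shows that every state of $E$ is visited infinitely often almost surely. The hardest step is the one linking ``forced exit within $k_r$'' to the value function: one must make precise that the PAC bounds on $U$ from Theorem~\ref{thm:dql}, originally stated for MEC-free MDPs, still apply on the stable MDP $M'$, and that they are strong enough to force a strict ordering of argmax actions whenever $V$ is nonconstant on $E$. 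This requires carefully tracking the time-asymptotic behaviour of $U$ inside ECs of $M'$ rather than of the original MDP, exploiting that Lemma~\ref{lem:proc} preserves values under collapsing.
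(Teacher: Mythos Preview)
Your argument has a genuine gap: you try to derive $V=1$ on an infinitely-visited EC $E$ by contradiction, using that $U$ on leaving actions is ``driven toward $\sum_{s'}\Delta(s,a)(s')V(s')$ by the PAC-style updates from the proof of Theorem~\ref{thm:dql}''. But those PAC bounds are established only under Assumption-EC (or the weaker Assumption~\ref{ass:correctMDP}), and $M'$ is precisely an MDP that still contains ECs; the convergence of $U$ to anything in particular is \emph{not} available here, so the strict-argmax comparison you need between staying and leaving actions cannot be justified this way. Your clause-(2) argument compounds the problem: the claim that ``once $V\equiv 1$ on $E$ the $U$-values on $E$ stabilise at $1$ for all enabled actions'' is false---leaving actions may well have $U<1$ even when $V=1$ on $E$.

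The paper avoids all of this with a much simpler direct observation (its Lemma on EC-values): for any EC $E$ of $M'$ and any staying action $a$ at $s\in E$, every successor lies in $E$ and has $U=1$, so the DQL update condition $U(s,a)-\accum^U(s,a)/m\geq 2\bar\epsilon$ never fires and $U(s,a)$ remains at its initial value $1$ forever. Hence $U(s)=1$ for all $s\in E$ at all times, independently of $V$. This immediately yields both parts: since staying actions achieve the maximum $U$-value, $\sigma_U$ plays them with positive (uniform) probability, so every state of $E$ is visited infinitely often whenever one is. And on the stability event every \textsc{Explore} phase terminates in $\{0,1\}$; since $\sigma_U$ only plays actions with $U=1$, the simulation never enters the $U=0$ region, so it must reach $1$, giving $V(s)=1$ for $s\in E$. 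You should replace your contradiction scaffold with this invariant---it is a one-line induction on updates and sidesteps the need for any convergence result on $M'$.
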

\begin{proof}
We either visit an EC only finitely often or not. In the former case, we can bound with arbitrarily high probability when the last visit happens. In the latter case, we first show we visit all states infinitely often if we visit at least one of them infinitely often.
\begin{lemma}\label{lem:EC-values}
For every EC $E$, at any moment either $U(s)=1$ for each $s\in E$, or $U(s)=0$ for each $s\in E$. 

Moreover, the latter happens only due to an invocation of \textsc{MakeTerminal} or for state $\zero$ (if there is any).
\end{lemma}
\begin{proof}
Since there are always actions leading only to $E$, $U(s)=1$ is an invariant as long as there are any actions. The actions can only be removed by \textsc{MakeTerminal}. The only condition when $U(s)\neq 1$ at the beginning of ODQL is for the state $\zero$.
\qed\end{proof}
Due to Lemma \ref{lem:EC-values} we know that when we enter an EC, we always play a uniform strategy on the actions inside the EC and those that leave the EC whenever their $U$ is also $1$. Due to uniformity of the strategy and bounded branching, we visit each state of the EC infinitely often. 

Since we never collapse, we never get stuck in any EC. Hence every action from every visited EC keeps $U$ equal $1$.

Consider the set $\Sigma_U$ of strategies played infinitely often. Then for every EC visited infinitely often the only reachable BSCC's in the Markov chains induced by strategies of $\Sigma_U$ are with $U$ equal $1$, namely consist of the vertex $\one$ (we never get stuck in any other EC than $\one$ or $\zero$). Therefore, the value achieved under any strategy of $\Sigma_U$ is $1$, hence $V$ is $1$ for these EC, finishing the proof of the lemma.
\qed\end{proof}

From Lemma \ref{lem:EC-dichotomy}, it follows that the only MECs visited are with $V$ equal $1$.
Consider an r.v. assigning to each execution an MDP $M''$ by taking $M'$ (or, more precisely, taking an MDP which the random variable $M'$ has as its value)
and replacing 
each MEC $E$ that is never more visited after $j+i$ \textsc{Explore}'s and has value $v$ by a fresh state $s_E$ with $\enab(s_E)=\{a\}$ and $\Delta(s_E,a)(\one) = v$, $\Delta(s_E,a)(\zero) = 1-v$. 

All the ECs of MDP $M''$ have $V$ equal $1$ and the states of $M''$ have the same $V$ as of $M'$ with probability $1-\epsilon_6-\epsilon_7$. Denote $\mathit{Good}$ the set of executions where $V$ on $M''$ (extended to $\mdp$ by ``decolapsing'') is the same as $V$ on $\mdp$. Their probability $\mathcal P(\mathit{Good})$ is at least $1-\sum_{k=3}^7\epsilon_k$, where $\epsilon_3+\epsilon_4+\epsilon_6+\epsilon_7$ can be made arbitrarily low, hence we can consider $\mathcal P(\mathit{Good})=1-\epsilon_5$, i.e. equal to the probability that only ECs were collapsed.

Define DQL' as the EC-free DQL where input is an MDP where each EC (except for state $\zero$) has $V$ equal $1$. This induces a mapping $\mathit{coll}$ between $\mathit{Good}$ executions of ODQL after $j+i$ calls to \textsc{Explore} on $M'$ and runs of the corresponding executions of DQL' on $M''$. Intuitively, as if we restarted the DQL' on the ``almost'' \emph{collapsed} MDP $M''$ with bounds  initialized to ``conservatively improved'' $U_{j+i}$ (and $L_{j+i}$). The mapping is a measure preserving bijection between $\mathit{Good}$ and 
$\mathit{coll}(\mathit{Good})$.

We now show DQL' is a correct extension of DQL on the ``almost'' collapsed MDPs.
\begin{lemma}
DQL' guarantees the same error tolerance $\delta$ as DQL.
\end{lemma}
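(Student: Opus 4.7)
The plan is to reduce the correctness of DQL' to Theorem~\ref{thm:PAC-1MEC}, which already establishes DQL's correctness for the larger class of MDPs satisfying Assumption~\ref{ass:correctMDP}, not merely MEC-free ones. A close reading of the chain Lemma~\ref{lem:a1Holds}--Lemma~\ref{lem:upper} shows that every probabilistic step (Hoeffding concentration for the estimator, union bounds over the $\mathcal{O}(|S||A|/\bar\eps)$ successful and attempted updates, the "outside-$K_t$'' counting argument) is entirely oblivious to the EC structure of the underlying MDP. The only structural appeals are (i) the existence of a unique fixpoint of the two Bellman systems used for $V^{\greStraa_t}_{M'}$ and for $U_t$ inside Lemma~\ref{lem:matrix}, and (ii) almost-sure termination of the greedy strategy $\greStraa_t$ in the truncated MDP $M'$. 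Both are provided by Assumption~\ref{ass:correctMDP}, as explicitly noted in the proof of Lemma~\ref{lem:matrix}.

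The main step is therefore to verify that $M''$ satisfies Assumption~\ref{ass:correctMDP}. The ECs of $M''$ split into two families. The ECs that are never revisited after the $(j{+}i)$-th \textsc{Explore} were replaced by a fresh singleton $s_E$ whose sole action distributes directly between $\one$ and $\zero$; such singletons are not in any EC of $M''$. The remaining ECs of $M''$ are precisely those visited infinitely often by some strategy in $\Sigma_U$, and by Lemma~\ref{lem:EC-dichotomy} every state of each such EC has $V=1$. Hence Assumption~\ref{ass:correctMDP} holds on $M''$, and Theorem~\ref{thm:PAC-1MEC} can be invoked to give error tolerance~$\delta$.

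Two minor bookkeeping issues need to be cleared up before the invocation is legitimate. First, DQL' starts not from the trivial initialization $U\equiv 1$, $L\equiv 0$, but from the values $U_{j+i},L_{j+i}$ inherited at the end of step $j{+}i$ of ODQL. These are still valid upper and lower bounds on $V$ by the monotonicity invariant that ODQL maintains in the same way as OBRTDP in Theorem~\ref{thm:BRTDP-mec}, so they can only tighten the starting estimates and reduce the counts in Lemmas~\ref{lem:suc_updates}--\ref{lem:att_updates}. Second, whenever a simulation enters a collapsed EC, the operational interpretation prescribed in the ``Interpretation of collapsing'' paragraph is to play uniformly among the internal actions until the chosen exit state is reached; since the EC has $V=1$ and steps spent inside it do not count toward the explored path length, neither the accumulators $\accum^U,\accum^L$ nor the counter $c$ are biased, and the exit is reached almost surely in finite expected time.

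The expected main obstacle is precisely the second bookkeeping issue: making sure that the ``simulated collapse'' in Algorithm~\ref{alg:skel-mec} coincides, step by step in terms of updates to $U,L,\accum^U,\accum^L,c$, with a genuine run of DQL on the structurally collapsed MDP $M''$. Once this correspondence is spelled out (mapping each ODQL trajectory in $\mathit{Good}$ via $\mathit{coll}$ to a DQL trajectory in $\mathit{coll}(\mathit{Good})$ with equal measure and identical update sequences), the analysis of Theorem~\ref{thm:PAC-1MEC} transfers verbatim to DQL', yielding the claimed error tolerance~$\delta$.
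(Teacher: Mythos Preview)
Your proposal is correct and takes the same approach as the paper: the paper's entire proof is to observe that $M''$ satisfies Assumption~\ref{ass:correctMDP} and then invoke Theorem~\ref{thm:PAC-1MEC}. Your first bookkeeping point (conservative initialization $U_{j+i},L_{j+i}$) is not part of this lemma in the paper but is exactly the content of the \emph{next} lemma there, so you have folded two of the paper's lemmas into one argument.
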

\begin{proof}
Observe that $M''$ satisfies the Assumption~\ref{ass:correctMDP}. We conclude by Theorem \ref{thm:PAC-1MEC} of Appendix \ref{app:dql}.
\qed\end{proof}

We now show DQL' is a correct extension of DQL with ``conservatively improved'' initialization.
\begin{lemma}
DQL' starting with any conservative initializations of $U$ and $L$ (i.e. $0\leq L\leq V\leq U\leq 1$) guarantees the same error tolerance $\delta$ for the approximation.
\end{lemma}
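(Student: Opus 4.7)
The plan is to show that every step in the analysis of DQL (culminating in Theorem~\ref{thm:PAC-1MEC}, with supporting Lemmas~\ref{lem:suc_updates}--\ref{lem:matrix} in Appendix~\ref{app:dql}) goes through unchanged—or only improves—when we replace the initial values $U_1(s,a)=1$, $L_1(s,a)=0$ by arbitrary conservative initial values $U_1,L_1$ with $0\leq L_1\leq V\leq U_1\leq 1$. Thus the resulting algorithm still satisfies the same PAC guarantee with error tolerance $\delta$.

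First I would verify the counting lemmas. Lemma~\ref{lem:suc_updates} argues that each successful update strictly decreases $U(s,a)$ by at least $\bar\eps$, so the number of successful upper-bound updates for a fixed $(s,a)$ is bounded by $U_1(s,a)/\bar\eps\leq 1/\bar\eps$, and the symmetric bound holds for the lower bound. Under tighter initial values this bound can only get smaller, so Lemma~\ref{lem:att_updates} (attempted updates are bounded via successful ones) still holds with the same constants. In particular, the choice of $m$ depending on $|S||A|/\bar\eps$ and $\delta$ remains valid.

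Next I would redo Lemma~\ref{lem:uppBound} (the invariant $U_t(s,a)\geq V^\ast(s,a)$): the base case now uses the assumption $U_1\geq V$ directly instead of the trivial upper bound $1$, and the inductive step copies verbatim since it only appeals to the freshly-computed estimator $\bar\eps$-exceeding $V^\ast(s,a)$ with the prescribed probability. The symmetric statement for $L$ is analogous. Thus with probability at least $1-\delta/3$, the bounds remain correct throughout the execution. The definitions of $K_t$ and Lemma~\ref{lem:a1Holds} (Hoeffding-based update-success criterion), Lemma~\ref{lem:fail->close}, and Lemma~\ref{lem:outsideClose} are purely structural—they rely on monotonicity of $U$ and on the counting bounds—and transfer without change.

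Finally, for the convergence part (Lemma~\ref{lem:upper}, Lemma~\ref{lem:matrix}), the only role of the initial values is to ensure $U_t$ is pointwise an upper bound on $V^\ast$ and $L_t$ a lower bound, and that the perturbed Bellman fixed-point system on $K_t$ admits a unique solution to which $U_t$ and $L_t$ converge under the argument of Lemma~\ref{lem:perturbation}; neither ingredient uses that $U_1\equiv 1$ or $L_1\equiv 0$. Hence the same $\eps/2$-optimal greedy-strategy guarantee on all but $\mathcal O((\zeta T/\eps_2)\ln(1/\delta))$ steps continues to hold, and by the same dual argument for the lower bounds the gap $U(\sinit)-L(\sinit)\leq\eps$ is reached with probability at least $1-\delta$. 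The mild obstacle is bookkeeping: convincing oneself that every constant hidden in the $\mathcal O(\cdot)$ of Theorem~\ref{thm:PAC-1MEC} depends only on $|S|$, $|A|$, $p_{\min}$, $\eps$, $\delta$ and not on the specific starting values, so that the quoted error tolerance $\delta$ can really be reused verbatim here.
\qed
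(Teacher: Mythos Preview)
Your proposal is correct and follows essentially the same approach as the paper's proof, which argues briefly that (i) the unique Bellman fixpoint guarantees convergence to the correct value, and (ii) tighter conservative starting bounds can only reduce the number of update attempts and hence the accumulated failure probability. Your write-up is simply a more explicit, lemma-by-lemma unfolding of exactly these two points.
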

\begin{proof}
Since on every MDP where the only EC with $V$ different from $1$ is the state $0$ there is a unique fixpoint of the Bellman reachability equations, the value to which DQL' converges also is the lowest fixpoint, therefore we obtain as precise approximation. Moreover, as $U$ and $L$ can only be closer to $V$, the maximal possible number of changes of values $U$ and $L$ as stated in the proof of Theorem \ref{thm:dql} can only be smaller and thus the error tolerance can only be smaller. 
\qed\end{proof}

As a result of the two lemmata, with probability $\mathcal P(\mathit{Good})-\delta_2$ where $\delta_2<\delta$ %
the execution of DQL' is in $\mathit{coll}(\mathit{Good})$ and returns correct $U,L$ on $M''$, which are, moreover, the same on $M'$. 

Therefore, we return correct approximation with probability $\mathcal P(\mathit{Good}) -\delta_1-\delta_2$, where $\mathcal P(\mathit{Good})>1-\delta$.
Since $\delta_1,\delta_2$ are at most the error tolerance of the underlying DQL
the overall error is less than $\delta$ %
if we run the DQL with error tolerance $\delta/4$ i.e. with $\bar\epsilon=\frac{\eps \cdot (p_{\min} / \max_{s \in \states} \enab(s))^{\vert\states\vert}}{12 \vert\states\vert}$ and $m= \frac{\ln(24 \vert\states\vert \vert\act\vert ( 1 + \frac{\vert\states\vert \vert\act\vert}{\bar\eps})/\delta)}{2 {\bar\eps}^2}$.

\end{document}